\def\BibTeX{{\rm B\kern-.05em{\sc i\kern-.025em b}\kern-.08emT\kern-.1667em\lower.7ex\hbox{E}\kern-.125emX}}
\newif\ifall
\newif\ifsq     
\newcommand{\vspaceSQ}[1]{\ifsq\vspace{#1}\fi}
\newcommand{\subparagraph}{}
\titlespacing*{\section}{0pt}{6pt}{2pt}
\titlespacing*{\subsection}{0pt}{5pt}{1pt}
\titlespacing*{\subsubsection}{0pt}{5pt}{1pt}
\newcommand{\subparagraph}{}
\titlespacing*{\section}{0pt}{9pt}{5pt}
\titlespacing*{\subsection}{0pt}{9pt}{4pt}
\titlespacing*{\subsubsection}{0pt}{9pt}{4pt}
\crefname{section}{§}{§§}
\Crefname{section}{§}{§§}
\definecolor{darkgrey}{RGB}{70,70,70}
\definecolor{lightgrey}{RGB}{200,200,200}
\bfseries\color{black!400!black},
\bfseries\color{black!400!black},
\newcommand{\maciej}[1]{\textcolor{blue}{[Maciej: #1]}}
\newcommand{\macb}[1]{\textbf{{#1}}}
\newcommand\rwh[1]{%
\savestack{\tmpbox}{\stretchto{%
  \scaleto{%
        \scalerel*[\widthof{\ensuremath{#1}}]{\kern-.6pt\bigwedge\kern-.6pt}%
                  {\rule[-\textheight/2]{1ex}{\textheight}}
                              }{\textheight}%
}{0.5ex}}%
\stackon[1pt]{#1}{\tmpbox}%
}
\renewcommand{\epsilon}{\ensuremath\varepsilon}
\renewcommand{\phi}{\ensuremath{\varphi}}
\newcommand{\removelatexerror}{\let\@latex@error\@gobble}
\DeclareMathOperator{\rotl}{rotl}
\newtheorem{theorem}{Theorem}
\newcommand{\BigO}{\mathcal{O}}
\DeclareMathOperator{\rank}{rank}
\definecolor{darkgrey}{RGB}{70,70,70}
\definecolor{lightgrey}{RGB}{200,200,200}
\DeclareMathOperator*{\argmin}{arg\,min}
\newif\iftr
\newif\ifsc
\definecolor{lyellow}{RGB}{255,255,0}
\definecolor{llyellow}{RGB}{250, 250, 0}
\definecolor{lgreen}{RGB}{144,238,144}
\let\oldhl\hl
\renewcommand{\hl}[1]{{\sethlcolor{llyellow}\oldhl{#1}}}
\begin{document}


\title{FatPaths: Routing in Supercomputers \\ and Data Centers when Shortest Paths Fall Short}

\author{\IEEEauthorblockN{Maciej Besta$^1$, Marcel Schneider$^1$, Karolina Cynk$^2$, Marek Konieczny$^2$,\\ Erik Henriksson$^1$, Salvatore Di Girolamo$^1$, Ankit Singla$^1$, Torsten Hoefler$^1$}
\IEEEauthorblockA{
\textit{$^1$Department of Computer Science, ETH Zurich}\\
\textit{$^2$Faculty of Computer Science, Electronics and Telecommunications, AGH-UST
\iftr
\vspace{0.5em}
\fi
} \\
}
}

\maketitle

\thispagestyle{plain}
\pagestyle{plain}

\begin{abstract}
We introduce FatPaths: a simple, generic, and robust routing architecture that
enables state-of-the-art low-diameter topologies such as Slim Fly to achieve
unprecedented performance. FatPaths targets Ethernet stacks in both HPC
supercomputers as well as cloud data centers and clusters.
FatPaths exposes and exploits the rich (``fat'') diversity of both minimal and
non-minimal paths for high-performance multi-pathing. Moreover, FatPaths
features a redesigned ``purified'' transport layer that removes virtually all
TCP performance issues (e.g., the slow start), and uses flowlet switching, a
technique used to prevent packet reordering in TCP networks, to enable very
simple and effective load balancing.
Our design enables recent low-diameter topologies to outperform powerful Clos
designs, achieving 15\% higher net throughput at 2$\times$ lower latency for
comparable cost.
FatPaths will significantly accelerate Ethernet clusters that form more than
50\% of the Top500 list and it may become a standard routing scheme for modern
topologies.
\end{abstract}

\iftr
\vspace{1em}
\begin{IEEEkeywords}
Interconnection architectures, Network architectures, Routing protocols,
Transport protocols, Network performance evaluation, Network structure, Network
topology analysis, Data center networks,
High-performance networks, Cloud computing infrastructure, Multipath
routing, Non-minimal routing, Adaptive routing,
Low-diameter topologies, Path diversity, Load balancing, ECMP,
Ethernet, TCP/IP
\end{IEEEkeywords}
\fi

\ifsc\vspace{0.5em}\fi
\iftr\vspace{1.5em}\fi
{\noindent\small\bf This is the full version of a paper published at\\
 ACM/IEEE Supercomputing'20 under the same title}
\iftr\vspace{1em}\fi

\section{Introduction}
\label{sec:intro}
\ifsq
\vspace{-0.25em}
\fi

\iftr
\begin{figure*}[t]
\vspaceSQ{-1em}
\centering
\includegraphics[width=0.28\textwidth]{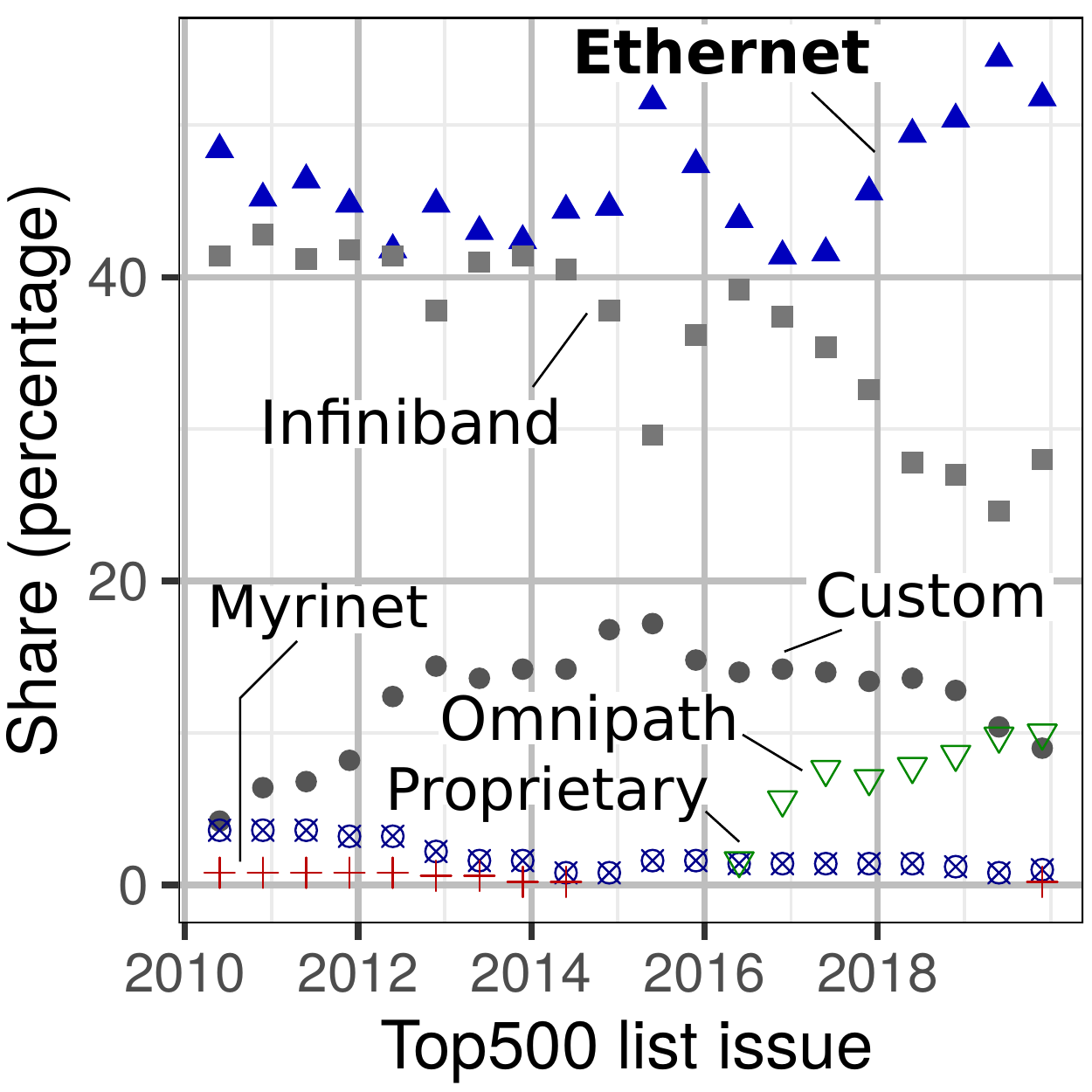}
%
%
\includegraphics[width=0.335\textwidth]{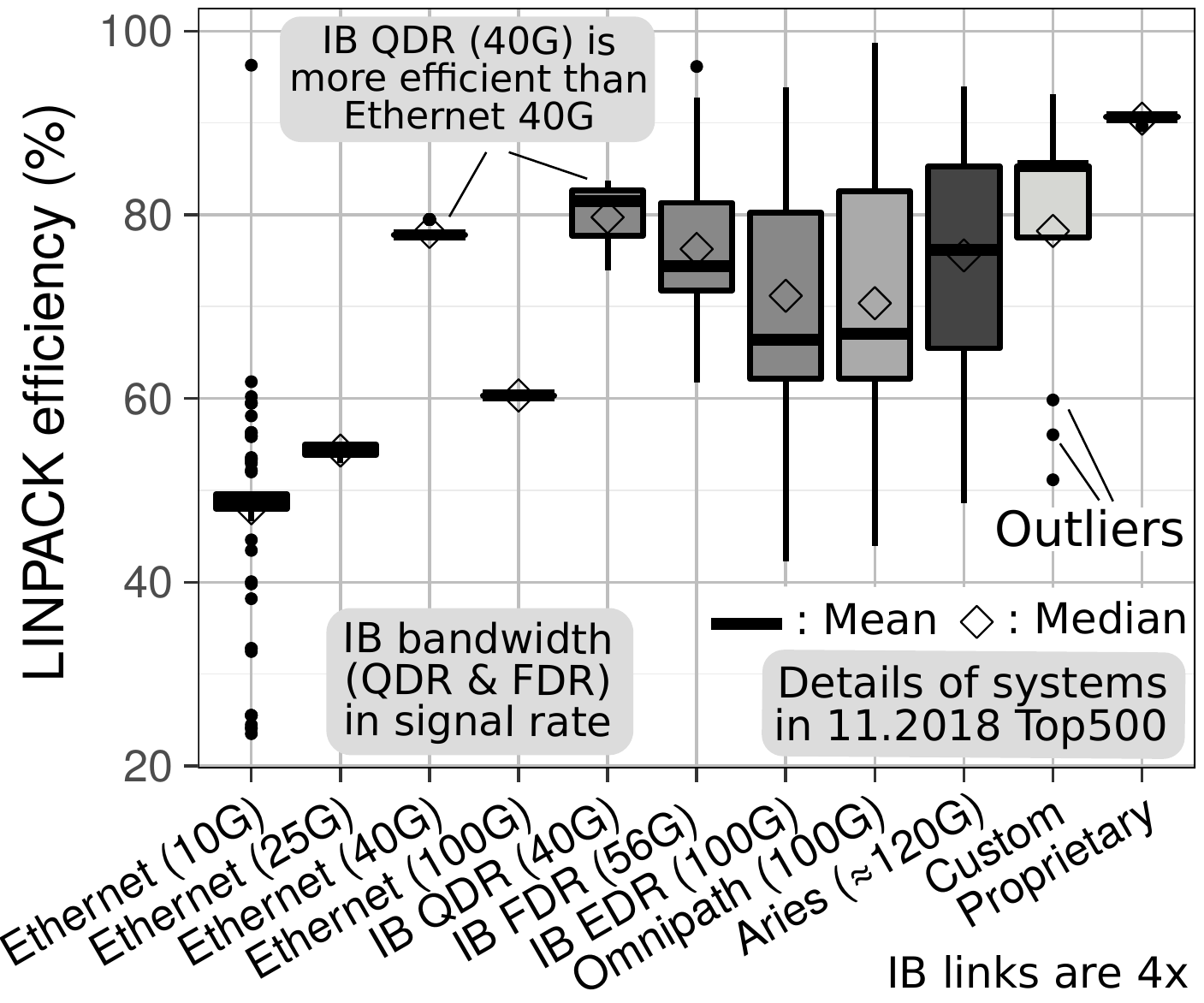}
\includegraphics[width=0.335\textwidth]{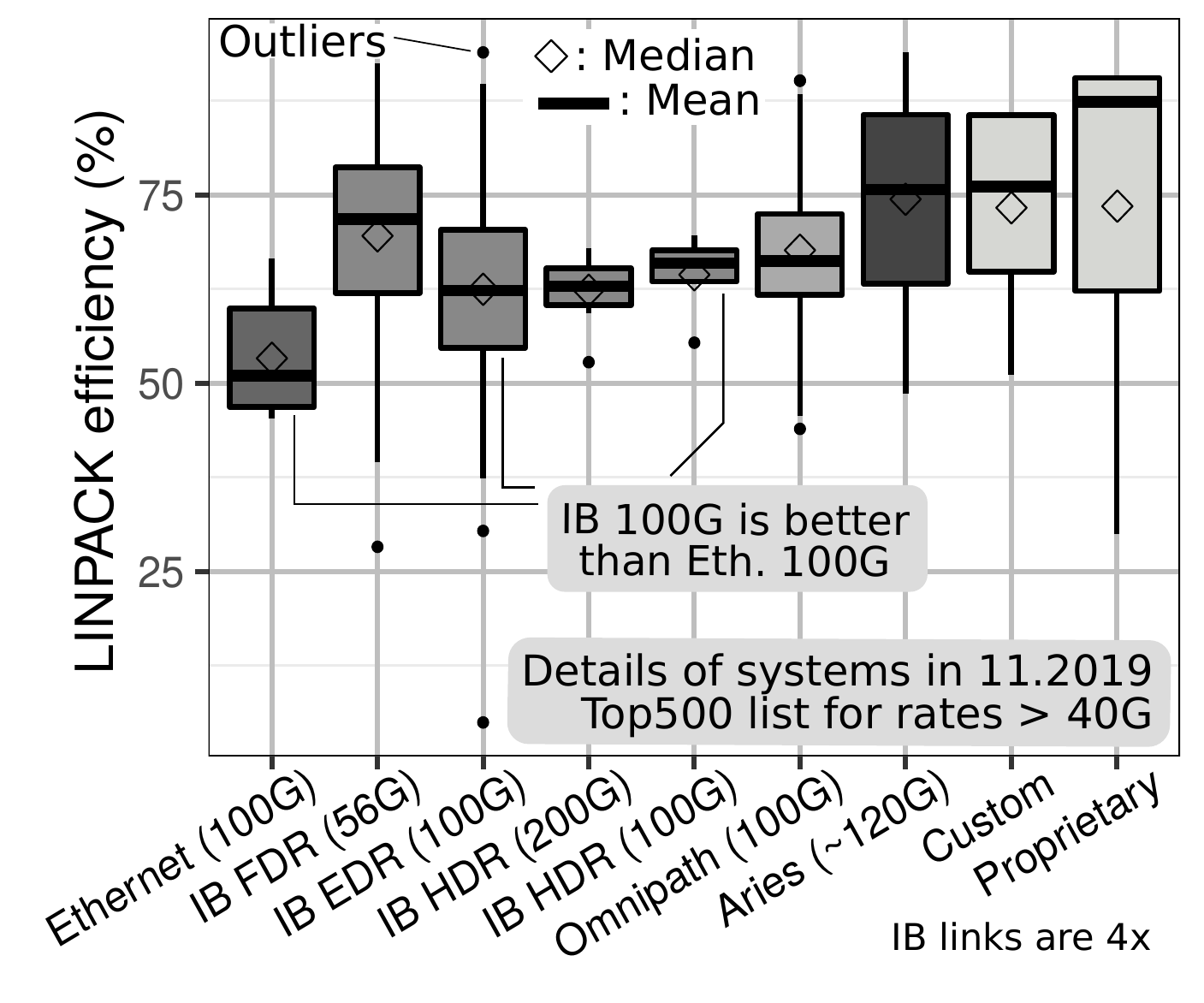}
\vspaceSQ{-0.5em}
\caption{\textmd{The percentage of Ethernet systems in the Top500 list (on the
left) and the LINPACK efficiency of Top500 systems with various networks
in November 2018 and 2019 Top500 issues (on
the right).
}}
\vspaceSQ{-1em}
\label{fig:motivation}
\end{figure*}
\fi

\iftr
\begin{figure*}[b]
\vspaceSQ{-1em}
\centering
\includegraphics[width=0.6\textwidth]{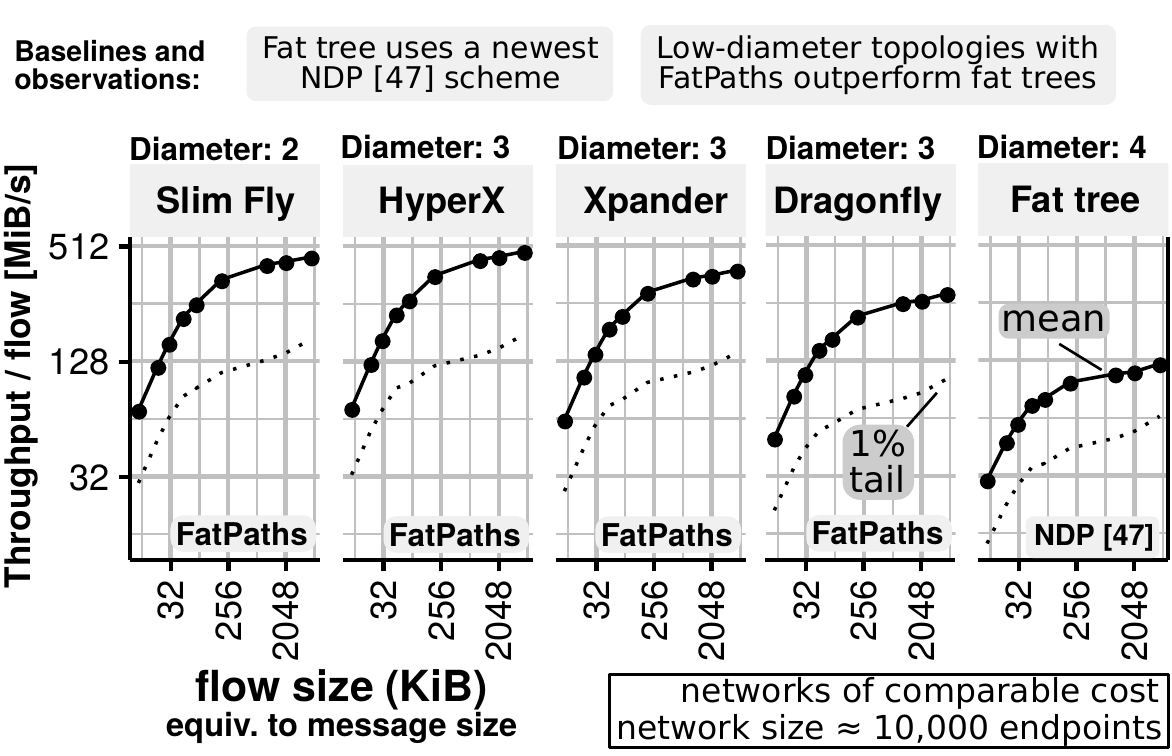}
\vspaceSQ{-0.5em}
\caption{\textmd{Example \textbf{performance advantages of low-diameter
topologies that use FatPaths over fat trees} equipped with NDP (very recent
routing architecture by Handley et al.~\cite{handley2017re}). Evaluation
methodology is discussed in detail in~\cref{sec:eval}.}}
%
%
\vspace{-0.5em}
\label{fig:ndp_results_motiv}
\end{figure*}
\fi

\ifsq
\enlargethispage{\baselineskip}
\fi

Ethernet continues to be important in the HPC landscape. While the most
powerful Top500 systems use vendor-specific or Infiniband (IB) interconnects,
more than half of the Top500 (cf.~the Nov.~2019 issue)
machines~\cite{dongarra1997top500} are based on Ethernet, see
Figure~\ref{fig:motivation} (the left plot). For example, Mellanox connects
156 Ethernet systems (25 GiB and faster), which is 20\% more than in Nov.~2018.
The Green500 list is similar~\cite{feng2007green500}. 
The importance of Ethernet is increased by the {``convergence of HPC and
Big Data''}, with cloud providers and data center operators aggressively aiming
for high-bandwidth and low-latency fabric~\cite{valadarsky2015, handley2017re,
  vanini2017letflow}. An example is the growing popularity of RDMA over
  Converged Ethernet (RoCE)~\cite{infiniband2014rocev2} that facilitates
  deploying Remote Direct Memory Access (RDMA)~\cite{fompi-paper} applications
  and protocols -- traditionally associated with HPC and IB interconnects -- on
  top of Ethernet.

\ifall
An example is Mellanox, with its Ethernet sales for the 3rd quarter of 2017 being
higher than those for Infiniband~\cite{mellanox-sales}.
\fi

Yet, Ethernet systems are scarce in the \emph{highest 100} positions of Top500. For
example, in November 2019, only \emph{six} such systems were among the highest
100. Ethernet systems are also less efficient than Infiniband, custom,
OmniPath, and proprietary systems, see Figure~\ref{fig:motivation} (on the
right). This is also the case for systems with similar sizes, injection
bandwidth, and topologies, indicating overheads caused by {routing}.
Thus, enhancing routing in HPC Ethernet clusters would improve the overall
performance of $\approx$50\% of Top500 systems and accelerate cloud
infrastructure, mostly based on Ethernet~\cite{azodolmolky2013cloud}.

\ifsc
\begin{figure}[h]
\vspaceSQ{-1em}
\centering
\includegraphics[width=0.216\textwidth]{analyses/top500-sc20.pdf}
%
%
\includegraphics[width=0.25\textwidth]{analyses/top500-interconnect_5.pdf}
\vspaceSQ{-0.5em}
\caption{\textmd{The percentage of Ethernet systems in the Top500 list (on the
left) and the LINPACK efficiency of Top500 systems with various networks (on
the right).
}}
\vspaceSQ{-1em}
\label{fig:motivation}
\end{figure}
\fi

%
Clos dominates the landscape of cloud data centers and
supercomputers~\cite{niranjan2009portland, handley2017re, valadarsky2015}.
Yet, many recent \emph{{low-diameter topologies}} claim to outperform Clos in
the cost-performance tradeoff. For example, Slim Fly can be $\approx$$2\times$
more cost- and power-efficient while offering $\approx$25\% lower latency.
Similar numbers were reported for Jellyfish~\cite{singla2012jellyfish} and
Xpander~\cite{valadarsky2015}. Thus, \emph{modern low-diameter networks may
significantly enhance compute capabilities of Ethernet clusters.}

However, to the best of our knowledge, {no high-performance routing
architecture has been proposed for low-diameter networks} based on Ethernet
stacks. The key issue here is that {traditional routing schemes} (e.g.,
Equal-Cost Multipath (ECMP)~\cite{hopps2000analysis}) {cannot be directly used
in networks such as Slim Fly}, because (as we will show) there is almost always
{only one} shortest path between endpoint pairs (i.e., \emph{shortest
paths fall short}). Restricting traffic to these paths does not utilize such
topologies' path diversity, and it is unclear how to split traffic across
non-shortest paths of \emph{unequal} lengths. 

\ifsq\enlargethispage{\baselineskip}\fi

To answer this, {we propose {FatPaths}, the first high-performance, simple, and
robust {routing architecture} for low-diameter networks}, to accelerate {both}
HPC systems and cloud infrastructure that use Ethernet. 
FatPaths uses our \emph{key research outcome}: {{although low-diameter networks
fall short of shortest paths, they have enough ``almost'' shortest paths}}.
This insight comes from our in-depth analysis of path diversity in five
low-diameter topologies (contribution~\textbf{\#1}). 
Then, in our \emph{key design outcome}, we show how to encode this {rich
diversity of non-minimal paths in low-diameter networks} in commodity hardware
(HW) using a novel routing approach called {{layered routing}}
(contribution~\textbf{\#2}). 
Here, we divide network links into subsets called {layers}. One layer contains
at least one ``almost'' shortest path between any two endpoints. Non-minimal
multipath routing is then enabled by using more than one layer.

For higher performance in TCP environments, FatPaths uses a {{high-performance
transport design}}. 
We seamlessly combine layered routing with several {TCP
enhancements}~\cite{handley2017re} 
\ifsc
(such as shallow buffers) 
\fi
\iftr
(such as lossless metadata exchange -- packet headers always reach their
destinations, shallow buffers, and priority queues for retransmitted packets to
avoid head-of-line congestion~\cite{handley2017re})
\fi
and we use {flowlet switching}~\cite{kandula2007dynamic}, a scheme that enables
{very simple} but {powerful} load balancing by sending batches of packets over
multiple layers (contribution~\textbf{\#3}).

\ifall\m{?}
This exposed path diversity enables FatPaths to
benefit from multipath routing in low-diameter networks.
\fi

We exhaustively compare FatPaths to other routing schemes in
Table~\ref{tab:intro} (contribution~\textbf{\#4}). FatPaths is {the only
scheme} that simultaneously (1) enables multi-pathing using both (2) shortest
and (3) non-shortest paths, (4) explicitly considers disjoint paths, (5) offers
adaptive load balancing, and (6) is applicable across topologies.
Table~\ref{tab:intro} focuses on path diversity, because, as topologies lower
their diameter and reduce link count, path diversity -- key to high-performance
routing -- {becomes a scarce resource demanding careful examination}.

\ifall

For wide applicability, we also integrate FatPaths with with Data Center TCP
(DCTCP)~\cite{alizadeh2011data}. Next, we discuss integration with
Multipath TCP (MPTCP)~\cite{raiciu2011improving},
RDMA~\cite{fompi-paper} (iWARP~\cite{iwarp}, RoCE~\cite{infiniband2014rocev2}),
and Infiniband~\cite{pfister2001introduction}.
%
%
Finally, we discuss how to deploy FatPaths' non-minimal routing using a simple
ECMP hardware~\cite{hopps2000analysis}. This effectively shows
\emph{{how to use ECMP also with non-minimal paths}}, which is of
separate interest.

\fi

\iftr
We discuss integration of FatPaths not only with Ethernet and TCP,
but also with Data Center TCP (DCTCP)~\cite{alizadeh2011data} and Multipath TCP
(MPTCP)~\cite{raiciu2011improving}. Next, we
discuss how FatPaths could enhance Infiniband~\cite{pfister2001introduction}
and sketch implementing RDMA~\cite{fompi-paper}
(iWARP~\cite{iwarp}, RoCE~\cite{infiniband2014rocev2}) on top of FatPaths.
%
%
%
\fi

\begin{table}[t]
%
%
\setlength{\tabcolsep}{2pt}
\ifsq
\renewcommand{\arraystretch}{0.5}
\else
\renewcommand{\arraystretch}{1.2}
\fi
\centering
\ssmall
\sf
\begin{tabular}{lllllllll}
\toprule
\multirow{2}{*}{ \makecell[c]{\textbf{Routing Scheme}\\\textbf{(Name, Abbreviation, Reference)}}} & \multirow{2}{*}{\makecell[c]{\textbf{Stack}\\\textbf{Layer}}} & \multicolumn{7}{c}{Supported path diversity aspect} \\
\cmidrule(lr){3-9}
 & & \textbf{SP} & \textbf{NP} & \textbf{SM} & \textbf{MP} & \textbf{DP} & \textbf{ALB} & \textbf{AT} \\
\midrule
\multicolumn{9}{c}{\textbf{(1) SIMPLE ROUTING PROTOCOLS (often used as building blocks):}} \\
\midrule
Valiant load balancing (VLB)~\cite{valiant1982scheme} & L2--L3 & \faThumbsDown & \faThumbsOUp & \faThumbsDown & \faThumbsDown & \faThumbsDown & \faThumbsDown & \faThumbsOUp  \\
\makecell[l]{Simple Spanning Tree (ST)~\cite{perlman1985algorithm}} & L2 & \faThumbsDown$^S$ & \faThumbsDown$^S$ & \faThumbsDown & \faThumbsDown & \faThumbsDown & \faThumbsDown & \faThumbsOUp  \\ 
\makecell[l]{Simple routing, e.g., OSPF~\cite{moy1997ospf, rekhter2005border, malkin1994rip, oran1990osi}} & L2, L3 & \faThumbsOUp & \faThumbsDown & \faThumbsDown & \faThumbsDown & \faThumbsDown & \faThumbsDown & \faThumbsOUp \\
UGAL~\cite{kim2008technology} & L2--L3 & \faThumbsOUp & \faThumbsOUp & \faThumbsDown & \faThumbsDown & \faThumbsDown & \faThumbsOUp & \faThumbsOUp \\ 
ECMP~\cite{hopps2000analysis}, OMP~\cite{villamizar1999ospf}, Pkt.~Spraying~\cite{dixit2013impact} & L2, L3 & \faThumbsOUp & \faThumbsDown & \faThumbsDown & \faThumbsOUp & \faThumbsDown & \faThumbsDown & \faThumbsOUp  \\ 
%
\midrule
\multicolumn{9}{c}{\textbf{(2) ROUTING ARCHITECTURES:}}
\\
\midrule
DCell~\cite{guo2008dcell} & L2--L3 & \faThumbsDown & \faThumbsOUp & \faThumbsDown & \faThumbsDown & \faThumbsDown & \faThumbsDown & \faThumbsDown \\
Monsoon~\cite{greenberg2008towards} & L2, L3 & \faThumbsUp & \faThumbsUp & \faThumbsDown & \faThumbsUp & \faThumbsDown & \faThumbsDown & \faThumbsDown \\
PortLand~\cite{niranjan2009portland} & L2 & \faThumbsOUp & \faThumbsDown & \faThumbsDown & \faThumbsOUp & \faThumbsDown & \faThumbsDown & \faThumbsDown \\ 
DRILL~\cite{ghorbani2017drill}, LocalFlow~\cite{sen2013localflow}, DRB~\cite{cao2013per} & L2 & \faThumbsOUp & \faThumbsDown & \faThumbsDown & \faThumbsOUp & \faThumbsDown & \faThumbsOUp & \faThumbsDown \\ 
VL2~\cite{greenberg2009vl2} & L3 & \faThumbsOUp & \faThumbsDown & \faThumbsDown & \faThumbsOUp & \faThumbsDown & \faThumbsUp & \faThumbsDown \\ 
Architecture by Al-Fares et al.~\cite{alfares2008scalable} & L2--L3 & \faThumbsOUp & \faThumbsDown & \faThumbsDown & \faThumbsOUp & \faThumbsOUp & \faThumbsOUp & \faThumbsDown \\
BCube~\cite{guo2009bcube} & L2--L3 & \faThumbsOUp & \faThumbsDown & \faThumbsDown & \faThumbsOUp & \faThumbsOUp & \faThumbsDown & \faThumbsDown \\
%
%
SEATTLE~\cite{kim2008floodless}, others$^*$~\cite{lui2002star, rodeheffer2000smartbridge, perlman2004rbridges, garcia2003lsom} & L2 & \faThumbsOUp & \faThumbsDown & \faThumbsDown & \faThumbsDown & \faThumbsDown & \faThumbsDown & \faThumbsOUp \\
VIRO~\cite{jain2011viro} & L2--L3 & \faThumbsDown$^S$ & \faThumbsDown$^S$ & \faThumbsDown & \faThumbsDown & \faThumbsDown & \faThumbsDown & \faThumbsOUp \\
Ethernet on Air~\cite{sampath2010ethernet} & L2 & \faThumbsDown$^S$ & \faThumbsDown$^S$ & \faThumbsDown & \faThumbsUp$^R$ & \faThumbsDown & \faThumbsDown & \faThumbsOUp \\
PAST~\cite{stephens2012past} & L2 & \faThumbsUp$^S$ & \faThumbsUp$^S$ & \faThumbsDown & \faThumbsDown & \faThumbsOUp & \faThumbsDown & \faThumbsOUp \\ 
\makecell[l]{MLAG, MC-LAG, others~\cite{subramanian2014multi}} & L2 & \faThumbsUp & \faThumbsUp & \faThumbsDown & \faThumbsUp$^R$ & \faThumbsDown & \faThumbsDown & \faThumbsOUp \\ 
MOOSE~\cite{scott2009addressing} & L2 & \faThumbsOUp & \faThumbsDown & \faThumbsDown & \faThumbsUp & \faThumbsDown & \faThumbsDown & \faThumbsOUp \\
MPA~\cite{narvaez1999efficient} & L3 & \faThumbsOUp & \faThumbsOUp & \faThumbsDown & \faThumbsOUp & \faThumbsDown & \faThumbsDown & \faThumbsOUp \\
AMP~\cite{gojmerac2003adaptive} & L3 & \faThumbsOUp & \faThumbsDown & \faThumbsDown & \faThumbsOUp & \faThumbsDown & \faThumbsOUp & \faThumbsOUp \\
MSTP~\cite{de2006improving}, GOE~\cite{iwata2004global}, Viking~\cite{sharma2004viking} & L2 & \faThumbsDown$^S$ & \faThumbsDown$^S$ & \faThumbsDown & \faThumbsOUp & \faThumbsDown & \faThumbsDown & \faThumbsOUp \\ 
\makecell[l]{SPB~\cite{allan2010shortest}, TRILL~\cite{touch2009transparent}, Shadow MACs~\cite{agarwal2014shadow}} & L2 & \faThumbsOUp & \faThumbsDown$^R$ & \faThumbsDown & \faThumbsOUp & \faThumbsDown & \faThumbsDown & \faThumbsOUp \\ 
%
%
%
SPAIN~\cite{mudigonda2010spain} & L2 & \faThumbsUp$^S$ & \faThumbsUp$^S$ & \faThumbsUp$^S$ & \faThumbsOUp & \faThumbsOUp & \faThumbsDown & \faThumbsOUp \\ 
\iftr
\midrule
\multicolumn{9}{c}{\textbf{(3) Schemes for exposing/encoding paths (can be combined with FatPaths):}}
\\
\midrule

XPath~\cite{hu2016explicit} & L3 & \faThumbsOUp & \faThumbsUp & \faThumbsUp & \faThumbsOUp & \faThumbsOUp & \faThumbsUp & \faThumbsOUp \\
Source routing for flexible DC fabric~\cite{jyothi2015towards} & L3 & \faThumbsOUp & \faThumbsUp$^R$ & \faThumbsUp$^R$ & \faThumbsDown & \faThumbsDown & \faThumbsDown & \faThumbsUp$^{\text{\textdagger}}$ \\
\fi
\midrule
%
%
%
\textbf{(3) FatPaths~[This work]} & L2--L3 & \faThumbsOUp & \faThumbsOUp & \faThumbsOUp & \faThumbsOUp & \faThumbsOUp & \faThumbsOUp & \faThumbsOUp \\
\midrule
%
%
%
\bottomrule
\end{tabular}
\vspace{-0.5em}
\caption{
\textmd{
\textbf{Support for path diversity in routing schemes}.
\iftr
``Stack Layer'' indicates the location of each routing scheme in the TCP/IP stack.
\fi
%
%
\textbf{SP}, \textbf{NP}: support for arbitrary \textbf{shortest} and \textbf{non-minimal} paths, respectively.
%
%
%
\textbf{SM}: A given scheme \textbf{simultaneously} enables minimal and non-minimal paths.
\textbf{MP}: support for \textbf{multi-pathing} (between two hosts).
\textbf{DP}: support for \textbf{disjoint} paths.
\textbf{ALB}: support for \textbf{adaptive load balancing}.
\textbf{AT}: compatible with an \textbf{arbitrary topology}.
%
%
\faThumbsOUp, \faThumbsUp, \faThumbsDown: A given scheme, respectively, offers
a given feature, offers it in a limited way (e.g.,
Monsoon~\cite{greenberg2008towards} uses multi-pathing (ECMP) only between
border and access routers), and does not offer it.
$^R$A given feature is offered \emph{only} for resilience (\emph{not}
performance).
\ifsq
$^S$Minimal or non-minimal paths are offered \emph{only} within spanning trees.
\else
$^S$Minimal or non-minimal paths are offered \emph{only} within one or more
spanning trees that do not necessarily cover \emph{all} physical paths.
\fi
}}
\vspaceSQ{-1.5em}
\label{tab:intro}
\end{table}

We conduct extensive, large-scale packet-level simulations
(contribution~\textbf{\#5}), and a comprehensive theoretical analysis
(contribution~\textbf{\#6}). {We simulate topologies with up to
\emph{$\approx$1 million endpoints}}. We motivate FatPaths in
Figure~\ref{fig:ndp_results_motiv}. Slim Fly and Xpander equipped with FatPaths
ensure $\approx$15\% higher throughput and $\approx$2$\times$ lower latency
than similar-cost fat trees.
%

\ifsq\enlargethispage{\baselineskip}\fi

We stress that FatPaths outperforms the bleeding-edge Clos proposals based on
per-packet load balancing (these schemes account for packet-reordering) and
novel transport mechanisms, that {achieve $3$-$4\times$ smaller tail
flow\footnote{\scriptsize In performance analyses, we use the term ``flow'',
which is equivalent to a ``message''.} completion time (FCT) than Clos based on
ECMP}~\cite{handley2017re, ghorbani2017drill}.
Consequently, our work illustrates that \emph{low-diameter networks can
continue to claim an improvement in the cost-performance tradeoff against the
new, superior Clos baselines} (contribution~\textbf{\#7}). 

\ifall

Finally, we analyze how to deploy FatPaths' non-minimal routing using a simple
ECMP hardware~\cite{hopps2000analysis}. This effectively shows
\emph{\textbf{how to use ECMP also with non-minimal paths}}, which is -- due to
the prevalence of the ECMP hardware in data centers -- of separate interest
(\textbf{contribution~\#8}).

\fi

\ifsc
\begin{figure}[h]
\vspaceSQ{-1em}
\centering
\ifsq
\includegraphics[width=0.95\columnwidth]{fig10_tp___mac_5.pdf}
\else
\includegraphics[width=1.0\columnwidth]{fig10_tp___mac_5.pdf}
\fi
\vspaceSQ{-0.5em}
\caption{\textmd{Example \textbf{performance advantages of low-diameter
topologies that use FatPaths over fat trees} equipped with NDP (very recent
routing architecture by Handley et al.~\cite{handley2017re}). Evaluation
methodology is discussed in detail in~\cref{sec:eval}.}}
%
%
\vspaceSQ{-0.5em}
\label{fig:ndp_results_motiv}
\end{figure}
\fi

\iftr
\begin{table*}[b]
\vspaceSQ{-0.5em}
\sffamily
\centering
\footnotesize
\begin{tabular}{@{}lll@{}}
\toprule
%
%
\iftr
\multirow{8}{*}{\begin{turn}{90}\shortstack{\textbf{Network }\\\textbf{structure}}\end{turn}} 
\fi
\ifsc
\multirow{5}{*}{\begin{turn}{90}\shortstack{\textbf{Network }\\\textbf{model}}\end{turn}} 
\fi
                   & $V, E$ & Sets of vertices/edges (routers/links, $V=\{0,\dots,N_r-1\}$).\\
\ifsc
                   & $N, N_r$& \#endpoints and \#routers in the network ($N_r = |V|$).\\
                   & $p, k'$& \#endpoints attached to a router, \#channels to other routers.\\
\fi
\iftr
                   & $N$& The number of endpoints in the network.\\
                   & $N_r$& The number of routers in the network ($N_r = |V|$).\\
                   & $p$& The number of endpoints attached to a router.\\
                   & $k'$& The number of channels from a router to other routers.\\
\fi
                   \iftr
                   & $k$&\emph{Router radix} ($k = k' + p$).\\
                   \fi
                   & $D, d$&Network diameter and the average path length.\\
                   \ifall
                  & & $q, a, h, \ell, L, S, K$: input parameters of various topologies \\ 
\fi
%
\midrule
\iftr
\multirow{7}{*}{\begin{turn}{90}\shortstack{\textbf{Diversity of}\\\textbf{paths (\cref{sec:paths})}}\end{turn}} 
\fi
\ifsc
\multirow{4}{*}{\begin{turn}{90}\shortstack{\textbf{Paths}\\\textbf{(\cref{sec:paths})}}\end{turn}} 
\fi
                   & $x \in V$ & Different routers used in~\cref{sec:paths} ($x \in \{s,t,a,b,c,d\}$).\\
\iftr
                   & $X \subset V$ & Different router sets used in~\cref{sec:paths} ($X \in \{A,B\}$).\\
\fi
                   & $c_l(A,B)$ & \emph{Count of (at most $l$-hop) disjoint paths} between router sets $A$, $B$.\\
                   & $c_\text{min}(s,t), l_\text{min}(s,t)$ & \emph{Diversity} and \emph{lenghts of minimal paths} between routers $s$ and $t$.\\
\iftr
                   & $l_\text{min}(s,t)$ & \emph{Lengths of minimal paths} between routers $s$ and $t$.\\
\fi
                   & $I_{ac,bd}$ & \emph{Path interference} between pairs of routers $a,b$ and $c,d$.\\
\midrule
\multirow{1}{*}{\begin{turn}{90} \shortstack{\hspace{-0.4em}\textbf{ Layers}\\\textbf{(\cref{sec:routing})}}\end{turn}} 
                   & $n$ & The total number of layers in FatPaths routing.\\
                   & $\sigma_i$ & A layer, defined by its forwarding function, $i\in \{1,\dots,n\}$.\\
                   & $\rho$ & Fraction of edges used in routing.
                   \vspace{0.1em}\\ 

%
%
                   \bottomrule
\end{tabular}
\vspaceSQ{-0.5em}
\caption{\textmd{The \textbf{most important symbols} used in this work.}}
\vspaceSQ{-1.5em}
\label{tab:symbols}
\end{table*}
\fi

\iftr

\noindent
\vspace{1em}

Towards the above goals, we contribute:

\vspace{0.5em}

\begin{itemize}[noitemsep, leftmargin=0.75em]
\item A {high-performance, simple, and robust \textbf{routing architecture, FatPaths}},
that enables modern low-diameter topologies such as Slim Fly to
achieve unprecedented performance.
\item A novel routing approach called \textbf{layered routing} that is a key
ingredient of FatPaths and facilitates using diversity of non-minimal paths in
modern low-diameter networks.
\item The first \textbf{detailed analysis of path diversity} in five modern low-diameter
network topologies, and the identification of the diversity of non-minimal
paths as {a key resource} for their high performance.
\item A \textbf{novel path diversity metric}, Path Interference, that captures
bandwidth loss between specific pairs of routers. 

%
\item A comprehensive {analysis of existing routing schemes} in terms of their
support for path diversity.
\item A \textbf{theoretical analysis} showing FatPaths' advantages.
%
%
%
%
\item Extensive, large-scale packet-level simulations (up to 
$\approx$\textbf{one million endpoints}) to demonstrate the advantages of
low-diameter network topologies equipped with FatPaths over very recent Clos
designs, achieving 15\% higher net throughput at 2$\times$ lower latency for
comparable cost.
\end{itemize}

\fi

%
\section{Notation, Background, Concepts}
\label{sec:back}


We first outline basic concepts; 
Table~\ref{tab:symbols} shows the notation. 
%


\ifsc
\begin{table}[h]
\vspaceSQ{-0.5em}
\sffamily
\centering
\ssmall
\setlength{\tabcolsep}{2pt}
\ifsq\renewcommand{\arraystretch}{0.7}\fi
\begin{tabular}{@{}lll@{}}
\toprule
%
%
\iftr
\multirow{8}{*}{\begin{turn}{90}\shortstack{\textbf{Network }\\\textbf{structure}}\end{turn}} 
\fi
\ifsc
\multirow{5}{*}{\begin{turn}{90}\shortstack{\textbf{Network }\\\textbf{model}}\end{turn}} 
\fi
                   & $V, E$ & Sets of vertices/edges (routers/links, $V=\{0,\dots,N_r-1\}$).\\
\ifsc
                   & $N, N_r$& \#endpoints and \#routers in the network ($N_r = |V|$).\\
                   & $p, k'$& \#endpoints attached to a router, \#channels to other routers.\\
\fi
\iftr
                   & $N$& The number of endpoints in the network.\\
                   & $N_r$& The number of routers in the network ($N_r = |V|$).\\
                   & $p$& The number of endpoints attached to a router.\\
                   & $k'$& The number of channels from a router to other routers.\\
\fi
                   \iftr
                   & $k$&\emph{Router radix} ($k = k' + p$).\\
                   \fi
                   & $D, d$&Network diameter and the average path length.\\
                   \ifall
                  & & $q, a, h, \ell, L, S, K$: input parameters of various topologies \\ 
\fi
%
\midrule
\iftr
\multirow{7}{*}{\begin{turn}{90}\shortstack{\textbf{Diversity of}\\\textbf{paths (\cref{sec:paths})}}\end{turn}} 
\fi
\ifsc
\multirow{4}{*}{\begin{turn}{90}\shortstack{\textbf{Paths}\\\textbf{(\cref{sec:paths})}}\end{turn}} 
\fi
                   & $x \in V$ & Different routers used in~\cref{sec:paths} ($x \in \{s,t,a,b,c,d\}$).\\
\iftr
                   & $X \subset V$ & Different router sets used in~\cref{sec:paths} ($X \in \{A,B\}$).\\
\fi
                   & $c_l(A,B)$ & \emph{Count of (at most $l$-hop) disjoint paths} between router sets $A$, $B$.\\
                   & $c_\text{min}(s,t), l_\text{min}(s,t)$ & \emph{Diversity} and \emph{lenghts of minimal paths} between routers $s$ and $t$.\\
\iftr
                   & $l_\text{min}(s,t)$ & \emph{Lengths of minimal paths} between routers $s$ and $t$.\\
\fi
                   & $I_{ac,bd}$ & \emph{Path interference} between pairs of routers $a,b$ and $c,d$.\\
\midrule
\multirow{1}{*}{\begin{turn}{90} \shortstack{\hspace{-0.4em}\textbf{ Layers}\\\textbf{(\cref{sec:routing})}}\end{turn}} 
                   & $n$ & The total number of layers in FatPaths routing.\\
                   & $\sigma_i$ & A layer, defined by its forwarding function, $i\in \{1,\dots,n\}$.\\
                   & $\rho$ & Fraction of edges used in routing.
                   \vspace{0.1em}\\ 

%
%
                   \bottomrule
\end{tabular}
\vspaceSQ{-0.5em}
\caption{\textmd{The \textbf{most important symbols} used in this work.}}
\vspaceSQ{-1.5em}
\label{tab:symbols}
\end{table}
\fi

\subsection{{Network Model}}
We model an interconnection network as an undirected graph $G = (V,E)$; $V$ and
$E$ are sets of {routers\footnote{\scriptsize We abstract away HW details and
use a term ``router'' for L2 switches and L3 routers.}} ($|V| = N_r$) and
full-duplex inter-router physical links. Endpoints are \emph{not} modeled
explicitly. There are $N$ endpoints, $p$~endpoints are attached to
each router (\emph{concentration}) and $k'$ channels from each router to other
routers (\emph{network radix}). The total router \emph{radix} is $k = p+k'$.
The diameter is $D$ while the average path length is $d$.

\ifsq\enlargethispage{\baselineskip}\fi

\subsection{{Topologies and Fair Topology Setup}}
\label{sec:back_topos}
We consider all recent low-diameter networks: {Slim Fly}
(SF)~\cite{besta2014slim} (a variant with $D=2$), {Dragonfly}
(DF)~\cite{kim2008technology} (the ``balanced'' variant with $D=3$),
{Jellyfish} (JF)~\cite{singla2012jellyfish} (with $D=3$), {Xpander}
(XP)~\cite{valadarsky2015} (with $D \le 3$), {HyperX} (Hamming graph)
(HX)~\cite{ahn2009hyperx} that generalizes {Flattened Butterflies}
(FBF)~\cite{kim2007flattened} with $D=3$.  We also
use established three-stage {fat trees} (FT3)~\cite{leiserson1996cm5} that are
a variant of {Clos}~\cite{clos1953study}. 
Note that we do not detail the considered topologies. This is because our
design does \emph{not} rely on any specifics of these
networks (i.e., FatPaths can be used in \emph{any} topology, but from
performance perspective, it is most beneficial for low-diameter networks).
\iftr
The considered topologies are illustrated in Table~\ref{tab:parameters}.
\fi

\iftr

\begin{table}[h!]
%
\centering
\scriptsize
\sf
\setlength{\tabcolsep}{1pt}
\begin{tabular}{llcll}
\toprule
\textbf{Topology} & \textbf{Structure remarks} & $D$ & \textbf{Variant} & \textbf{Deployed?} \\
\midrule
%
%
%
\makecell[l]{Slim Fly (SF)~\cite{besta2014slim}} & \makecell[l]{Consists of groups}  & 2 & \makecell[l]{MMS} & unknown \\
\makecell[l]{Dragonfly (DF)~\cite{kim2008technology}} & \makecell[l]{Consists of groups} & 3 & \makecell[l]{``balanced''} & \makecell[l]{PERCS~\cite{arimilli2010percs},\\Cascade~\cite{faanes2012cray}} \\
%
%
\makecell[l]{HyperX\\ (HX, HX3)~\cite{ahn2009hyperx}} & \makecell[l]{Consists of groups} & 3 & \makecell[l]{``regular'' (cube)} & unknown \\
%
%
%
\makecell[l]{Xpander (XP)~\cite{valadarsky2015}} & \makecell[l]{Consists of metanodes}  & $\le$3 & randomized & unknown \\
\makecell[l]{Jellyfish (JF)~\cite{singla2012jellyfish}} & \makecell[l]{Random network}  & $\le$3 & \makecell[l]{``homogeneous''} & unknown \\
\makecell[l]{Fat tree (FT)~\cite{leiserson1996cm5}} & \makecell[l]{Endpoints form pods} & 4 & \makecell[l]{3 router layers} & \makecell[l]{Many systems} \\
\bottomrule
\end{tabular}
\caption{\textmd{Used topologies.}}
\vspaceSQ{-1.5em}
\label{tab:parameters}
\end{table}

\fi

%
%
We use four size classes: small ($N \approx 1,000$), medium ($N \approx
10,000$), large ($N \approx 100,000$), and huge ($N \approx 1,000,000$).  We
set $p = \frac{k'}d$ 
\ifsc
(this maximizes throughput while minimizing congestion and network cost,
details are in the technical report).
\fi
\iftr
(in Section~\ref{sec:eval} we show that, assuming random uniform traffic, $p =
\frac{k'}{d}$ maximizes throughput while minimizing congestion and network
cost).
\fi
Third, we select network radix~$k'$ and router
count~$N_r$ so that, for a fixed $N$, the compared topologies use similar
amounts of hardware for similar construction costs.



%
Jellyfish -- unlike other topologies -- is ``fully flexible'': There is a JF
instance for each combination of $N_r$ and $k'$. Thus, to fully test JF,
{for each other network}~X, {we use an equivalent} JF (denoted as
X-JF) with identical $N_r, k'$.

\begin{figure*}[t]
\vspaceSQ{-1.6em}
\centering
\includegraphics[width=1.0\textwidth]{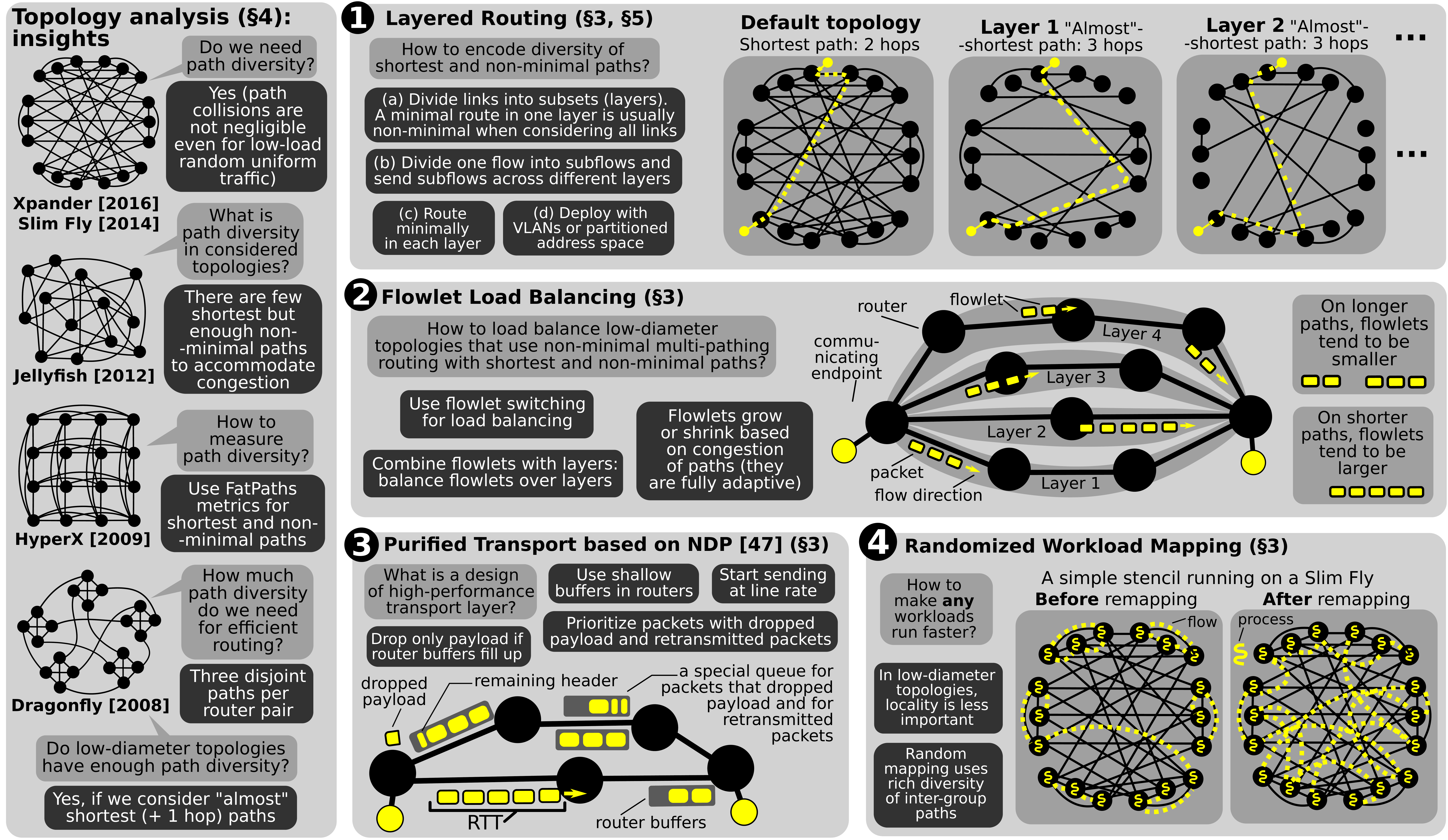}
\ifsq
\vspace{-1.5em}
\else
\vspace{-1em}
\fi
\caption{\textmd{\textbf{Overview of FatPaths}. Numbers (\ding{182} --
\ding{185}) refer to parts of Section~\ref{sec:overview}. We present selected
considered research questions with the obtained answers. ``Topology analysis''
summarizes our work on the path diversity of low-diameter topologies. ``Path
diversity'' intuitively means the number of edge-disjoint paths between router
pairs (details in~\cref{sec:paths}).}}
\vspaceSQ{-1.5em}
\label{fig:functional}
\end{figure*}

\subsection{{Traffic Patterns}}
We analyze recent works~\cite{besta2014slim,
prisacari2013fast, yuan2014lfti, yuan2013new, prisacari2014efficient,
prisacari2014randomizing, kathareios2015cost, prisacari2015performance,
chen2016evaluation, prisacari2013bandwidth, karacali2018assessing,
sehery2017flow, kassing2017beyond}
to select traffic patterns that represent important HPC and datacenter workloads.
Denote a set of endpoint IDs $\{1, ..., N\}$ as $V_e$.  Formally, a traffic
pattern is a mapping from source endpoint IDs $s \in V_e$ to destination
endpoints $t(s) \in V_e$.
First, we select \textbf{random uniform} ($t(s) \in V_e \text{\ u.a.r.,}$) and
\textbf{random permutation} ($t(s)$ $ = \pi_N(s)$, where $\pi_N$ is a
permutation selected $\text{\ u.a.r.,}$) that represent \textbf{irregular
workloads} such as graph computations, sparse linear algebra solvers, and
adaptive mesh refinement methods~\cite{Yuan:2013:NRS:2503210.2503229}.  Second,
we pick \textbf{off-diagonals} ($t(s) = (s+c)\mod N$, for fixed $c$) and
\textbf{shuffle} ($t(s) = \rotl_i(s) \mod N$, where the bitwise left rotation
on $i$ bits is denoted as $\rotl_i$ and $2^i < N < 2^{i+1}$). They represent
\textbf{collective operations} such as MPI-all-to-all or
MPI-all-gather~\cite{besta2014slim, Yuan:2013:NRS:2503210.2503229}.
We also use \textbf{stencils, realistic traffic patterns common in HPC}.  We
model 2D stencils as four off-diagonals at fixed offsets $c \in \{\pm 1,$ $\pm
1, \pm 42,$ $\pm 42\}$.  For large simulations ($N > 10,000$) we also use
offsets $c \in \{\pm 1,$ $\pm 1, \pm 1337,$ $\pm 1337\}$ to reduce counts of
communicating endpoint pairs that sit on the same switches.
Finally, we use \textbf{adversarial} and \textbf{worst-case} traffic patterns.
In the former, we use a skewed off-diagonal with large offsets (we make sure
that it has many colliding paths). For the latter, we use a pattern (detailed
in~\cref{sec:theory}) that maximizes stress on the interconnect
\emph{individually for each topology}.

\ifall

\subsection{Flow Sizes}

We model {flow sizes} with a Poisson distributed {flow arrival rate}. 

\fi

\ifall
We model \emph{traffic} using a \emph{traffic matrix} defined on endpoint pairs along
with a flow size distribution model with Poisson-distributed flow arrival rate.
%
%
\emph{Traffic pattern} is a mapping from source endpoint IDs $s \in
\mathcal{E}$ to destination endpoints $t(s) \in \mathcal{E}$; $\mathcal{E}$ is a set with IDs of all endpoints. We consider
random uniform:
$$
t(s) = T \in \mathcal{E} \text{\
u.a.r.,}
$$

\noindent
random permutations 
$$
t(s) = \pi_N(s),\ \pi_N \text{\ u.a.r.,}
$$

\noindent
off-diagonals as they appear in stencils 

$$
t(s) = (s+c) \mod N,\ \text{for fixed $c$,}
$$

\noindent
and shuffle 

$$
t(s) = \rotl_i(s) \mod N\text{,}
$$

\noindent
with the bitwise left rotation on $i$
bits $\rotl_i$ and $2^i < N < 2^{i+1}$.
Unless stated otherwise, we consider \emph{random assignments}, where the
endpoints $s$ and $t(s)$ are located at router~$\lfloor\pi_{N}(s)/p\rfloor \in
V$ and $\lfloor\pi_{N}(t(s))/p\rfloor \in V$, with $\pi_N$
chosen uniformly \emph{at random}.
\fi

\ifsq\enlargethispage{\baselineskip}\fi

\section{FatPaths Architecture: Overview}
\label{sec:overview}

We show FatPaths' architecture in Figure~\ref{fig:functional}.
%
%
The key part, layered routing, is summarized here and detailed in
Section~\ref{sec:routing}. For higher performance, FatPaths uses simple and
robust flowlet load balancing, ``purified'' high-performance transport, and
randomized workload mapping.
\iftr
Combined, they effectively use the ``fat'' diversity of minimal \emph{and} non-minimal
paths.
\fi
%

\subsection{{Layered Routing} {\large\ding{182}}}
To encode minimal \emph{and non-minimal} paths with commodity
HW, FatPaths divides all links into (not necessarily disjoint)
subsets called \emph{layers}\footnote{\scriptsize In FatPaths, a ``layer'' is
formally a \emph{subset} of links. We use the term ``layer'' as our concept is
similar to ``virtual layers'' known from works on
deadlock-freedom~\cite{skeie2002layered}}. Routing within each layer uses
shortest paths; these paths are usually \emph{not} shortest when considering
all network links.  Different layers encode different paths between each
endpoint pair.
%
%
%
\iftr
This enables taking advantage of the diversity of non-minimal paths in
low-diameter topologies.
\fi
The number of layers is minimized to reduce hardware resources needed to deploy
layers. Layers can easily be implemented with commodity schemes, e.g., VLANs or
a simple partitioning of the address space.


\ifsq\enlargethispage{\baselineskip}\fi

\subsection{{Simple and Effective Load Balancing} {\large\ding{183}}}
%

For simple but robust load balancing, we use flowlet
switching~\cite{sinha2004burstiness, kandula2007dynamic}, a technique
traditionally used to alleviate packet reordering in TCP. A flowlet is a
sequence 
%
%
of packets within one flow, separated from other flowlets by sufficient time
gaps, which prevents packet reordering at the receiver. Flowlet switching can
also provide a \emph{very} simple load balancing: a router simply picks a
random path for each flowlet, without \emph{any} probing for
congestion~\cite{vanini2017letflow}.
\iftr
This scheme
was used for Clos networks~\cite{vanini2017letflow}.
\fi  
Such load balancing is powerful because flowlets are \emph{elastic}: their size
changes \emph{automatically} based on conditions in the network. On
high-latency and low-bandwidth paths, flowlets are usually smaller in size
because time gaps large enough to separate two flowlets are more frequent.
Contrarily, low-latency and high-bandwidth paths host longer flowlets as such
time gaps appear less often.
Now, we propose to use flowlets in \emph{low-diameter} networks, to load
balance FatPaths. We combine flowlets with layered routing:
flowlets are sent using \emph{different layers}.
The key observation is that elasticity of flowlets \emph{automatically} ensures
that such load balancing considers both static 
and dynamic network properties (e.g., longer vs.~shorter paths as well as more
vs.~less congestion).  
Consider a pair of communicating routers. As we show
in~\cref{sec:paths}, virtually all router pairs in a low-diameter network are
connected with exactly one shortest part but multiple non-minimal paths,
possibly of different lengths. A shortest path often
experiences smallest congestion while longer paths are more likely to
be congested. Here, flowlet elasticity ensures that
larger flowlets are sent over shorter and less congested paths. Shorter
flowlets are then transmitted over longer and usually more congested paths.

\begin{figure*}[t]
\vspaceSQ{-1.25em}
\centering
\begin{minipage}{0.61\textwidth}
\centering
\includegraphics[width=1.0\textwidth]{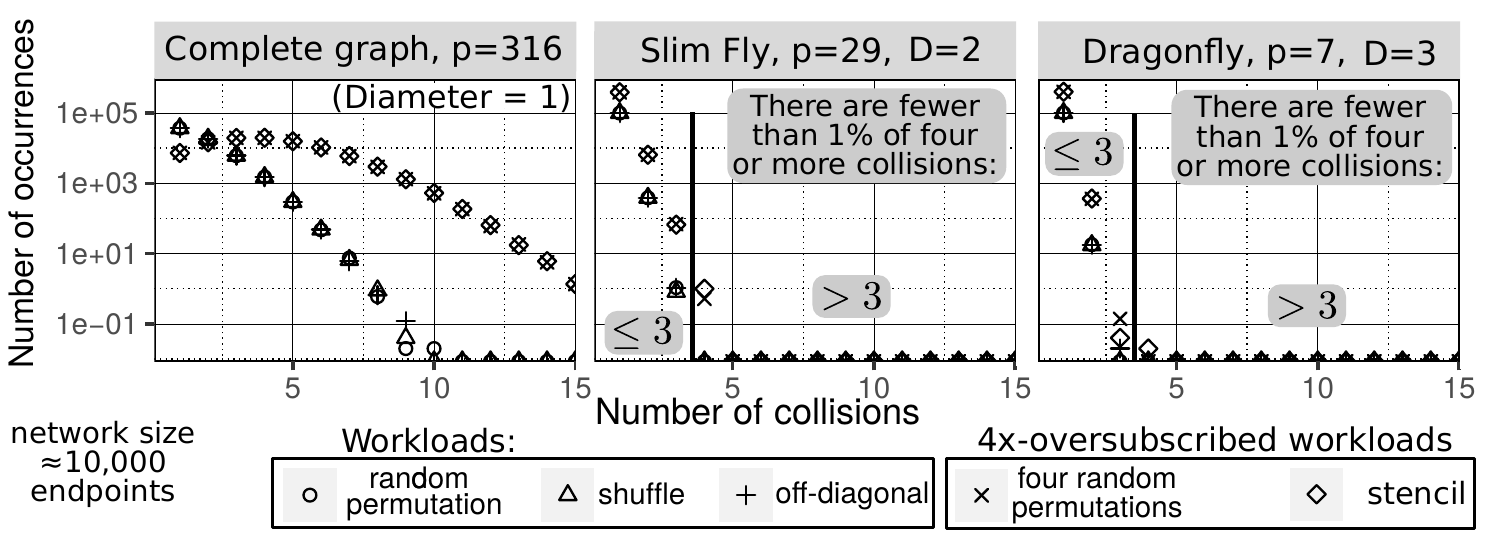}
\vspaceSQ{-1.5em}
\caption{\textmd{Histogram of colliding paths (per router pair) with
$p=\frac{k'}D$.  We plot data for SF, DF, and a complete graph, for $N \approx 100,000$. 
Other sizes $N$
and other topologies result in identical shares of colliding paths.}} 
\label{fig:path_collisions}
\end{minipage}\hfill
\begin{minipage}{0.35\textwidth}
\centering
\includegraphics[width=1.0\textwidth]{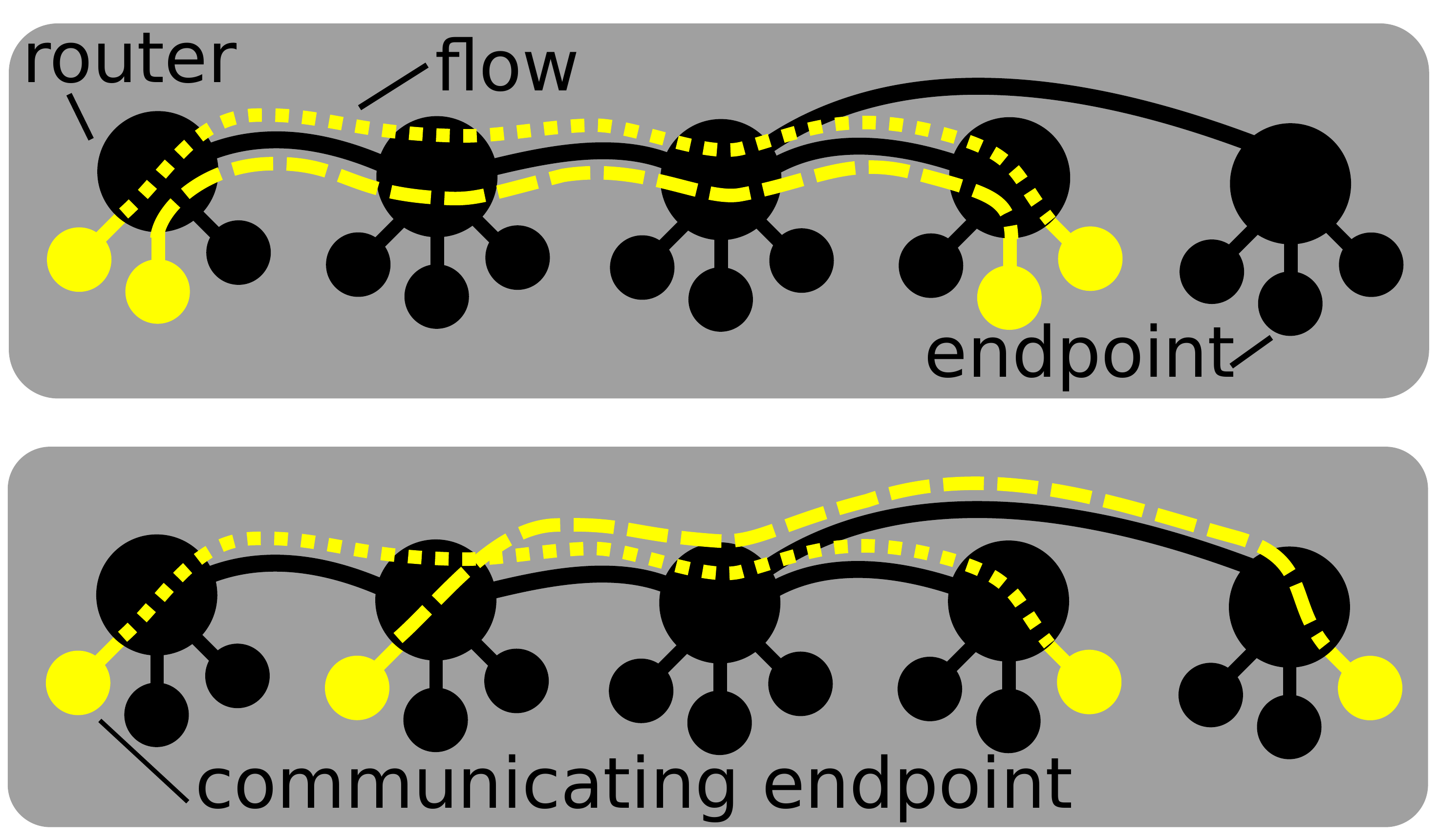}
\vspaceSQ{-0.5em}
\caption{\textmd{Example collision (top) and overlap (bottom) of paths and flows.}} 
\label{fig:paths-problems}
\end{minipage}
\vspaceSQ{-0.75em}
\end{figure*}

\subsection{{Purified Transport} {\large\ding{184}}}
The FatPaths transport layer is inspired by the NDP~\cite{handley2017re} Clos design,
in that it removes virtually all TCP/Ethernet performance issues. First, if router queues
fill up, \emph{only packet payload is dropped}; headers with all the
metadata are preserved and the receiver has full information on the congestion in
the network and can pull the data from the sender at a rate dictated by the
evolving network conditions. Specifically, the receiver can request to change
a layer, if packets transmitted over this layer 
arrive without payload, indicating congestion.
Second, routers can prioritize (1) headers of packets that lost 
payload, and (2) retransmitted packets. Thus, congested flows
finish quickly and head-of-line-blocking is reduced. Third, senders transmit the
first RTT at line rate (no probing for available bandwidth). Finally,
router queues are shallow. The resulting transport layer has low latency and
high throughput, it meets demands of various traffic patterns,
and it can be implemented with existing network technology.


\subsection{{Randomized Workload Mapping} {\large\ding{185}}}
We {optionally} assign communicating endpoints to routers \emph{randomly}.
\ifsc
This is often done in HPC; details are in the report.  We stress that this
scheme is \emph{even more beneficial in FatPaths} due to the {low diameter} of
targeted networks.
\fi
\iftr
This is often done in HPC.  We stress that this scheme is \emph{even more
beneficial in FatPaths} due to the {low diameter} of targeted networks.
Specifically, we place communicating endpoints at routers chosen u.a.r., to
distribute load more evenly over the network.
The motivation is as follows.  First, in future large-scale data centers and
supercomputers, locality cannot be effectively utilized: Paths of 0--1 hops
asymptotically disappear as $N$ grows (in the random uniform traffic).
Moreover, one often cannot rely on locality due to schedulers or
virtualization.
For example, Cray machines often host processes from one job in different
machine parts to increase utilization~\cite{bhatele2013there}.
Furthermore, many workloads, such as distributed graph processing, have little
or no locality~\cite{lumsdaine2007challenges}.
%
%
Finally, the low diameter of used topologies, especially $D=2$, mostly
eliminates the need for locality-aware software.
We predict that this will be a future trend as \emph{reducing cost and power
consumption with simultaneous increase in scale {is inherently associated with
reducing diameter}}~\cite{besta2014slim}.
Still, to cover applications tuned for locality, we show that {FatPaths also
ensures speedups of non-randomized workloads}.
\fi

\ifsc
\enlargethispage{\baselineskip}

\fi

\section{Path Diversity in Low-Diameter Topologies}
\label{sec:paths}

\iftr
FatPaths \emph{enables} using the diversity of paths in low-diameter topologies
for high-performance routing. 
\fi
To develop FatPaths, we first need to understand the ``nature'' of path
diversity that FatPaths benefits from. 
For this, we first show that {low-diameter topologies exhibit congestion due to
conflicting flows even in mild traffic scenarios}, and we derive the minimum
number of disjoint paths that eliminate flow conflicts (\cref{sec:collisions}).
We then formalize the ``path diversity'' notion
(\cref{sec:measuring-diversity}) to show that {all low-diameter topologies have
few shortest but enough non-minimal paths to accommodate flow collisions, an
important type of flow conflicts} (\cref{sec:collisions-details}). 
\ifall
In evaluation (\cref{sec:eval}), we show that another type of flow conflicts,
flow overlaps, is also alleviated by FatPaths.
\fi
To the best of our knowledge, {we provide the most extensive analysis of path
diversity in low-diameter networks} (considering the number of path
diversity metrics and topologies), cf.~Related Work.
%
%
\ifsc
We summarize key insights; full data is
in the report (the link is on page~1).
\fi

\iftr
The key motivation for this analysis is that modern low-diameter topologies
have good \emph{expansion
properties}~\cite{hoory2006expander, valadarsky2015}. Intuitively, such a (sparse) network should
\emph{not} offer much path diversity. We illustrate that, {contrarily to this
intuition}, modern low-diameter topologies {do indeed} offer path diversity
that is enough to ensure high performance of specified traffic patterns without
congestion, when considering non-minimal paths. Finally, our goal is {to also
provide deeper understanding of how to analyze path diversity}.
\fi

\ifsc
\fi

\subsection{How Much Path Diversity Do We Need?}
\label{sec:collisions}


%

\iftr
FatPaths uses path diversity {to avoid congestion} due to \emph{conflicting
flows}, i.e., flows that share at least one inter-router link.  Such conflicts
depend on both how communicating endpoints are mapped to routers and the
topology details.
Consider two communicating pairs of endpoints. The generated flows
\emph{conflict} when their paths \textbf{collide} (flows use an identical path)
{or} \textbf{overlap} (flows share some links), see
Figure~\ref{fig:paths-problems}. 
Collisions are caused by \emph{workload mapping}, when communicating endpoint
pairs occupy the same router pairs (i.e., sources and destinations are attached
to the same router pair). Thus, collisions \emph{only} depend on concentration
$p$ and the number of routers $N_r$.
Contrarily, overlaps depend on \emph{topology details} (i.e., connections
\emph{between routers}). Thus, overlaps capture {how well a topology can
sustain a given workload}. 
\fi

\ifsq
FatPaths uses path diversity {to avoid congestion} due to \emph{conflicting
flows}. Consider two communicating pairs of endpoints. The generated flows
\emph{conflict} when their paths \textbf{collide} (flows use an identical path)
{or} \textbf{overlap} (flows share some links), see
Figure~\ref{fig:paths-problems}. 
Collisions are caused by \emph{workload mapping}, when communicating endpoint
pairs occupy the same router pairs. Thus, collisions \emph{only} depend on
concentration $p$ and \#routers $N_r$.
Contrarily, overlaps depend on \emph{topology details} (i.e., connections
\emph{between routers}). Thus, overlaps capture {how well a topology can
sustain a given workload}. 
\fi

\ifsq\enlargethispage{\baselineskip}\fi

To understand how much path diversity is needed to alleviate flow conflicts, we
analyze the impact of topology properties ($D$, $p$, $N$) and a traffic pattern
on the number of colliding paths, see Figure~\ref{fig:path_collisions}. 
\iftr
We consider five traffic
patterns: a random permutation, a randomly-mapped off-diagonal, a randomly
mapped shuffle, four random permutations in parallel, and a randomly mapped
4-point stencil composed of four off-diagonals. The last two patterns are
4$\times$ oversubscribed and thus \emph{expected to generate even more
collisions}.
\fi
For $D > 1$, the number of collisions is at most \emph{three} in most 
cases, especially when lowering $D$ (while increasing $p$). Importantly,
this holds for the adversarial 4$\times$ oversubscribed patterns that
stress the interconnect. 
%
%
For $D=1$, at least nine collisions occur for more than 1\% of router pairs,
even in mild traffic patterns. 
\ifall
Here, one could significantly reduce $p$ in order to eliminate congestion. 
\fi
While we do not consider $D=1$ in practical applications, we indicate that
global DF links form a complete graph, demanding high path diversity at least
with respect to the global links.

\noindent\macb{\ul{Takeaway} }
We need {at least three disjoint paths per router pair to handle colliding
paths} in considered workloads, assuming random mapping.
Now, we observe that there are at least as many overlapping paths as colliding
paths (as seen from a simple counting argument: for each pair of colliding
flows~$x$ and $y$, any other flow in the network may potentially overlap with
$x$ and $y$). Thus, {the same holds for overlaps}.

\ifsc
\vspace{-0.25em}
\fi
\subsection{How Should We Measure Path Diversity?}
\label{sec:measuring-diversity}

To analyze if low-diameter topologies provide at least three disjoint paths per
router pair, we need to first {formalize} the notion of ``disjoint paths'' and
''path diversity'' in general.  For example, we must be able to distinguish
between {partially or fully disjoint} paths that may have {different lengths}.
Thus, we first define \emph{the count of disjoint paths} (CDP), minimal and
non-minimal, between routers (\cref{sec:ec}).  These measures address path
{collisions}.
Moreover, to analyze path {overlaps}, we define two further measures:
\emph{path interference} ({PI}, \cref{sec:pi}) and \emph{total network load}
({TNL}, \cref{sec:tnl}). 
\iftr
Intuitively, they can be seen as tools for estimating path overlap using local
and global topology properties, respectively.
\fi
We summarize each measure and we provide all formal details for
reproducibility; these details can be omitted by readers only interested in
intuition.
We use several measures because any single measure that we tested {cannot fully
capture the rich concept of path diversity}.

%


\iftr
\begin{figure*}[b]
\vspaceSQ{-1em}
\centering
\includegraphics[width=0.7\textwidth]{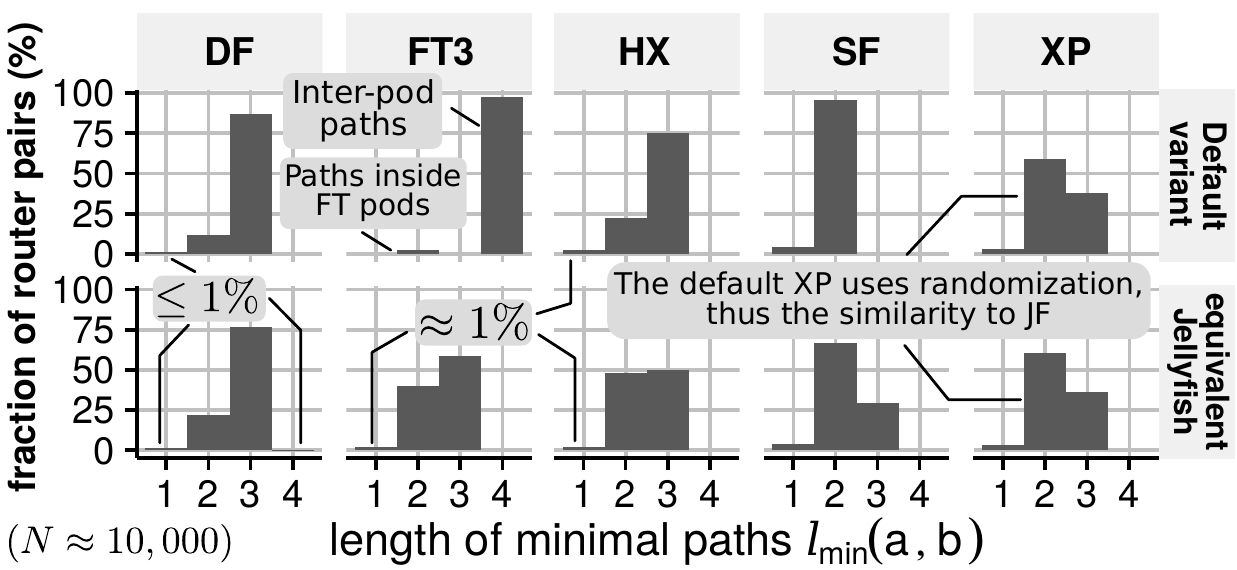}\\%
\includegraphics[width=0.7\textwidth]{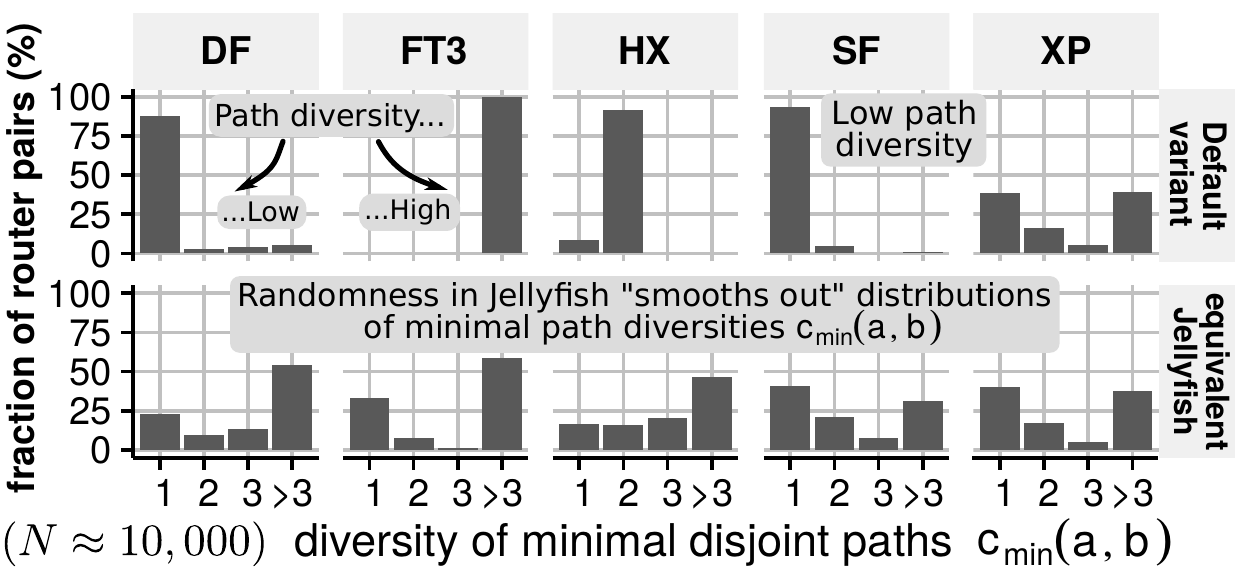}%
\vspaceSQ{-0.5em}
\caption{\textmd{Distributions of
lengths and counts of shortest paths.}}
\vspaceSQ{-1em}
\label{fig:shortest_path_length}
\label{fig:shortest_path_multiplicity}
\end{figure*}
\fi

\subsubsection{\textbf{Count of Disjoint Paths (CDP)}}
\label{sec:ec}

\ifsq\enlargethispage{\baselineskip}\fi
 
We define the count of \emph{disjoint} paths (CDP) between router sets $A, B
\subseteq V$ at length $l$ as the {smallest} number $c_l(A,B)$ of edges
that must be removed so that {no path of length at most $l$ exists from
any router in $A$ to any router in $B$}.
%
%
To compute $c_l(A,B)$, first define the $l$-step neighborhood $h^l(A)$ of a
router set $A$ as ``a set of routers at $l$ hops away from $A$'':

\vspaceSQ{-1em}
\small
\begin{align}
h(A) &= \{t \in V: \exists_{s \in A}\ \{s,t\} \in E\}\quad \text{(``routers attached to $A$'')} \nonumber\\
h^l(A) &= \underbrace{h(\cdots h(}_{l \text{\ times}} A)\cdots)\quad \text{(``$l$-step neighborhood of $A$'')}.\nonumber
\end{align}
\normalsize
\vspaceSQ{-1em}

%
Now, the condition that no path of length at most $l$ exists between any router
in $A$ to any router in $B$ is $h^l(A) \cap B = \emptyset$. To derive the
values of $c_l(A,B)$, we use a variant of the Ford-Fulkerson
algorithm~\cite{ford1956maximal} (with various pruning heuristics) that removes
edges in paths between designated routers in $A$ and $B$ (at various distances
$l$) and verifies whether $h^l(A) \cap B = \emptyset$. We are most often
interested in pairs of designated routers $s$ and $t$, and we use $A = \{s\}, B
= \{t\}$.




\textbf{Minimal paths} are vital in congestion reduction as they use fewest
resources for each flow. We derive the {distribution} of minimal path
\emph{lengths} $l_\text{min}$ and \emph{counts} $c_\text{min}$.  Intuitively,
$l_\text{min}$ describes (statistically) \emph{distances} between any router
pairs while $c_\text{min}$ provides their respective \emph{diversities}.
%
%
%
We have:

\vspaceSQ{-1em}
\small
\begin{align}
l_\text{min}(s,t) &= \argmin_{i\in \mathbb{N}}\{t \in h^i(\{s\})\}\quad \text{(``minimal path lengths'')}\nonumber\\
c_\text{min}(s,t) &= c_{l}(\{s\},\{t\}) \text{ with } l=l_\text{min}(s,t)\quad \text{(``minimal path counts'')}\nonumber
\end{align}
\normalsize

\noindent
Note that the diameter~$D$ equals $\max_{s,t}\ l_\text{min}(s,t)$.




To analyze \textbf{non-minimal paths}, we reuse the count of disjoint paths CDP
$c_{l}(A,B)$ of random router pairs $s \in A, t \in B$, but with path lengths
$l$ {larger than} $l_\text{min}(s,t)$ ($l > l_\text{min}(s,t)$).  Here, we are
interested in distributions of {counts} of non-minimal paths for fixed
non-minimal distances~$l$.

\subsubsection{\textbf{Path Interference (PI)}}
\label{sec:pi}

With Path Interference (PI), we want to quantify path overlaps.  This is
challenging because overlaps depend on the details of the structure of each
topology as well as on workload mappings.  Thus, a PI definition must be
\emph{local} in that it should consider \emph{all} router pairs that may
possibly communicate.  Consider two router pairs $a,b$ and $c,d$ where $a$
communicates with $b$ and $c$ communicates with $d$.  Now, paths between these
two pairs \emph{interfere} if their total count of disjoint paths (at a given
distance~$l$), $c_{l}(\{a,c\},\{b,d\})$, is lower than the sum of individual
counts of disjoint paths (at $l$): $c_{l}(\{a\},\{b\}) + c_{l}(\{c\},\{d\})$. 
We denote path interference with $I^l_{ac,bd}$ and define it as

\ifsq
\vspace{-0.25em}
\vspace{-0.75em}
\fi
$$
I^l_{ac,bd} = c_{l}(\{a, c\},\{b\}) + c_{l}(\{a, c\},\{d\}) - c_{l}(\{a,c\},\{b,d\})
$$
\ifsq
\vspace{-0.75em}
\vspace{-0.25em}
\fi

Path interference captures and quantifies the fact that, if $a$ and $b$
communicate \emph{and} $c$ and $d$ communicate \emph{and} the flows between
these two pairs use paths that are {not} fully disjoint (due to, e.g., not
ideal routing), then the available bandwidth between any of these two pairs of
routers is reduced.

\ifall\m{fix}
If $I^l_{ac,bd}$ is small,
the bandwidth loss can be made little with proper routing
(because there is little inherent path overlap between $\{a,c\}$
and $\{b,d\}$). Contrarily, for large $I^l_{ac,bd}$

This is because
some paths $c_l(\{a\}, \{b\}) > 0$ and $c_l(\{c\}, \{d\}) > 0$

a value proportional to $I^l_{ac,bd}$.

the available bandwidth between $c$ and $d$ is reduced, because the
number of disjoint paths (at~$l$) $c_l(\{c\}, \{d\})$ is smaller
precisely by the
%
%
\fi

\vspaceSQ{-0.45em}
\subsubsection{\textbf{Total Network Load (TNL)}}
\label{sec:tnl}


TNL is a simple {upper bound on the number of flows that a network can maintain
without congestion} (i.e., without flow conflicts). 
There are $k' N_r$ links in a topology.  Now, a flow occupying a path of length
$l$ ``consumes'' $l$ links.  Thus, with the average path length of~$d$, TNL is
defined as
%
%
$\frac{k' N_r}{d}$, because $\#flows \le \frac{k' N_r}{d}$.  
Thus, TNL constitutes the maximum \emph{supply of path diversity} offered by a
specific topology.

\noindent
\macb{\ul{Takeaway} }
We suggest to use several measures to analyze the rich nature of path
diversity, e.g., the count of minimal and non-minimal paths (for collisions),
and path interference as well as the total network load (for overlaps).

\ifsc
\begin{figure}[h]
\vspaceSQ{-1em}
\centering
\includegraphics[width=0.95\columnwidth]{shortest_path_length_edited___mac_noHX2.pdf}\\%
\includegraphics[width=0.95\columnwidth]{shortest_path_multiplicity_edited___mac_noHX2.pdf}%
\vspaceSQ{-0.5em}
\caption{\textmd{Distributions of
lengths and counts of shortest paths.}}
\vspaceSQ{-1em}
\label{fig:shortest_path_length}
\label{fig:shortest_path_multiplicity}
\end{figure}
\fi

\begin{table*}[h]
\vspaceSQ{-1.25em}
\begin{minipage}[b]{0.65\textwidth}
\centering
\includegraphics[width=1.0\textwidth]{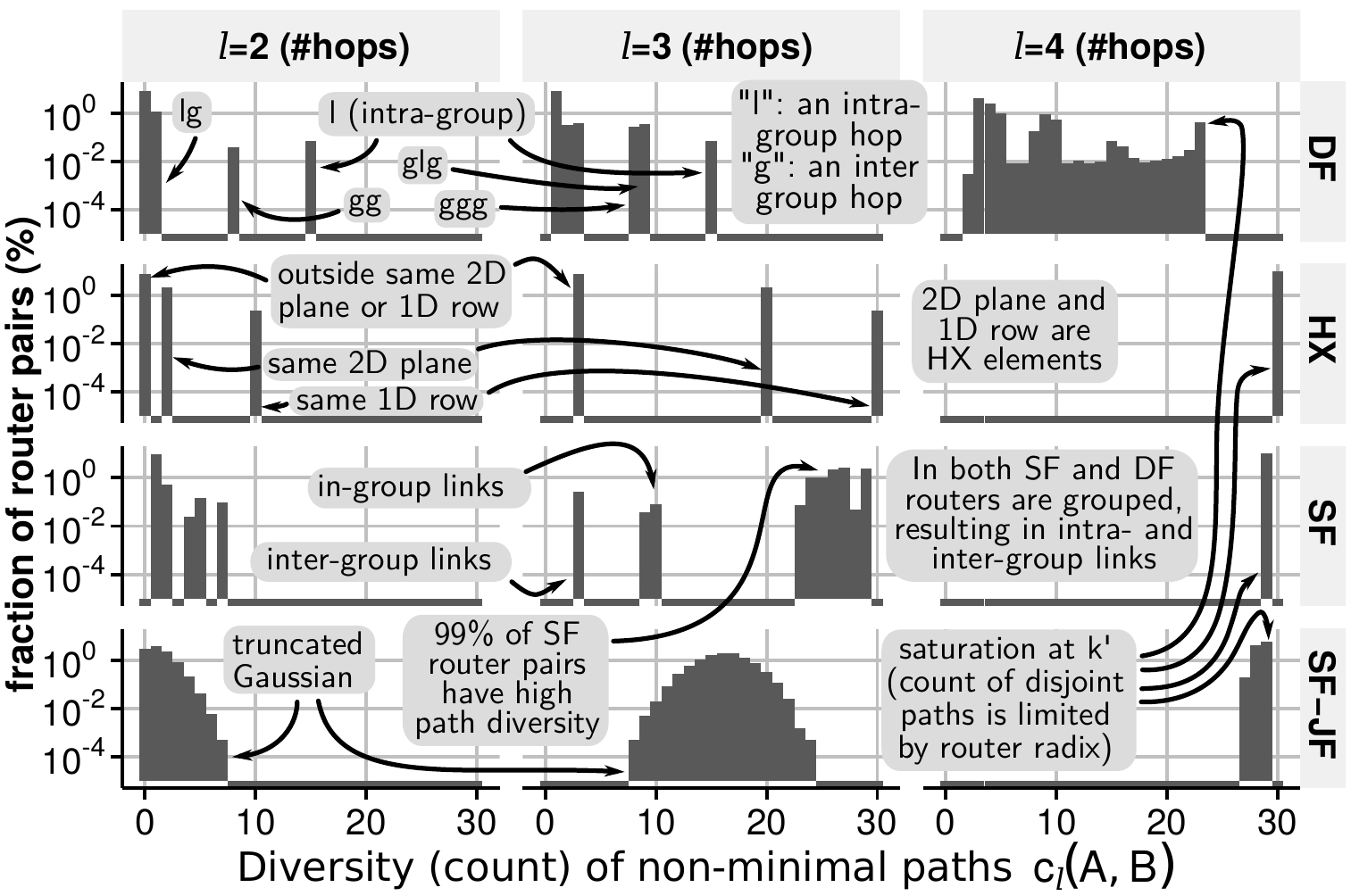}
\vspaceSQ{-1.5em}
\captionof{figure}{\textmd{Distribution of the number of non-minimal edge-disjoint paths
with up to $l$ hops ($c_l(A,B)$) between random router pairs ($N \approx
10,000$).}}
\vspaceSQ{-0.5em}
\label{fig:edge_disjoint_paths}
\end{minipage}
\hfill
\begin{varwidth}[b]{0.32\textwidth}
\centering
\centering
\setlength{\tabcolsep}{1pt}
\renewcommand{\arraystretch}{1.1}
\ssmall
\sffamily
\begin{tabular}{lcrrrr}
\toprule
\multicolumn{6}{c}{\textbf{Topology parameters}} \\ 
\midrule
& $d'$ & $D$ &$k'$ & $N_r$  & $N$ \\
\midrule
clique   & 2   & 1 & 100 & 101  & 10100 \\
SF       & 3   & 2 & 29  & 722  & 10108 \\
XP       & 3   & 3 & 32  & 1056 & 16896 \\
HX      & 3   & 3 & 30  & 1331 & 13310 \\
DF       & 4   & 3 & 23  & 2064 & 16512 \\
FT3      & 4   & 4 & 18  & 1620 & 11664 \\
\midrule
\multicolumn{6}{c}{\textbf{Default topology variant}} \\
& $d'$ & \textbf{CDP} (mean) & \textbf{CDP} (1\% tail) & \textbf{PI} (mean) & \textbf{PI} (99.9\% tail) \\ 
\midrule
clique     & 2 & 100\% & 100\% & 2\% & 2\% \\
SF         & 3 & 89\%  & 10\%       & 26\%  & 79\%  \\
XP         & 3 & 49\%  & 34\%    & 20\%  & 41\%    \\
HX        & 3  & 25\%  & 10\%       & 9\%   & 67\% \\ 
DF        & 4 & 25\%  & 13\%       & 8\%   & 74\%  \\
FT3       & 4 & 100\% & 100\%    & 0     & 0     \\ 
\midrule
\multicolumn{6}{c}{\textbf{Equivalent Jellyfish}} \\
& $d'$ & \textbf{CDP} (mean) & \textbf{CDP} (1\% tail) & \textbf{PI} (mean) & \textbf{PI} (99.9\% tail) \\
\midrule
SF-JF         & 3 & 56\%  & 38\%   & 23\%  & 45\% \\
XP-JF         & 3   & 51\%  & 34\%   & 21\%  & 41\% \\
HX-JF        & 3   & 50\%  & 23\%   & 17\%  & 37\% \\
DF-JF         & 4   & 87\%  & 78\%   & 13\%  & 26\% \\
FT3-JF        & 4   & 96\%  & 90\%   & 5\%   & 14\% \\
\bottomrule
\end{tabular}
\vspace{-0.5em}
%
%
\caption{\textmd{Counts of disjoint non-minimal paths 
CDP $\left(c_{d'}(A,B)\right)$ and path interference PI $\left(I^{d'}_{ac,bd}\right)$ at distance
$d'$; $d'$ is chosen such that the tail $c_{d'}(A,B)
\ge 3$. 
}}
\vspaceSQ{-3em}
\label{tab:measures}
\end{varwidth}
\end{table*}

\subsection{Do We Have Enough Path Diversity?}
\label{sec:collisions-details}


We now use our measures for path diversity analysis.

\ifsq\enlargethispage{\baselineskip}\fi

\subsubsection{{Analysis of Minimal Paths}}

Selected results for {minimal paths} are in
Figure~\ref{fig:shortest_path_length}. In DF and SF, most routers are connected
with one minimal path. In XP, more than 30\% of routers are connected with
\emph{one} minimal path. 
Only in HX, most router pairs have two minimal paths. 
In JF, the results are more leveled out, but pairs of routers with one shortest
part in-between still form large fractions. 
FT3 and HX show the highest diversity, with very few unique minimal paths,
while the matching JFs have lower diversities. 
The results match the structure of each topology (e.g., one can distinguish
intra- and inter-pod paths in FT3).
%
%
%
%
%
%
%

\noindent\macb{\ul{Takeaway} }
In all the considered low-diameter topologies, {shortest paths
fall short:} at least a large fraction of router pairs are connected with
\emph{only one} shortest path.

%

\subsubsection{{Analysis of Non-Minimal Paths}}

\iftr
Based on the collision multiplicity found in~\cref{sec:collisions}, we aim to
obtain at least three disjoint paths per router pair. 
\fi
For {non-minimal paths}, we first summarize the results in
Table~\ref{tab:measures}.  We report counts of disjoint paths as
\emph{fractions of router radix $k'$} to make these counts radix-invariant.
For example, the mean CDP of 89\% in SF means that 89\% of router links host
disjoint paths.
In general, all deterministic topologies provide higher disjoint path diversity
than their corresponding JFs, but there are specific router pairs with lower
diversity that lead to undesired tail behavior. JFs have more predictable tail
behavior due to the Gaussian distribution of $c_{l}(A,B)$.
A closer analysis of this distribution (Figure~\ref{fig:edge_disjoint_paths})
reveals details about each topology. For example, 
for HX, router pairs can clearly be separated into classes sharing zero, one,
or two coordinate values, corresponding to the HX array structure. Another
example is 
SF, where lower $c_{l}(A,B)$ are related to pairs connected with an edge while
higher $c_{l}(A,B)$ in DF are related to pairs in the same group or pairs
connected with specific sequences of local and global links.
\ifsc
We describe all remaining data in the extended report.
\fi

\noindent\macb{\ul{Takeaway} }
\ifsc
Overall, {{considered topologies have 3 disjoint
``almost''-minimal (one hop longer) paths per router pair}}.
\fi
\iftr
\emph{\textbf{Considered topologies provide three disjoint
``almost''-minimal (one hop longer) paths per router pair}}.
\fi

\iftr
Note that $d$ depends on used routing and may be larger or smaller than $D$:
with minimal routing, $d \leq D$. With non-minimal routing, it can be larger:
Valiant's algorithm doubles $d$ compared to minimal routing.  This cannot
always be avoided (e.g., it affects cliques). While $D=1$ promises a
high possible $p$, this configuration leads to a high number of path collisions
which need to be handled using length~2 paths. Those increase $d$, which in
turn requires $p$ to be reduced to avoid congestion, according to the TNL
argument. Yet, for randomized workloads and $D \geq 2$, as we showed in
\cref{sec:collisions}, the total amount of path collisions is low, which means
non-minimal routing is feasible without decreasing $p$.
\fi



\ifsq\enlargethispage{\baselineskip}\fi

\subsubsection{{Analysis of Path Interference}}

Next, we sample router pairs u.a.r.~and derive full \textbf{path interference}
distributions; they all follow the Gaussian distribution.
\iftr
Selected results are in Figure~\ref{fig:interference-dist}
(we omit XP and XP-JF; both are nearly identical to SF-JF)
\fi
As the combination space is very large, most samples fall into a common case, where
PI is small. We thus provide full data in the report and focus on the extreme tail of the
distribution (we show both mean and tail), see Table~\ref{tab:measures}.
We use
radix-invariant PI values (as for CDP) at a distance $d'$ selected
to ensure that the 99.9\% tail of collisions $c_{d'}(A,B)$ is at least $3$.
Thus, we analyze PI in cases where demand from a workload outgrows 
the ``supply of path diversity'' from a network (three disjoint paths per router pair).
%
%
%
%
All topologies except for DF achieve negligible PI for $d'=4$, but
the diameter-2 topologies do experience PI at $d'=3$. SF 
shows the lowest PI in general, but has (few) high-interference outliers.
\ifsc
{In general, random }JFs{ have higher average PI but less PI in
tails, while deterministic topologies tend to perform better on average with
worse tails}. 
\fi
\iftr
{In general, random }JFs{ have higher average PI but less PI in tails.
Deterministic topologies perform better on average, but with worse tails;
randomization helps to improve tail behavior.
\fi

\ifall
\maciej{?}
Full results on the distribution of the number of non-minimal paths
between random router pairs, and on the distribution of path
interference, are presented in Figures~\ref{fig:app-edge_disjoint_paths_all}
and~\ref{fig:app-path-interference-full}.
\fi

\iftr

\begin{figure*}[b]
\vspaceSQ{-0.25em}
\centering%
\includegraphics[width=0.87\textwidth]{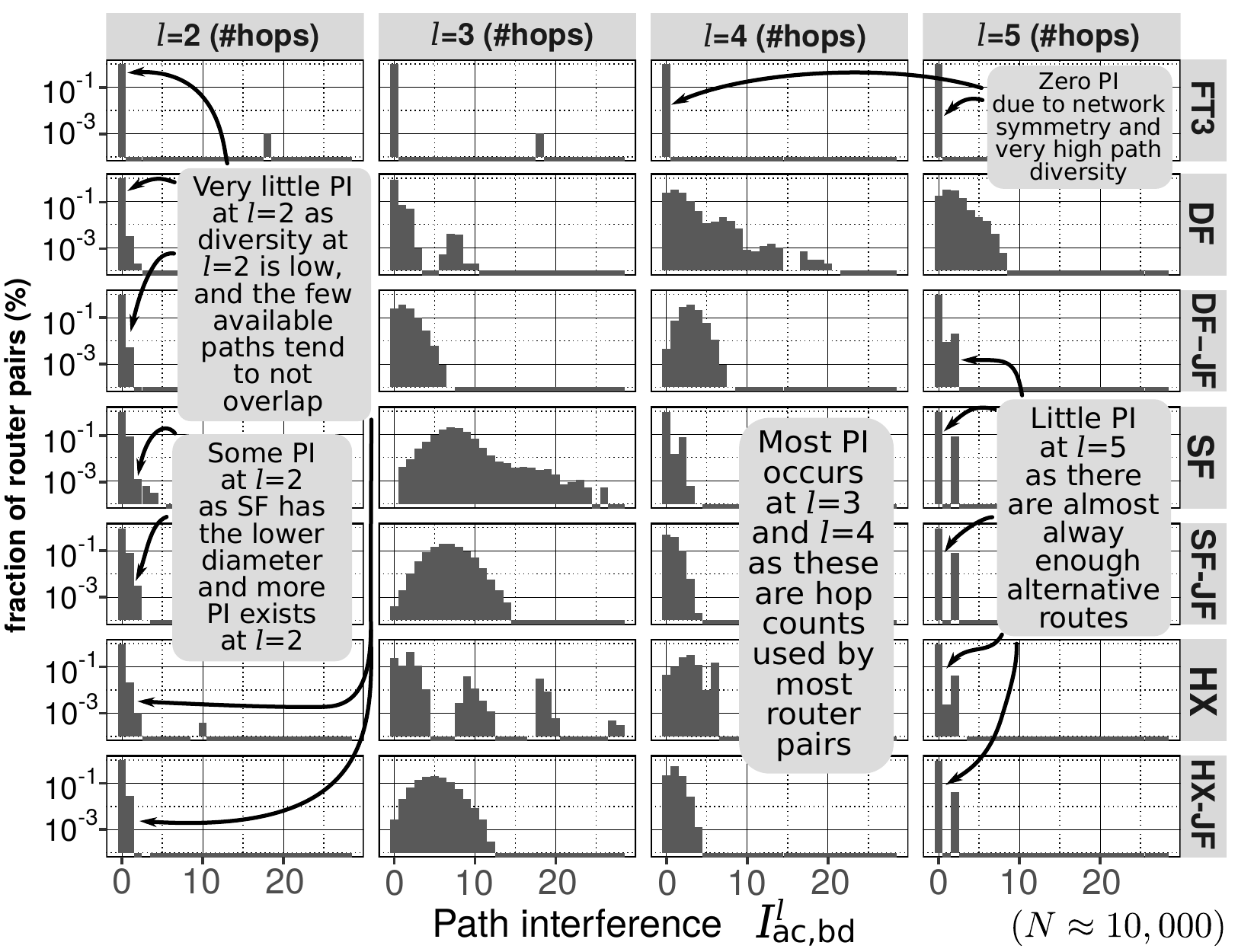}
\vspaceSQ{-1.5em}
\caption{\textmd{Distribution of path interference PI $\left(I^l_{ac,bd}\right)$ at 
various distances $l$.}}
\vspaceSQ{-1em}
\label{fig:interference-dist}
\end{figure*}

\fi

\ifall

\begin{table}
\centering
\setlength{\tabcolsep}{1pt}
\ssmall
\sffamily
\begin{tabular}{lccrrr|rrrr|rrrr}
\toprule
\multicolumn{6}{c|}{\textbf{Topology parameters}} & \multicolumn{4}{c|}{\textbf{Default topology variant}} & \multicolumn{4}{c}{\textbf{Equivalent Jellyfish}} \\
 & & & & & & CDP & CDP & PI & \multicolumn{1}{l|}{\makecell[c]{PI}} & CDP & CDP & PI & \multicolumn{1}{l}{\makecell[c]{PI}}\\
\midrule
& $D$ & $d'$ &$k'$ & $N_r$  & $N$ & mean & \makecell[c]{1\% tail} &  mean & \makecell[c]{99.9\% tail} &  mean & \makecell[c]{1\% tail} &  mean & \makecell[c]{99.9\% tail} \\
\midrule
clique   & 1   & 2 & 100 & 101  & 10100 & 100\% & 100\% & 2\% & 2\% & -- & -- & -- & -- \\
SF       & 2   & 3 & 29  & 722  & 10108 & 89\%  & 10\%       & 26\%  & 79\%     & 56\%  & 38\%   & 23\%  & 45\% \\
XP       & 3   & 3 & 32  & 1056 & 16896 & 49\%  & 34\%    & 20\%  & 41\%     & 51\%  & 34\%   & 21\%  & 41\% \\
HX      & 3   & 3 & 30  & 1331 & 13310 & 25\%  & 10\%       & 9\%   & 67\%     & 50\%  & 23\%   & 17\%  & 37\% \\
DF       & 3   & 4 & 23  & 2064 & 16512 & 25\%  & 13\%       & 8\%   & 74\%     & 87\%  & 78\%   & 13\%  & 26\% \\
FT3      & 4   & 4 & 18  & 1620 & 11664 & 100\% & 100\%    & 0     & 0        & 96\%  & 90\%   & 5\%   & 14\% \\
\bottomrule
\end{tabular}
\vspace{-0.5em}
%
%
\caption{\textmd{Counts of disjoint non-minimal paths 
CDP $\left(c_{d'}(A,B)\right)$ and path interference PI $\left(I^{d'}_{ac,bd}\right)$ at distance
$d'$; $d'$ is chosen such that the tail $c_{d'}(A,B)
\ge 3$. 
}}
\vspace{-3em}
\label{tab:measures}
\end{table}

\fi

\ifall
\begin{figure*}
\centering%
\includegraphics[width=\textwidth]{edge_disjoint_paths_all___mac.pdf}
\vspaceSQ{-2em}
\caption{\textmd{\textbf{Numbers of disjoint paths of various lengths} ($N \approx 10,000$).
Equivalent Jellyfish networks are denoted with ``-JF''.}}
\label{fig:app-edge_disjoint_paths_all}
\end{figure*}
\fi

\ifall
\begin{figure*}
\centering
\includegraphics[width=1\textwidth]{sf_connectivity_problems___mac.eps}
\vspace{-1.5em}
\caption{\maciej{?}\textbf{A map of the low-connectivity node pairs} in a very small SF instance
($N_r=98$). We show pairs with an edge-disjoint connectivity $c_{A,B} < 0.75
k'$ at a length of 3 (black symbols), as an adjacency matrix. For comparison, edges are also
marked (gray symbols). There are two classes of bad connectivity. Both are related to edges
between these pairs, but not all edges are affected by the problems.
}
\label{fig:app-sf_connectivity_problems}
\end{figure*}
\fi

\ifall
\begin{figure*}
\centering%
\includegraphics[width=\textwidth]{interference_topo_all___mac.pdf}
\vspace{-2em}
\caption{\textmd{\maciej{?}\textbf{Path interference at various distances} ($N \approx 10,000$).
All topologies except DF achieve negligible interference at $l = 4$, but the
diameter-2 topologies do experience interference problems at $l = 3$, where
they should already achieve high connectivity. Again, Slim Fly shows low
interference values in general, but some very high outliers. Note that these
results are not normalized by network radix; the interference value is limited
by the radix $k'$. The value does not saturate to 0 if the destinations are
connected by an edge, since this edge cannot contribute in the combined case.
Corresponding Jellyfish networks are denoted with ``-JF''.}
}
\label{fig:app-path-interference-full}
\end{figure*}
\fi

\vspaceSQ{-0.25em}
\subsection{Final Takeaways on Path Diversity}
\label{sec:paths-discussion}

%
We show a fundamental tradeoff between path length and diversity. High-diameter
topologies (e.g., FT) have high path diversity, even on minimal paths.
Yet, due to longer paths, they need more links for an equivalent $N$ and
performance. Low-diameter topologies {fall short of shortest paths}, but
{do provide enough diversity of non-minimal paths, requiring non-minimal
routing}. Yet, this may reduce the cost advantage of low-diameter networks
\emph{with adversarial workloads} that use many non-minimal paths,
consuming additional links. \emph{{Workload randomization} in FatPaths
suffices to avoid this effect}.
{\textbf{Overall, low-diameter topologies host enough path
diversity for alleviating flow conflicts. We now show how to
effectively use this diversity in FatPaths.}}

\ifsq\enlargethispage{\baselineskip}\fi

\section{FatPaths: Design and Implementation}
\label{sec:routing}

{FatPaths is a high-performance, simple, and robust \emph{routing
architecture} that uses rich path diversity in low-diameter topologies}
{to enhance Ethernet stacks in data centers and supercomputers}. FatPaths
aims to accelerate both datacenter and HPC workloads.
We outlined FatPaths in~\cref{sec:overview}. Here, we detail the layered
routing scheme that is capable of encoding the rich diversity of both minimal
and non-minimal paths, and can be implemented with commodity Ethernet hardware. 
%
%

%

\iftr
\subsection{Routing Model}

We assume simple destination-based routing, compatible with any
relevant technology, including source-based systems like NDP. To compute the
output port $j \in \{1,\dots,k'\}$ in a router~$s \in V$ for a packet addressed
to a router~$t \in V$, and simultaneously the ID of the next-hop router $s' \in
V$, a routing function $(j,s') = \sigma(s,t)$ is evaluated.  By iteratively
applying $\sigma$ with fixed $t$ we eventually reach $s' = t$ and finish. We
call the $l+1$ item sequence  $(s, s', \dots, t)$ a \emph{path} of length $l$.
The forwarding function $\sigma$ must be defined such that a path from any $s$
to any $t$ is \emph{loop-free}. 
For \emph{layered} routing in FatPaths, we use routing functions
$\sigma_1, \dots, \sigma_n$, where each router uses function $\sigma_i$ for a
packet with a \emph{layer tag} $i$ attached. The layer tags are chosen on the
endpoint by the adaptivity algorithm. 
\fi

\ifsq
\vspaceSQ{-0.25em}
\subsection{Routing Model}

We assume simple destination-based routing, compatible with any
relevant technology, including source-based systems like NDP. To compute the
output port $j \in \{1,\dots,k'\}$ in a router~$s \in V$ for a packet addressed
to a router~$t \in V$, and simultaneously the ID of the next-hop router $s' \in
V$, a routing function $(j,s') = \sigma(s,t)$ is evaluated.  By iteratively
applying $\sigma$ with fixed $t$ we eventually reach $s' = t$ and finish. 
%
%
The function $\sigma$ must ensure no loops on any path. 
\fi

\vspaceSQ{-0.25em}
\subsection{Layered Routing in FatPaths}
\label{sec:routing-layers}

We use $n$ routing functions
$\sigma_1, \dots, \sigma_n$ for $n$ layers. Each router uses $\sigma_i$ for a
packet with a \emph{layer tag} $i$ attached. The layer tags are chosen on the
endpoint by the adaptivity algorithm. 
\iftr
We use $n$ layers associated with $n$ routing functions.  Each router uses the
$i$-th routing function, denoted as $\sigma_i$, for a packet with a \emph{layer
tag} $i$ attached. 
\fi
All layers but one accommodate a fraction of links, maintaining non-minimal
paths.  One layer (associated with $\sigma_1$) uses all links to host minimal
paths.
%
%
The fraction of links in one layer is controlled by $\rho \in [0;1]$.
%
%
Now, the interplay between $\rho$ and $n$ is important.  More layers (higher
$n$) that are sparse (lower $\rho$) give more paths that are long, giving more
path diversity, but also more wasted bandwidth (as paths are long). More layers
that are dense reduce wasted bandwidth but also give fewer disjoint paths.
Still, this may be enough as we need three paths per router pair.  \emph{One
ideally needs more dense layers or fewer sparse layers}.
%
%
Thus, an important part of deploying FatPaths is selecting the best $\rho$ and $n$ for
a given network.
\iftr
Assuming enough minimal path diversity, we can obtain different routes even
with the same layer. Thus, we try to pick different next-hop choices for each
layer if there is enough minimal-path diversity in the layer. We evaluate
different fractions  of links per layer to find a good compromise, which can
entail $\rho =1$ for all layers if there is high minimal-path diversity in the
topology.
\fi
To facilitate implementation of FatPaths, the project repository contains layer
configurations ($\rho, n$) that ensure high-performance routing for used
topologies.  \emph{We analyze performance of different $\rho$ and $n$
in~\cref{sec:theory}} and~\cref{sec:eval}.

\iftr
\subsubsection{Interplay of Parameters: Discussion}

The layers are intentionally \emph{not} disjoint.
Specifically, we use minimum routing within the layers. Thus, to obtain enough
non-minimal paths, the layers need to by sparse. However, we also want to use
``not too many'' non-minimal paths (``just enough''), so having many edges per
layer is also important.

That is why $\rho$ is an adjustable parameter (that needs to be tuned at least
for a given topology and a given number of layers~$n$): $\rho$ needs to be low
enough to enable enough non-minimal paths at the given number of layers. Thus,
with many layers, one can use higher $\rho$ and still obtain enough path
diversity. The benefit is then having on average more shortest paths, and
therefore less total load (TNL), cf.~\cref{sec:tnl}.
This will be visible in Figure~\ref{fig:ndp_n_rho}: with lower $\rho$, we get
more longer paths, which can improve tail performance (thanks to better load
balancing) but has a cost in average throughput (due to the higher TNL). For
both very low and very high $\rho$, both tail performance and the average
throughput suffer. 

In practice, with the FatPaths layered routing, we obtain many non-disjoint paths
across multiple layers, where many of them are minimal. The FatPaths design
then leaves it up to the flowlet load balancing to decide which ones should be
picked for sending the next flowlet.
\fi

\ifsq\enlargethispage{\baselineskip}\fi

\ifsc
\fi


\iftr
\subsubsection{Layer Construction: Random Uniform Edge Sampling}
\fi

An overview of layer construction is in Listing~\ref{lst:layers}. We start
with one layer with all links, maintaining shortest paths.  We use $n-1$
random permutations of vertices to generate $n-1$ random layers.  Each such
layer is a subset $E' \subset E$ with $\lfloor\rho\cdot |E|\rfloor$ edges
sampled u.a.r.. $E'$ may disconnect the network, but for
the used values of $\rho$, this is unlikely and a small number of attempts
delivers a connected network.
{Note that the algorithm for constructing layers is general and can be used with
any topology; cf.~{\cref{sec:back_topos}} and Section~{\ref{sec:discussion}}.}

\ifsc
\begin{lstlisting}[aboveskip=-0.5em,abovecaptionskip=-0.05em,belowskip=-1.5em,float=!h,label=lst:layers,caption=\textmd{
Overview of the algorithm for constructing routing layers.}]
$L$ = $\{E\}$ //|Init a set of layers $L$; we start with $E$ that corresponds to $\sigma_1$|
$P$ = $\{\pi_1(V), ..., \pi_{n-1}(V)\}$ //|Generate $n-1$ random permutations of vertices|
foreach $\pi \in P$ do: //|One iteration derives one layer associated with some $\sigma_i$| 
  $E'$ = $\{\}$; foreach $(u,v) \in E$ do:
    //|Below, a condition "$\pi(u)$ <$ \pi(v)$" ensures layer's acyclicity|
    //|Below, a call to rnd(0,1) returns a random number $\in [0;1)$|
    if($\pi(u)$ < $\pi(v)$ and rnd(0,1) < $\rho$) then:
      $E'$=$E' \cup (u,v)$ //|Add a sampled edge to the layer|
  $L$ = $L \cup \{E'\}$  
\end{lstlisting}
\fi

\ifall
\begin{lstlisting}[aboveskip=-0.95em,abovecaptionskip=-0.05em,belowskip=-2.5em,float=!h,label=lst:layers,caption=\textmd{
\hll{Overview of the algorithm for constructing routing layers.}}]
$L$ = $\{E\}$ //|Init a set of layers $L$; we start with $E$ that corresponds to $\sigma_1$|
$P$ = $\{\pi_1(V), ..., \pi_{n-1}(V)\}$ //|Generate $n-1$ random permutations of vertices|
foreach $\pi \in P$ { //|One iteration of the \underline{main loop} derives one layer|
  $E'$ = layer($\pi$, method); //|Generate a layer with a selected method|
|\vspace{0.25em}|  $L$ = $L \cup \{E'\}$ /*|Store a new layer|*/ } //|End of the \underline{main loop}|
layer($\pi$, method) { //|Derive a layer that corresponds to some $\sigma_i$|
  if(method == simple) //|rnd(0,1) (below) returns a random number $\in [0;1)$|
    foreach $(u,v) \in E$ //|Below, "$\pi(u)$<$\pi(v)$" ensures layer's acyclicity|
      if($\pi(u)$<$\pi(v)$ and rnd(0,1)>$\rho$) then $E'$=$E' \cup (u,v)$
  else {/*|Here, more sophisticated methods can be used|*/} }
\end{lstlisting}
\fi

\iftr
\begin{lstlisting}[aboveskip=0em,abovecaptionskip=0.0em,belowskip=0.0em,float=*,label=lst:layers,caption=\textmd{
Overview of the algorithm for constructing routing layers.}]
$L$ = $\{E\}$ //Init a set of layers $L$; we start with $E$ that corresponds to $\sigma_1$
$P$ = $\{\pi_1(V), ..., \pi_{n-1}(V)\}$ //Generate $n-1$ random permutations of vertices
foreach $\pi \in P$ do: //One iteration derives one layer associated with some $\sigma_i$ 
  $E'$ = $\{\}$; foreach $(u,v) \in E$ do:
    //Below, a condition "$\pi(u)$ <$ \pi(v)$" ensures layer's acyclicity, if needed 
    //Below, a call to rnd(0,1) returns a random number $\in [0;1)$
    if($\pi(u)$ < $\pi(v)$ and rnd(0,1) < $\rho$) then:
      $E'$ = $E' \cup (u,v)$ //Add a sampled edge to the layer
  $L$ = $L \cup \{E'\}$  
\end{lstlisting}
\fi

\iftr
\begin{lstlisting}[aboveskip=0em,abovecaptionskip=0em,belowskip=0em,float=*,label=lst:layersFp,caption=\textmd{
Constructing routing layers in FatPaths {(a variant that reduces Path Overlap)}.}]
$L$ = $\{E\}$ //Init a set of layers $L$; we start with $E$ that corresponds to $\sigma_1$.
$\Pi$ = $\{\pi_1(V), ..., \pi_{n-1}(V)\}$ //Generate $n-1$ random permutations of vertices.

//Init a matrix $W$ containing weights of edges $(u,v) \in E$.
$W = \{ [w_{uv}] \quad | \quad \forall u, v \in V: w_{uv} = 0 \}$

foreach $\pi \in \Pi$ do:  //One iteration of the main loop derives one layer.
  $E'$ = create_layer($\pi$, $E$, $W$, $L_{min}$, $L_{max}$) //Generate a layer.
  $L$ = $L \cup \{E'\}$ //Record the layer

//Derive a layer that corresponds to some $\sigma_i$.
create_layer($\pi$, $E$, $W$, $L_{min}$, $L_{max}$):
  //A condition "$\pi(u) < \pi(v)$" ensures layer's acyclicity.
  $\mathcal{V} = \{ (u,v) \in V \times V : \pi(u) < \pi(v) \}$
  //Init a priority queue $Q$. One queue element is a pair of vertices from $\mathcal{V}$.
  $Q.init(\mathcal{V})$
  $incidence_G = $incidence_matrix$(G)$ //Generate an incidence matrix of $G$.
  $p_{cnt} = 0$ //Init path count variable $p_{cnt}$.
  while ($\mathcal{V} \neq \emptyset$ and $p_{cnt} < M$) do:
    $(u,v) = Q.pop()$ //Get a pair of vertices with lowest priority (lowest number of added paths).
    $path = (v_1, v_2 \dots, v_d) \gets $find_path$ (u, v, W, incidence_G, L_{min}, L_{max})$ //Find path from $u$ to $v$, with path of $length \in [L_{min}, L_{max}]$, minimizing the sum of edge weights, such that for $i < j$: $\pi(v_i) < \pi(v_j)$ and the available edges are given by the $incidence_G$ matrix.
    if $path$ exists: //Check if $u$ and $v$ are connected, in the graph given by the $incidence_G$ matrix.
      $p_{cnt} = p_{cnt} + 1$
      foreach $link \in path$ do:
        $E' = E' \cup \{link\}$ //Add each edge from $path$ to the current layer.
      foreach $v_i, v_j \in path$ where $|i - j| > 1$ do:
        $incidence_G[v_i][v_j] = 0$ //Exclude all edges, which will force the traffic from $v_1$ to $v_d$ to use a different path than one that was found.
      foreach  $v_i, v_j \in path$ where  $j - i < L_{min}$ do:
        $\mathcal{V} = \mathcal{V} \setminus (v_i, v_j)$  // Even if an additional path from $v_i$ to $v_j$ of length at least $L_{min}$ will be added, there will still exist this shorter path between these vertices (the one contained by the path $(v_1, v_2, \dots v_d)$), therefore the pair $(v_i, v_j)$ should be further excluded from the $\mathcal{V}$ set.

find_path($src, dst, W, incidence_G, L_{min}, L_{max}$):
  best_path = null //Init variable containing the path with lowest cost.
  $Q = \{\}$ //Init an empty queue.
  $Q.push(src)$
  while $Q \neq \emptyset$ do:
    $path = Q.pop()$
    if $path.last() == dst$ then:
      if $path.length() > L_{min}$ then:
        path_weight = compute_weights($W, path$) //Sum the weights of all edges that $path$ consists of.
        if path_weight $<$  compute_weights($W$, best_path) then:
          best_path = $path$ //Look for path with lowest cost.
    else if $path.length() < L_{max}$ then:
      foreach $neighbour$ of $path.last()$ where $neighbour \notin path$:
        $Q.push(path \cup \{neighbour\})$
  //Change the weights $W$ of edges from the best path.
  foreach edge $(v_i, v_{i+1}) \in $ best_path do:
    $W[v_i][v_{i+1}] += i \cdot ($best_path$.length() - 1 - i)$
\end{lstlisting}
\fi

\iftr
\subsubsection{Layer Construction: Minimizing Path Overlap}
\fi

We also use a variant in which, instead of randomized edge
picking while creating paths within layers, a simple heuristic
minimizes path interference. For each router pair, we pick a set of paths with
minimized overlap with paths already placed in any of the layers.  Most
importantly, while computing paths, \emph{we prefer paths that are one hop
longer than minimal ones, using the insights from the path diversity analysis
(\cref{sec:paths})}.
\iftr
The algorithm is depicted in Listing~\ref{lst:layersFp}.

In general, the algorithm
%
%
selects pairs of vertices $(u, v) \in V \times V$ with lowest priority from the
priority queue $Q$ (pairs with lowest priority are assigned the smallest 
number of paths that have already been placed in the layers), finds the
shortest directed path from $u$ to $v$ of length at least equal to
$L_{min}$ (a functionality provided by ${find{\_}path}$), and adds the found
path to the layer which is currently being created.

At the moment of initialization, every layer generates $\mathcal{V}$ -- a set
containing all pairs of vertices $(u, v) \in V \times V$. This set will be 
used to supervise the process of vertices selection and to keep track of the
pairs that can still be sampled from $Q$. The $Q.pop()$
function returns a pair of vertices still present in $\mathcal{V}$, which has
the lowest priority in $Q$ (in case of equal priorities, ties are broken randomly).

After processing a pair of vertices $(u, v)$ chosen from $Q$, and adding 
a valid directed path $(v_1, v_2, \dots v_d)$ to a layer (such that $v_1 = u$,
$v_d = v$ and $d \geq L_{min} + 1$), pairs of vertices $(v_i, v_j)$, such that
$j - i < L_{min}$, may be deleted from $\mathcal{V}$, as even if an
additional path from $v_i$ to $v_j$ of length at least $L_{min}$ will be
added, there will still exist a shorter path between these vertices (the one
contained by the path $(v_1, v_2, \dots v_d)$). Thus, the new path will not be
used to forward the traffic between these two vertices in this layer.

Similarly, if the layer already includes a path $(v_1, v_2, \dots v_d)$, then
adding an edge between any two not directly connected vertices $v_1, \dots
v_d$, will force the traffic from $v_1$ to $v_d$ to be forwarded using such a shorter
path, using the added edge. This may not necessarily introduce a
decline of overall achievable throughput. However, it impedes the process of
tracking pairs of vertices, that have already various paths between them
provided. In order to avoid this, we exclude all edges $(v_i, v_j)$, such that
$|i - j| > 1 $. This is achieved through the definition and usage of
$incidence_G$ variable representing the incidence matrix of the graph $G$.

As the ${find\_path}$ function takes the given permutation $\pi$ into
consideration, while searching for the path (the function considers only these
edges $(v_i, v_j) \in E$, for which $\pi(v_i) < \pi(v_j)$), there is no need to
verify whether adding the path to the graph will create any cycle.
Additionally, the edges are being processed by the $find\_path$ function in the
order given by the $W$ matrix, to use the least exploited (so far) edges.

In order to enhance the efficiency of generating each of the layers a constant
$M$ has been introduced, which helps to limit the number of pairs of vertices,
for which paths can be added, per each layer. It also helps to distribute the
paths more evenly between the layers.
\fi

\ifall
The effectiveness of the routing depends strictly on the number and such a
choice of the described layers parameters, that allow to exploit the redundancy
in a given network topology - the parameter playing the key role in this
process is the $L_{min}$ value. In most of the cases the best results were
achieved after assigning it to be one hop longer than the shortest paths in the
given topology graph.
\fi

\iftr
\subsection{Populating Forwarding Entries}

The $\sigma_i$ functions are deployed with forwarding tables. An example scheme
for deriving and populating forwarding tables is illustrated in Appendix~\ref{sec:app_forwarding}.
To derive these tables, we compute minimum
paths between every two routers $s$, $t$ {within layer~$\sigma_i$}.  Then, for
each router~$s$, we populate the entry for $s$, $t$ in $\sigma_i$ with a port
that corresponds to the router $s_i$ that is the first step on a path from $s$
to $t$. We compute all such paths and choose a random first step port, if there
are multiple options.  For any hypothetical network size, constructing layers
is not a computational bottleneck, given the $\mathcal{O}(N^2 \log N)$
complexity of Dijkstra's shortest path algorithm for $N$
vertices~\cite{dijkstra1959note}.
\fi

\ifall
For any hypothetical network size, constructing layers is not a computational
bottleneck, given the $\mathcal{O}(n^2 \log n)$ complexity of Dijkstra's
shortest path algorithm for $n$ vertices~\cite{dijkstra1959note}.
\fi

\ifsc
The $\sigma_i$ functions are deployed using forwarding tables with 
minimum paths between every two routers $s$,
$t$ {within layer~$\sigma_i$}. For each router~$s$, we populate the
entry for $s$, $t$ in $\sigma_i$ with a port that corresponds to the router
$s_i$ that is the first step on a path from $s$ to $t$. We compute all such
paths and choose a random first step port, if there are multiple options. 

We propose two schemes to implement layers.
First, a simple way to achieve separation is \emph{partitioning of the address
space}. This requires no hardware support, except for sufficiently long
addresses. One inserts the layer tag anywhere in the address, the resulting
forwarding tables are then simply concatenated.  The software stack must
support multiple addresses per interface (deployed in Linux since v2.6.12,
2005).
%
%
Next, similarly to schemes like SPAIN~\cite{mudigonda2010spain} or
PAST~\cite{stephens2012past}, one can use \emph{VLANs}~\cite{frantz1999vlan}
that are a part of the L2 forwarding tuple and provide full separation. Still,
the number of available VLANs is hardware limited, and FatPaths does not
require separated queues per layer.
Finally, \emph{L2/Ethernet} addressing can be done with exact match tables;
they should only support masking out a fixed field in the address before
lookup, which could be achieved with, for example, P4~\cite{bosshart2014p4}. 
\fi



\ifsq\enlargethispage{\baselineskip}\fi

\iftr
%


\subsection{Implementation of Layers}

We propose two schemes to deploy layers.
First, a simple way to achieve separation is \emph{partitioning of the address
space}. This requires no hardware support, except for sufficiently long
addresses. One inserts the layer tag anywhere in the address, the resulting
forwarding tables are then simply concatenated.  The software stack must
support multiple addresses per interface (deployed in Linux since v2.6.12,
2005).
%
%
Next, similarly to SPAIN~\cite{mudigonda2010spain} or
PAST~\cite{stephens2012past}, one can use \emph{VLANs}~\cite{frantz1999vlan}
that are a part of the L2 forwarding tuple and provide full separation. Still,
the number of available VLANs is hardware limited.
%



\subsection{Implementation of Forwarding Functions}

Forwarding functions can be implemented with simple lookup tables (flat
\emph{Ethernet exact matching} or hierarchical \emph{TCAM longest prefix
matching} tables).
In the former, one entry maps a single input tuple to a single next hop.  The
latter are usually much smaller but more powerful: one entry can provide the
next hop information for many input tuples.

As not all the considered topologies are hierarchical, we cannot use all the
properties of longest match tables. Still, we observe that all endpoints on one
router share the routes towards that router. We can thus use prefix-match
tables to \emph{reduce the required number of entries from $\mathcal{O}(N)$ to
$\mathcal{O}(N_r)$}. This only requires exact matching on a fixed address part.
%
%
As we target low-diameter topologies, space savings due to moving from
$\mathcal{O}(N)$ to $\mathcal{O}(N_r)$ \emph{can be large}. For example,
an SF with $N = 10,830$ has $N_r = 722$.
Such semi-hierarchical forwarding was proposed in
PortLand~\cite{niranjan2009portland} (hierarchical pseudo MAC addresses) or 
shadow MACs~\cite{agarwal2014shadow}.
As we use a simple, static forwarding function, it can also be implemented
on the endpoints themselves, using source routing~\cite{jyothi2015towards}.
 
%
%
\fi

\ifall
\noindent 
\macb{Addressing}
To integrate FatPaths with \emph{L2/Ethernet}, one can use exact match tables;
they should only support masking out a fixed field in the address before
lookup, which could be achieved with, for example, P4~\cite{bosshart2014p4}. 
%
%
Alternatively, one could also use a simple \emph{L3/IP} scheme.
First, every endpoint has an IP address of the form $10.i.s.h$ for each layer
($s$, $h$, and $i$ identify a router, an endpoint within the router, and the
layer ID).
Second, for the inter-router links, addresses from a disjoint range are used,
e.g,. $192.168.*.*$, with one $/30$ subnet per link.
Finally, each router~$s$ has one forwarding rule for each other router, of the
form $10.i.t.*/24 \textit{ via } 192.168.x.y$, where the inter-router link
address is chosen from the router's ports according to the forwarding function
$\sigma_i(s, t)$.

\fi

\iftr

\subsection{Adaptive Load Balancing + Transport}
\label{sec:routing-flowlets}

We now provide more details on the interplay between FatPaths' load balancing
based on LetFlow~\cite{vanini2017letflow} and transport based on
NDP~\cite{handley2017re}, expanding on Section~\ref{sec:overview}.
For portability and performance, we integrate FatPaths with LetFlow, a simple
yet powerful congestion control scheme that uses the middle-ground granularity
of \emph{flowlets}, instead of packets or flows, in load balancing.
LetFlow originally operates within routers.  Instead, we select a layer for
each flowlet \emph{at the endpoint}. 
Now, NDP, that is a basis of the transport protocol in FatPaths, does not
include a proper adaptivity solution (it uses oblivious per-packet load
balancing). In FatPaths, we combine the endpoint-based flowlet adaptivity with
NDP: The destination endpoint, which handles flow control in NDP, can request
to change the layer, when it observes truncated packets signaling network
congestion. This provides the necessary flowlet length elasticity that
implements LetFlow adaptivity.
The layer tags are chosen on the endpoint by the adaptivity algorithm.
\fi



\subsection{Fault-Tolerance}

\ifsc
Fault-tolerance in FatPaths is based on preprovisioning multiple paths within
different layers. 
For major (infrequent) topology updates, we recompute
layers~\cite{mudigonda2010spain}.  Contrarily, when a failure in some layer is
detected, FatPaths redirects the affected flows to a different layer. We rely
on established fault tolerance schemes~\cite{mudigonda2010spain, jain2011viro,
hu2016explicit, vanini2017letflow, handley2017re} for the exact mechanisms of
failure detection.
Traffic redirection relies on flowlets~\cite{vanini2017letflow}, as with
congestion: the elasticity of flowlets automatically
prevents data from using an unavailable path.
\fi

\iftr
Fault-tolerance in FatPaths is based on preprovisioning multiple paths within
different layers. 
For major (infrequent) topology updates, we recompute
layers~\cite{mudigonda2010spain}. Otherwise, when a failure in some layer is
detected, FatPaths redirects the affected flows to a different layer. 
Here, unlike schemes such as DRILL~\cite{ghorbani17drill}, FatPaths -- by
targeting low-diameter topologies -- is natively  resilient to
the {network asymmetry} also caused by failures. 
Traffic redirection relies on flowlets~\cite{vanini2017letflow}. As with
congestion, the elasticity of flowlets automatically prevents data from using
an unavailable path (i.e., the flowlet based load balancing randomly picks
paths for flowlets and lets their elasticity {automatically} load-balance the
traffic on available paths).
%
%
Thus, its design is inherently asymmetric and failures that for example affect
the availability of shortest paths are redirected at the endpoints by balancing
flowlets.

For failure \emph{detection}, we rely on established
mechanisms~\cite{mudigonda2010spain, jain2011viro, hu2016explicit,
vanini2017letflow, handley2017re}. 


Besides flowlet elasticity, the layered FatPaths design enables
other fault-tolerance schemes.
%
%
For example, FatPaths could limit each layer to be a spanning tree and use
mechanisms such as Cisco's proprietary Per-VLAN Spanning Tree (PVST) or IEEE
802.1s MST to fall back to the secondary backup ports offered by these schemes.
%
%
%
%
Finally, assuming L3/IP forwarding and addressing, one could rely on resilience
schemes such as VIRO's~\cite{jain2011viro}.
\fi

\ifall\maciej{OK data}
\begin{figure*}[t]
\vspace{-1em}
\centering
\includegraphics[width=0.7\textwidth]{theory/{MAWP0_55___mac_e}.pdf}
\vspace{-0.5em}
\caption{\textmd{\textbf{Theoretical analysis of FatPaths performance:} Maximum achievable throughput in FatPaths and other layered routing mechanisms
(traffic intensity: 0.55) for the adversarial traffic pattern that maximizes stress on the interconnect}.}
\vspace{-1.5em}
\label{fig:theory}
\end{figure*}
\fi

\begin{figure*}[t]
\vspaceSQ{-1.25em}
\centering
\begin{minipage}{0.67\textwidth}
\vspaceSQ{-1.8em}
\centering
\includegraphics[width=1.0\textwidth]{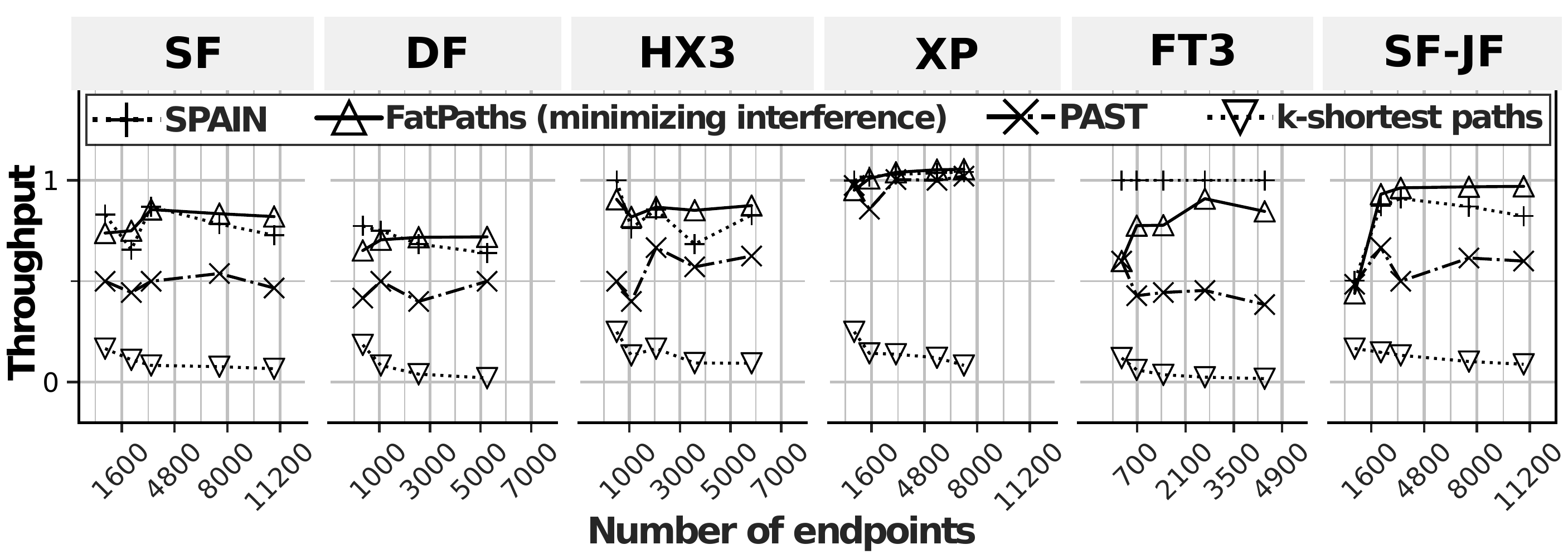}
\vspaceSQ{-2.0em}
\caption{\textmd{\textbf{Theoretical analysis of FatPaths performance:} Maximum
achievable throughput in FatPaths and other layered routing mechanisms (traffic
intensity: 0.55) for the adversarial traffic pattern that maximizes stress on
the interconnect}.}
\vspaceSQ{-1.5em}
\label{fig:theory}
\end{minipage}\hfill
\begin{minipage}{0.3\textwidth}
\centering
\includegraphics[width=1.0\textwidth]{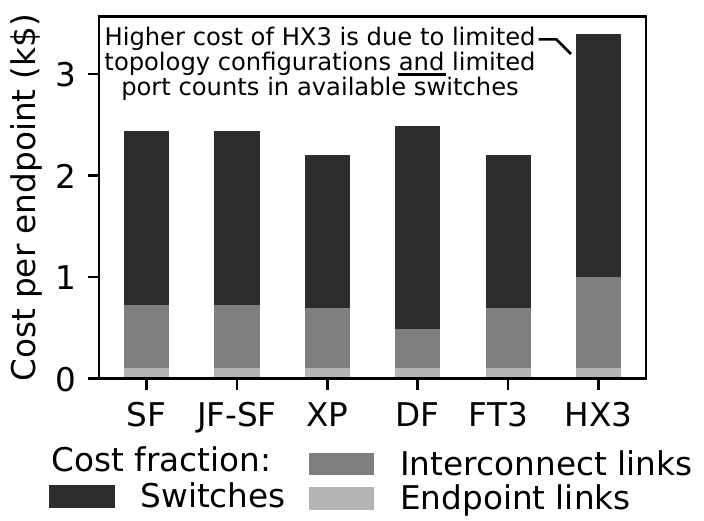}
\vspaceSQ{-1em}
\caption{\textmd{{An example cost model for topologies with
\mbox{$N \approx 10,000$}, assuming the 100GbE equipment.}}}
\label{fig:cost-model}
\end{minipage}
\vspaceSQ{-1.5em}
\end{figure*}


\section{Theoretical Analysis}
\label{sec:theory}

We first conduct a theoretical analysis. The main goal is to illustrate that
{layered routing in FatPaths enables higher throughput than}
SPAIN~\cite{mudigonda2010spain}, PAST~\cite{stephens2012past, hu2016explicit},
and $k$-shortest paths~\cite{singla2012jellyfish}, three recent schemes that
support (1) multi-pathing and (2) disjoint paths (as identified in
Table~\ref{tab:intro}).
%
%
%
%
%
%
SPAIN uses a set of spanning trees, using greedy coloring to minimize their
number; one tree is one layer. Then, paths between endpoints are mapped to the
trees, maximizing path disjointness. PAST uses one spanning tree per host,
aiming at distributing the trees uniformly over available physical links.
Finally, $k$-shortest paths~\cite{singla2012jellyfish} spreads traffic over
multiple shortest paths (if available) between endpoints. 
\iftr
In Appendix~\ref{sec:app_Layers}, we
provide detailed descriptions of SPAIN, PAST, and $k$-shortest paths, and of
how we integrate them in the FatPaths layered routing.
\fi

\ifall
\begin{figure*}[t]
\centering
\includegraphics[width=0.65\textwidth]{theory/{MAWP0_55___mac_e}.pdf}
\vspace{-1em}
\caption{\textmd{\textbf{Theoretical analysis of FatPaths performance:} Maximum
achievable throughput in FatPaths and other layered routing mechanisms (traffic
intensity: 0.55) for the adversarial traffic pattern}.}
\vspace{-1em}
\label{fig:theory}
\end{figure*}
\fi

\iftr
\begin{figure}[h!]
\small
\begin{align}
&\sum_{i=1}^k f_{iuv} \leq c(u,v), \enskip u, v \in V \label{eq:general-capacity}\\
&\sum_{v \in V} f_{iuv} - \sum_{v \in V} f_{ivu} = 0, \enskip i = 1, \dots, k, \enskip u \in V \setminus \{s_i, t_i\} \label{eq:general-flow}\\
&\sum_{v \in V} f_{is_iv} - \sum_{v \in V} f_{ivs_i} = T(s_i,t_i), \enskip i = 1, \dots, k \label{eq:general-demands}\\
&f_{iuv} \geq 0, \enskip u, v \in V, \enskip i = 1,  \dots, k \label{eq:general-nonzero}
\end{align}
\normalsize
\end{figure}
\fi

\iftr
\begin{figure*}[t]
\begin{align}
& -\sum_{v \in V} \sum_{l = 1, \dots n} f_{is_{i}vl} \cdot \delta_{v, \sigma_l(s_i, t_i)} + T(s_i,t_i) \cdot \mathcal{T} \leq 0, \enskip i = 1, 2, \dots, k \label{type0}\\
& \sum_{i =1, \dots k} \sum_{l = 1, \dots n} f_{iuvl} \cdot \delta_{v, \sigma_l(u, t_i)}  \leq c(u, v), \enskip \forall (u, v) \in E \label{type1}\\
& \sum_{v \in V}  f_{iuvl} \cdot \delta_{v, \sigma_l(u, t_i)}  - \sum_{v \in V} f_{ivul} \cdot \delta_{u, \sigma_l(v, t_i)}  = 0, \enskip i = 1, \dots, k, \enskip l = 1, \dots, n, \forall u \in V \setminus \{s_i, t_i\} \label{type2}\\
& \sum_{v \in V} \sum_{l = 1, \dots n} f_{is_{i}vl} \cdot \delta_{v, \sigma_l(s_i, t_i)} \leq \mathcal{T}_{upper bound} \cdot T(s_i,t_i), \enskip i = 1, \dots, k \label{type3}\\
& \sum_{v \in V} \sum_{l = 1, \dots n} f_{ivs_{i}l} \cdot \delta_{s_i, \sigma_l(v, t_i)}  = 0, \enskip i = 1, \dots, k \label{type4}
\end{align}
\end{figure*}
\fi

\iftr
\subsection{Definitions and Formulations}

We now present the associated mathematical concepts. Two important
notions are the \textbf{maximum achievable throughput (MAT)}
and the \textbf{multi-commodity flow (MCF) problem}.

\subsubsection{Multi-Commodity Flow Problem (MCF): Intuition}

Assume a network with predefined (1) flow demands between source and
destination endpoints, and (2) constraints on the capacities of links. Now, the
problem of assigning an amount of flow to be supplied between the endpoint
pairs, which is feasible in the network (i.e., the sums of flows between
endpoint pairs do not exceed the capacities of respective links) is called the
Multi-Commodity Flow (MCF) problem~\cite{even1975complexity}. 

\subsubsection{Maximum Achievable Throughput (MAT): Intuition}

Intuitively, the maximum achievable throughput may be defined as the ratio of
the flow that can be forwarded between each pair of network endpoints to the
demanded amount of flow for this pair (for a given traffic pattern).
In our analysis, to derive a single MAT value for all communicating endpoints
(for a given traffic matrix, a given topology, and a given routing scheme), we
maximize the minimum flow across all flow demands, assuming each pair of
endpoints is able to send and receive flows concurrently. The derivation is
done using linear programming (LP).

\subsubsection{MCF Linear Programming Formulations for MAT}
\label{sec:mcf-fp}

We now presented the MCF LP formulations, both for
the general problem of maximizing throughput under general
routing, and when a layered routing is assumed.

%
MAT is formally defined as the maximum value $\mathcal{T}$ for which there
exists a feasible multi-commodity flow that routes a flow $T(s, t) \cdot
\mathcal{T}$ between all router pairs $s$ and $t$, satisfying link capacity and
flow conservation constraints. $T(s,t)$ specifies traffic demand; it is the
amount of requested flow from $s$ to $t$ (more details about MAT and $T(s,t)$ are
in~\cite{jyothi2016measuring}).


Let us denote the directed graph representing the network topology by  $G = (V,
E)$, where $V$ is the set of vertices (switches) and $E$ is the set of edges.
Each edge $(u,v) \in E$ represents a network link with non-negative capacity
$c(u,v)$.

\paragraph{General MCF Formulation}

Given a graph $G$ defined as above, let us denote $k$ different commodities by
$K_1, K_2, \dots, K_k$, where $K_i = (s_i, t_i, T(s_i,t_i))$. A single commodity
indicates a single flow demand between some communicating pair of endpoints. Here,
$s_i$ is the source node of commodity~$i$ while $t_i$ is the destination node. The
variable $f_{iuv}$ defines the fraction of demanded flow of commodity $i$ from
vertex $u$ to vertex $v$. $f_i$ is a real-valued function satisfying the flow
conservation and capacity constraints, where $f_i \in [0, 1]$. The MCF problem
may be formulated as the problem of finding an assignment of all flow variables
satisfying the following constraints:

Eq.~(\ref{eq:general-capacity}) defines the capacity constraints.
Eq.~(\ref{eq:general-flow}) defines the flow conservation on non-source and
non-destination nodes. Eq.~(\ref{eq:general-demands}) defines the flow
conservation at the source node. Finally, Eq.~(\ref{eq:general-nonzero})
defines the non-negativity constraints.

\paragraph{MCF Formulations for Layered Routing in FatPaths}

In contrast to the original multi-commodity flow problem, whose aim was finding
an assignment of all flow variables satisfying the constraint rules, the main
goal of the linear program used for the purpose of our evaluation is maximizing
the possible achieved throughput $\mathcal{T}$. Thus, new rules had to be
defined for each of the layered routing schemes.

The following rules were defined in order to formulate the LP programs, used
further to benchmark the FatPaths and SPAIN routing schemes, assuming $n$
layers and $k$ commodities. The variable $f_{iuvl}$ defines
the fraction of the flow of commodity $i$, carried by a link~$(u,v)$ in a layer~$l$.
Moreover, the $\sigma_i$ function represents a routing function for layer $i$. For
routers $s, t \in V$, the value of function $\sigma_i(s,t)$ denotes the next
hop router in the path from $s$ to $t$ within layer $i$. Thus, denoting the
Kronecker delta by $\delta_{x,y}$, we can easily define function $\delta_{v,
\sigma_i(u,t)}$, whose output is equal to $1$ if a certain flow $i$ to a
destination node $t$ is allowed to pass through a certain link $(u,v)$ and $0$
otherwise.
We use $\delta_{v,\sigma_i(u,t)}$ to ensure that flows within specific
layers remain in these layers only.

Now, the main goal of Eq.~(\ref{type0}) and~(\ref{type3}) is to ensure that all
flows allocated for a certain commodity will exit the source node and that it
will not exceed some defined upper bound~$\mathcal{T}_{upperbound}$ (defined
for optimization purposes).
Specifically, the sum in Eq.~(\ref{type0}), for a given commodity~$i$, is equal
to the sum of all traffic fractions traversing from the source node $s_i$ to
the next hop node in any layer.
The aim of Eq.~(\ref{type1}) is to ensure that the total summed flow will not
exceed the link capacity. We assume that the layers provide only a logical
representation of graphs built upon existing physical network wiring. Thus, the
summed flows from all layers should not exceed the capacity on the physical
link (on the left side of Eq.~(\ref{type1}), we sum all the traffic (from each
commodity and in each layer) that traverses the link from node $u$ to $v$).
Eq.~(\ref{type4}) is introduced in order to avoid the flow traversing back to
the source node through any of the links. On the left side of the equation, we
sum all the fractions of the flow in each layer, which traverse from any node $v$
(such that source is the next hop on the path from $v$ to $t_i$ within this layer)
to source $s_i$.
Finally, Eq.~(\ref{type2}) is the most
important constraint to enable benchmarking of the layered routing -- its main
goal is to ensure that \emph{no flow leaks between the layers}.
Here, the first sum in the equation is equal to the total traffic, which is a
fraction of commodity $i$ traffic, that traverses from $u$ to $v$ within layer $l$.
The second sum is the total traffic belonging to commodity $i$, which traverses
the link from $v$ to $u$ in layer $l$. These two sums have to be equal for every
layer, as no traffic is allowed to leak between layers.

\fi

\subsection{{Analysis of Number of Layers}}

\ifsc
Both SPAIN and PAST use trees as layers.
\fi
\iftr
Both SPAIN and PAST use trees as layers, contrarily to FatPaths that allows for
arbitrary connected subgraphs. 
\fi
This brings many drawbacks, as each SPAIN layer can use at most $N_r-1$ links,
while the topology contains $\frac{N_r k'}2$ links. Thus, at least
$\mathcal{O}(k')$ layers are required to cover all minimal paths, and SPAIN
requires even $\mathcal{O}(N_r)$ on many topologies.  Moreover, PAST always
needs $O(N)$ trees by its design.
By using layers that are arbitrary (but connected) subgraphs and contain a
large, constant fraction of links, \emph{FatPaths provides sufficient path
diversity with a low, $O(1)$ number of layers}.
%

\subsection{{Analysis of Throughput}}

\ifsq
\fi

\ifsc
We also analyze maximum achievable throughput (MAT): 
the maximum value $\mathcal{T}$ for which there
exists a feasible multicommodity flow (MCF) that routes a flow $T(s, t) \cdot
\mathcal{T}$ between all router pairs $s$ and $t$, satisfying link capacity and
flow conservation constraints. $T(s,t)$ specifies traffic demand; it is an
amount of requested flow from $s$ to $t$ (more details are provided by Jyothi
et al.~\cite{jyothi2016measuring}).
\fi
\iftr
We now derive the MAT~$\mathcal{T}$.
\fi
We test {all considered topologies, topology sizes, traffic patterns
and intensities (fraction of communicating endpoint pairs)}.
We consider two FatPaths variants from~\cref{sec:routing-layers}.
\iftr
We use TopoBench, a throughput evaluation tool~\cite{jyothi2016measuring} that
uses linear programming (LP) to derive $\mathcal{T}$. We extended TopoBench's
LP formulation of MCF to include layered routing according to our
formulation from~\cref{sec:mcf-fp}.
\fi
\ifsc
We use TopoBench, a throughput evaluation tool~\cite{jyothi2016measuring} that
uses linear programming (LP) to derive $\mathcal{T}$. We extended TopoBench's
LP formulation of MCF to include layered routing. 
Most importantly,
instead of one network for accommodating MCF, we use $n$ networks (that
represent layers) to allocate flows. We also introduce constraints that
prevent one flow from being allocated over multiple layers. 
\fi
\ifsc
\emph{All formulation details are in the technical report.}
\fi

We use a recently proposed worst-case traffic pattern which maximizes stress on
the network while hampering effective routing~\cite{jyothi2016measuring},
see Figure~\ref{fig:theory}. This pattern {is generated \emph{individually}
for each topology}; it uses maximum weighted matchings to find a
  pairing of endpoints that maximizes average flow path length, with both
  elephant and small flows.
As expected, SPAIN -- a scheme developed specifically for Clos -- delivers more
performance on fat trees. Yet, it uses up to $O(N_r)$ layers.  The layered
routing that minimizes path interference generally outperforms SPAIN on other
networks (we tuned SPAIN to perform as well as possible on low-diameter
topologies). Finally, also as expected, our heuristic that minimizes path
overlap delivers more speedup than simple random edge picking (we only plot the
former for more clarity).

%
\ifall
{Routing schemes use equally many layers to fix the amount of hardware resources.
Using more layers accelerates all comparison targets but also increases
counts of forwarding entries in routing tables. With more layers, SPAIN and PAST
become competitive with FatPaths on fat trees, but they use up to $O(N_r)$.
Still, \emph{they do \textbf{not} outperform FatPaths}.
They use up to $O(N_r)$.}
\fi
Tested schemes use equally many layers ($n$) to fix the amount of HW resources.
Increasing $n$ accelerates all comparison targets but also increases counts of
forwarding entries in routing tables. Here, SPAIN and PAST become faster on fat
trees and approach FatPaths, but they use up to $O(N_r)$ layers.  FatPaths
maintains its advantages for different traffic intensities.
\ifall
\maciej{fix}
i.e.,
different fractions of communicating endpoint pairs
(we tried values in the whole range $(0; 1]$) or any other parameters.
\fi
As expected, our heuristic that minimizes path overlap
outperforms a simple random edge picking.

\noindent
\macb{\ul{Takeaway} }
{FatPaths layered routing outperforms {competitive} schemes
in the used count of layers (and thus the amount of {needed
hardware resources}) and {achieved throughput}}.


%



\section{Simulations}
\label{sec:eval}

\ifsq\enlargethispage{\baselineskip}\fi

We now illustrate how low-diameter topologies equipped with FatPaths outperform
novel high-performance fat tree designs.
\iftr
Various additional analyses are
in Appendix~\ref{sec:app-simulations-full-data}.
\fi


\subsection{{Methodology, Parameters, and Baselines}}


We first discuss parameters, methodology, and baselines.
%


%
\subsubsection{{Topologies and Traffic Patterns}}

%
We use all topologies specified in~\cref{sec:back_topos}: SF, XP, JF, HX, DF,
and FT, in their most beneficial variants (e.g., the ``balanced''
Dragonfly~\cite{kim2008technology}). We fix the network size~$N$ ($N$ varies by
up to $\approx$10\% as there are limited numbers of configurations of each
network). SF represents a recent family of diameter-2 topologies such as
Multi-Layer Full-Mesh~\cite{kathareios2015cost} and Two-Level Orthogonal
Fat-Trees~\cite{valerio1994recursively, valiant1982scheme}. To achieve similar
costs and \mbox{$N$} we use \mbox{2$\times$} oversubscribed fat trees.
\iftr
Here, HyperX (HX) is a \emph{special case}: It has \emph{exceedingly
high-radix routers} but little design flexibility. Thus, it achieves high
performance, but at high cost and power consumption~\cite{kim2007flattened,
besta2018slim}. We still evaluate it to consider most recent low-diameter
networks.
\fi



%
We use the traffic patterns discussed in~\cref{sec:back}, in both 
{randomized} and {skewed non-randomized} variants.

\subsubsection{{Cost Model for Using Topologies of Comparable Cost}}
{We now show how to ensure comparable construction cost of considered
topologies. For this, we use the established cost models from past
works~\mbox{\cite{besta2014slim, kim2008technology, kim2007flattened}}.
Overall, for each ``network size category'' (\mbox{$N \in \{\approx1\text{k},
\approx10\text{k}, \approx100\text{k}, \approx1\text{M}\}$},
cf.~\mbox{\cref{sec:back_topos}}), we search for specific topology
configurations with minimal differences in their sizes~\mbox{$N$}. Their total
cost is derived based on existing router and cable cost models based on linear
functions~\mbox{\cite{besta2014slim, kim2008technology, kim2007flattened}}.
parametrized with prices of modern equipment (e.g., Mellanox equipment listed
on ColfaxDirect {\texttt{http://www.colfaxdirect.com}}). The models distinguish
between fiber and copper cables; the former are used for longer
router-router links (e.g., inter-group links in DF) and the latter forming
short endpoint and router-router connections (e.g., intra-group links in DF or
SF).  As used topology configurations vary in~$N$ (there is always a limited
number of configurations of each used topology), for fairness, the final prices
are normalized per single endpoint. An example cost model, for
\mbox{$N \approx 10,000$}, with the 100Gb Ethernet equipment, is in
Figure~\mbox{\ref{fig:cost-model}}; the prices follow past
data~\mbox{\cite{besta2014slim}}. One can distinguish effects caused by topology details,
for example lower cable costs in DF due to relatively few
expensive global inter-group connections. Variations in final costs are caused
by a limited number of topology configurations combined with limited counts of
ports in available switches.}

\iftr
\begin{figure*}[t]
\vspaceSQ{-0.75em}
\centering
\includegraphics[width=0.65\textwidth]{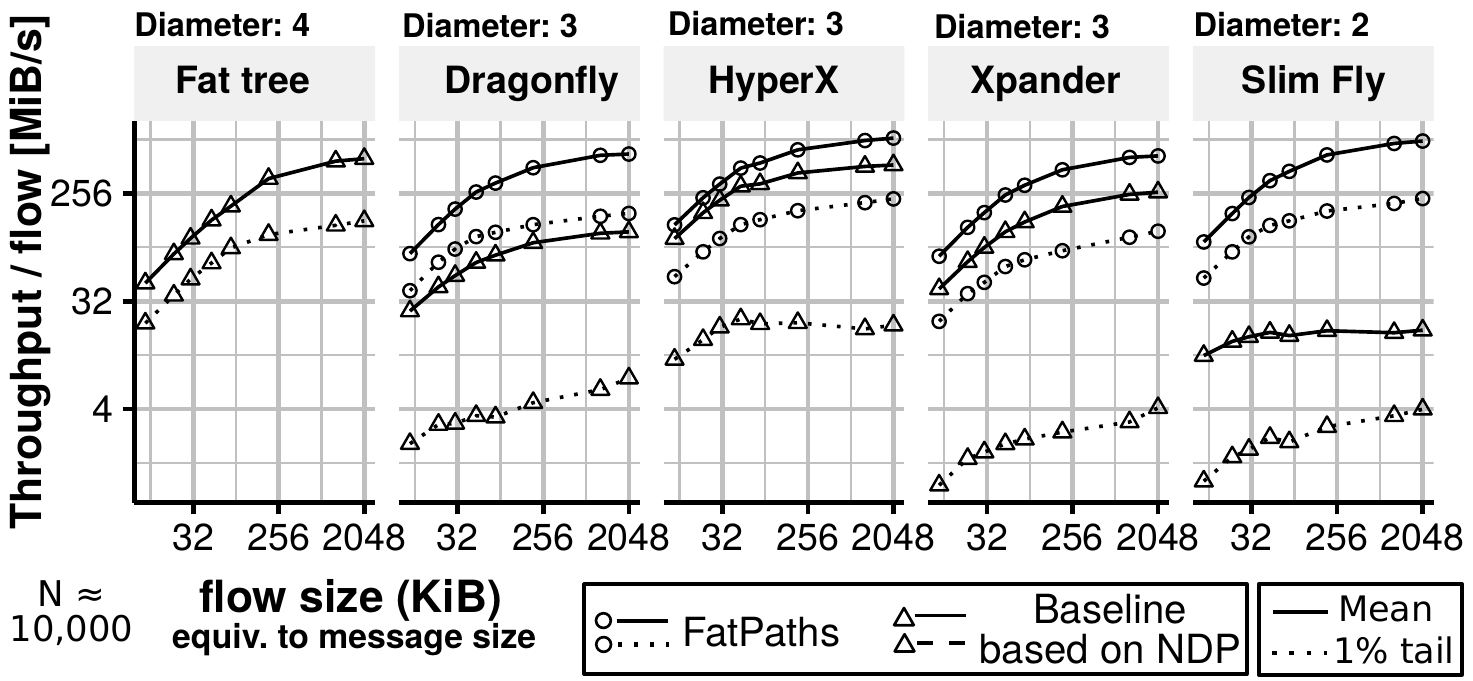}
\vspaceSQ{-1.5em}
\caption{\textmd{Performance analysis of a skewed adversarial traffic for similar-cost networks.
%
}}
\vspaceSQ{-0.75em}
\label{fig:ndp_results_skewed}
\end{figure*}
\fi

\subsubsection{{Routing and Transport Schemes}}

%
We use flow-based non-adaptive ECMP as the {baseline} (routing performance
lower bound). Low-diameter topologies use FatPaths while fat trees use NDP with
all optimizations~\cite{handley2017re}, additionally enhanced with
LetFlow~\cite{vanini2017letflow}, a recent scheme that uses flowlet switching
for load balancing in fat trees.
%
%
We also compare to a fat tree system using NDP with per-packet
congestion-oblivious load balancing as introduced by Handley et
al.~\cite{handley2017re}. 
%
%
%
%
%
For FatPaths, we vary $\rho$ and $n$ to account for \emph{different
layer configurations}, including $\rho = 1$ (minimal paths only).
%
%
%
We also consider simple TCP, MPTCP, and DCTCP with
ECN~\cite{alizadeh2011data, ramakrishnan2001addition, floyd1994tcp},
illustrating that FatPaths can accelerate not only bare Ethernet
systems but also cloud computing environments that usually use full TCP
stacks~\cite{isobe2014tcp, azodolmolky2013cloud}.

\subsubsection{{Flows and Messages}}

\ifsq\enlargethispage{\baselineskip}\fi

%
We vary flow sizes (and thus message sizes as a flow is equivalent to a message) from 32 KiB to 2 MiB.
We use a Poisson distributed {flow arrival rate} with a $\lambda$ parameter. 
\iftr
{Flow sizes}~$v$ are chosen according to the pFabric web search
distribution~\cite{alizadeh2013pfabric}, discretized to 20 flows, with an
average flow size of 1MB.
%
%
For the average $v \approx $1MB, we can handle $\approx$1 flow/ms
(the {injection rate}) on a 10G link ($\lambda = 1000$ [flows/endpoint/s]).
Yet, both our simulations show
that even at $\lambda=200$, there is typically more than one concurrent flow
per endpoint.
%
%
%
This only puts additional stress on flow control, which is not our main focus.
Thus, we choose $\lambda = 200$ (in TCP simulations) and $\lambda=300$ (in NDP
simulations with higher throughput and better flow control).
Details and data on flow behavior are in Appendix~\ref{sec:app-flow-omnet} and~\ref{sec:app-flow-htsim}.
\fi

\ifall
The flow sizes $v$ are chosen according to the pFabric web search
distribution~\cite{alizadeh2013pfabric}, discretized to 20 flows, with an
average flow size of 1MB. The number of flows per endpoint is chosen to
approximate the flow size distribution sufficiently well, while keeping the
total transferred volume low enough for packet-level simulation. The flow sizes
are not chosen randomly; rather, each endpoint uses the same fixed set of flows
(20 for INET, 40 for htsim), representing the distribution. All endpoints
communicate, which leads to a high uniform network load; this is a challenging
scenario for non-minimal routing due to the TNL constraint. For the skewed
test, we reduce TNL by having only 10\% of routers active, using a
non-randomized off-diagonal traffic pattern.  This causes p-way path collisions
on all flows, to show the full capability of non-minimal routing.
\fi

\subsubsection{{Metrics}}

%
We use (1) {flow completion time} (FCT), which also represents
(2) {throughput per flow} $\text{TPF} =
\frac{\text{flow size}}{\text{FCT}}$. 
We also consider (3) {total time to complete} a tested workload
\ifsq
\footnote{\scriptsize{When reporting some runtimes
(cf.~Figures~{\ref{fig:tcp_topo}}-{\ref{fig:tcp_topo_ttc}}), we use a relative
speedup over the plain ECMP baseline for clarity of presentation (as each plot
contains runtimes for flows of different sizes, some absolute runtime data
becomes hard to read).}}.
\fi
\iftr
\footnote{{When reporting some runtimes
(cf.~Figures~{\ref{fig:tcp_topo}}-{\ref{fig:tcp_topo_ttc}}), we use a relative
speedup over the plain ECMP baseline for clarity of presentation (as each plot
contains runtimes for flows of different sizes, some absolute runtime data
becomes hard to read).}}.
\fi

\iftr
\begin{figure*}[t]
\vspaceSQ{-0.5em}
\centering
\includegraphics[width=0.7\textwidth]{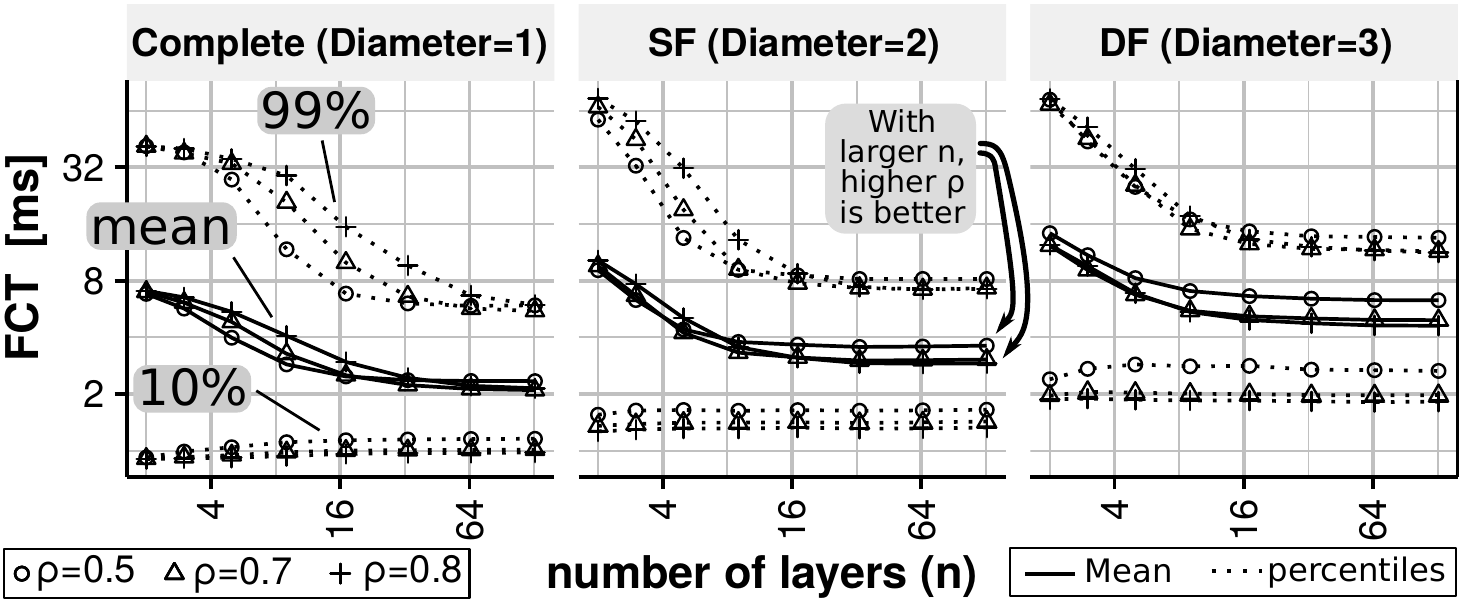}
\vspaceSQ{-1.5em}
\caption{\textmd{{Effects of \#layers $n$ and the amount of remaining edges $\rho$ on FatPaths}, on long flows (size $1\si{\mebi\byte}$); $N \approx 10,000$ (htsim).}}
\vspaceSQ{-1em}
\label{fig:ndp_n_rho}
\end{figure*}
\fi

\subsubsection{{Simulation Infrastructure and Methodology}}

%
We use the OMNeT++~\cite{varga2001omnet++,varga2008overview} parallel discrete
event simulator with the INET model package~\cite{inet} and the \emph{htsim}
packet-level simulator with the NDP reference
implementation~\cite{handley2017re}. OMNeT++ enables detailed
simulations of full Ethernet/TCP networking stack, with
all overheads coming from protocols such as ARP. We use htsim as its simplified
structure enables simulations of networks of much larger scales. {We extend
both simulators with any required schemes, such as flowlets,
ECMP, layered routing, workload randomization}. In LetFlow, we use precise
timestamps to detect flowlets, with a low gap time of $50\si{\micro\second}$ to
reflect the low-latency network. As INET does not model hardware or software
latency, we add a $1\si{\micro\second}$ fixed delay to each link. 
\iftr
We also extend the INET TCP stack with ECN (RFC~3168~\cite{rfc3168}), MPTCP
(RFC~6824~\cite{rfc6824}, RFC~6356~\cite{rfc6356}), and DCTCP. We extend the
default router model with ECMP (Fowler-Noll-Vo
hash~\cite{kornblum2006identifying}) and LetFlow.
In htsim, we use similar parameters; they match those used by Handley et al..
We extend htsim to support arbitrary topologies, FatPaths routing and
adaptivity, and our workload model.
\fi
\iftr
Routers use tail-dropping with a maximum queue size of 100 packets per port. 
ECN marks packets once a queue reaches more than 33 packets.  Fast
retransmissions use the default threshold of three segments.  We also model a
latency in the software stack (corresponding to interrupt throttling) to 100kHz
rate. 
For FatPaths, we use 9KB jumbo frames, an 8-packet congestion window, and a
queue length of 8 full-size packets. 
\fi
All our code is available online.

\iftr

\subsubsection{Scale of Simulations}
We fix various scalability issues in INET and OMNeT++ to allow parallel
simulation of large systems, \emph{with up to $\approx$1 million endpoints}.
%
%
%
To the best of our knowledge, \emph{we conduct the largest shared-memory simulations
(endpoint count) so far in the networking community}.
%
%

\fi

\subsubsection{{Gathering Results and Shown Data}}

%
We evaluate each combination of topology and routing method. 
%
As each such
simulation contains thousands of flows with randomized source, destination,
size, and start time, we only record per-flow quantities; this suffices 
for statistical significance.
%
\iftr
A run might be affected by a degenerate randomized topology instance, traffic
pattern assignment, or layer construction, but for the parameters that we
consider, such instances are unlikely and we never observed any during the
course of our experiments. 
\fi
%
%
We simulate a fixed number of flows starting in a fixed time window, and
drop the results from the first window half for warmup. 
%
%
We summarize the resulting distributions with arithmetic means of the
underlying time measurements, or percentiles of distributions.

\ifsq\enlargethispage{\baselineskip}\fi

\ifsc


%
{When some variants or parameters are omitted (e.g., we only show SF-JF to
cover Jellyfish), this means that {the shown data is representative}; the rest
is in the full report.  }
\fi

\iftr
When some variants or parameters are omitted (e.g., we only show SF-JF to cover Jellyfish),
this indicates that \emph{the shown data is representative}.
\fi

\subsection{Performance Analysis: HPC Systems}
\label{sec:eval-ndp}

First, we analyze FatPaths with Ethernet \emph{but without
the TCP transport}. This setting represents HPC systems that use
Ethernet for its low cost, but avoid TCP due to its performance issues.
%
%
We use \emph{htsim} that can deliver such a setting.

\subsubsection{{{Low-Diameter Networks + FatPaths Beat Fat Trees}}}
%
%
We analyze Figure~\ref{fig:ndp_results_motiv} (page~2, randomized
workload) and Figure~\ref{fig:ndp_results_skewed} (skewed non-randomized
workload). In each case, low-diameter topologies outperform similar-cost fat trees, with up
to 2$\times$ and 4$\times$ \emph{improvement} in throughput for non-randomized
and randomized workload, respectively. Both fat tree and low-diameter networks
use similar load balancing based on flowlet switching and purified transport.
Thus, the advantage of low-diameter networks is their \emph{low diameter}
combined with the ability of FatPaths to \emph{effectively use the diversity of
``almost'' minimal paths}.
%
%
Answering one of two main questions from~\cref{sec:intro}, we conclude that
\emph{{FatPaths enables low-diameter topologies to outperform state-of-the-art
fat trees.}}

%


%
\subsubsection{{FatPaths Uses ``Fat'' Non-Minimal Path Diversity Well}}
%
%
%
We now focus on skewed non-randomized
workloads, see Figure~\ref{fig:ndp_results_skewed}. Non-minimal balanced routing over FatPaths layers, in each
low-diameter topology, \emph{leads to an up to $30\times$ FCT
improvement} over minimal routing (i.e., ``circles on topology~X outperform
triangles on~X'').
The exception is HyperX, due to its higher minimal path diversity (cf.~Figure~\ref{fig:shortest_path_multiplicity}).
\iftr
The traffic causes $p$-way collisions. In HX, its low $p$ explains the
relatively good performance of minimal routing on HX. 
\fi
Thus, \emph{FatPaths effectively leverages the non-minimal path diversity}.

\ifsc
\begin{figure}[h]
\vspaceSQ{-0.75em}
\centering
\includegraphics[width=\columnwidth]{skewed_motiv_mac_3.pdf}
\vspaceSQ{-1.5em}
\caption{\textmd{Performance analysis of a skewed adversarial traffic for similar-cost networks.
%
}}
\vspaceSQ{-0.75em}
\label{fig:ndp_results_skewed}
\end{figure}
\fi

\subsubsection{{What Layer Setup Is Best?}}

We also study the impact of the number~$n$ and the sparsity~$\rho$ of layers in
FatPaths on performance and collision resolution; see
Figure~\ref{fig:ndp_n_rho} (layers are computed with random edge sampling,
cf.~Listing~\ref{lst:layers}).
%
%
%
Nine layers (one complete and eight sparsified) suffice for three disjoint
paths per router pair, resolving most collisions for SF and DF (other networks
behave similarly).
%
%
To understand which $n$ resolves collisions on global channels in DF, we use a
complete graph. Here, more layers are needed, since higher-multiplicity path
collisions appear (cf.~the 99\% tail).
%
Moreover, when more layers \emph{can} be used, a higher $\rho$ \emph{is better}
(cf.~FCT for $n=64$).  This reduces the maximum achievable path diversity, but
also keeps more links available for alternative routes \emph{within each
layer}, increasing chances of choosing disjoint paths. It also increases the
count of minimal paths in use across all entries, reducing total network load.

\ifsc
\begin{figure}[h]
\vspaceSQ{-0.5em}
\centering
\includegraphics[width=\columnwidth]{ndp_n_rho___mac_2.pdf}
\vspaceSQ{-1.5em}
\caption{\textmd{{Effects of the number of layers $n$ and the amount of remaining edges $\rho$ on FatPaths}, on long flows (size $1\si{\mebi\byte}$); $N \approx 10,000$ (htsim).}}
\vspaceSQ{-1em}
\label{fig:ndp_n_rho}
\end{figure}
\fi

\ifsq\enlargethispage{\baselineskip}\fi

\ifsc

\subsubsection{{FatPaths Scales to Large Networks}}
We also simulate large-scale SF, DF, and JF for $N = 80,000$ and $N =
1,000,000$ (other topologies lead to excessive memory use in the simulator).
Figure~\ref{fig:ndp_big} shows example results. A slight mean throughput
decrease compared to the smaller instances is noticeable, but latency and tail
FCTs remain tightly bounded. The comparatively bad tail performance of DF is
due to path overlap on the global links, where the adaptivity mechanism must
handle many overlapping flows. 
Our analysis also indicates that flows on SF tend to finish later that on
SF-JF.

\begin{figure}[h]
\vspace{-1.25em}
\centering
\includegraphics[width=0.475\columnwidth]{ndp_big_histo___mac_3.pdf} \includegraphics[width=0.5\columnwidth]{histogram___e___sc_2.pdf}%
\vspace{-0.5em}
\caption{\textmd{FatPaths on large networks; FCT histograms for flow size $1\si{\mebi\byte}$ (htsim).}}
%
%
%
\label{fig:ndp_big}
\end{figure}

\fi

\iftr

\subsubsection{{FatPaths Scales to Large Networks}}
We also simulate large-scale SF, DF, and JF (other topologies lead to excessive
memory use in the simulator).
We start with SF, SF-JF, and DF ($N \approx\num[group-separator={,}]{80000}$)
in Figure~\ref{fig:ndp_big}. A slight mean throughput decrease compared to the
smaller instances is noticeable, but latency and tail FCTs remain tightly
bounded. The comparatively bad tail performance of DF is due to path overlap on
the global links, where the adaptivity mechanism needs to handle high
multiplicities of overlapping flows. 
We also conduct runs with $N \approx 1,000,000$ endpoints. Here, we illustrate
the distribution of the FCT of flows for SF and SF-JF.  
Flows on SF tend to finish later that on SF-JF.

\begin{figure}[h]
\vspaceSQ{-0.75em}
\centering
\includegraphics[width=0.525\columnwidth]{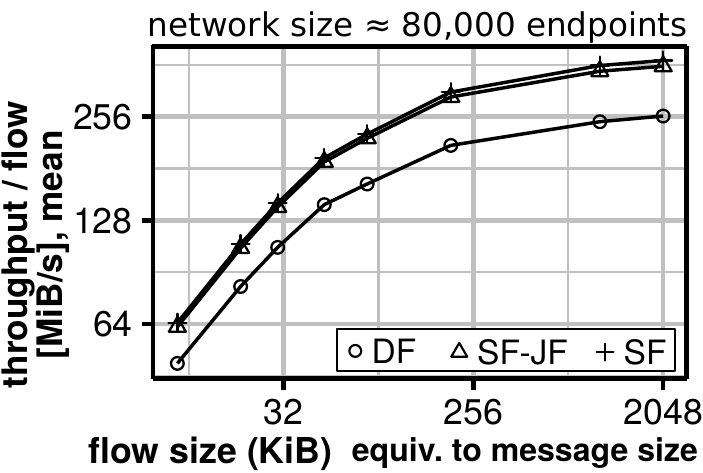}%
\includegraphics[width=0.475\columnwidth]{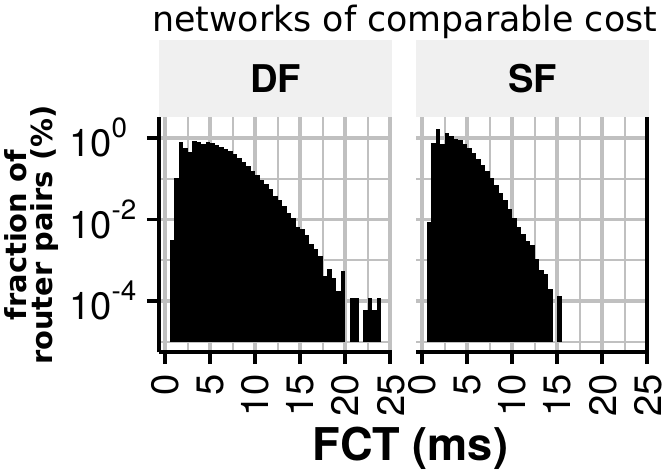}\\
\vspaceSQ{0.5em}
\includegraphics[width=1\columnwidth]{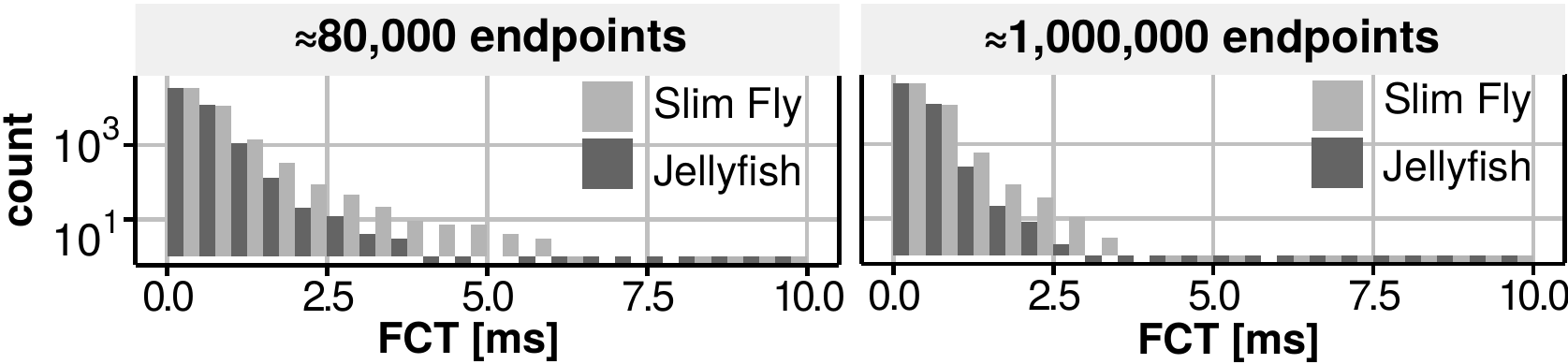}%
\vspaceSQ{-0.5em}
\caption{\textmd{FatPaths on large networks; FCT histograms for flow size $1\si{\mebi\byte}$ (htsim).}}
%
%
%
\vspaceSQ{-1em}
\label{fig:ndp_big}
\end{figure}

\fi





%

\subsection{Performance Analysis: Cloud Systems}
\label{sec:clouds}
 
We also analyze FatPaths on networks with Ethernet \emph{and full TCP
stack}. This represents TCP data centers often used as cloud
infrastructure~\cite{isobe2014tcp}.
Here, we use OMNeT++/INET.

We compare FatPaths to ECMP (traditional static load balancing) and
LetFlow (recent adaptive load balancing), see Figure~\ref{fig:tcp_topo}. The
number of layers was limited to $n=4$ to keep routing tables small; as they are
precomputed for all routers and loaded into the simulation in a configuration
file (this turned out to be a major performance and memory concern).
Most observations follow those from~\cref{sec:eval-ndp}, we only summarize
TCP-related insights.

\begin{figure}[t]
\vspaceSQ{-1em}
\centering
\includegraphics[width=0.5\textwidth]{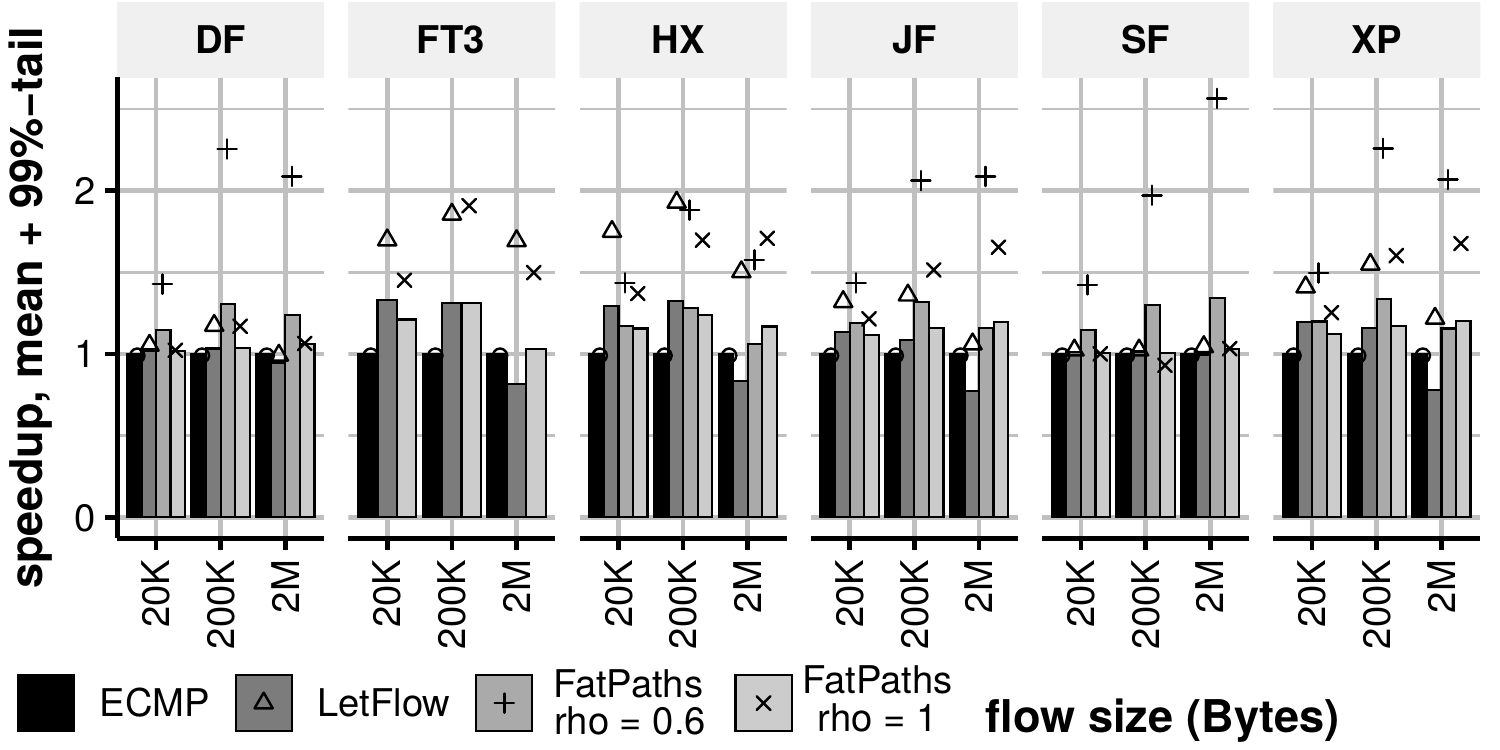}%
\vspaceSQ{-0.5em}
%
%
\caption{\textmd{{FatPaths+TCP} compared to 
ECMP and LetFlow} (mean and 99\% tail values).
Some flows on SF finish more than 2.5$\times$ faster with FatPaths than ECMP or LetFlow.}
\vspaceSQ{-0.5em}
\label{fig:tcp_topo}
\end{figure}

\iftr
\begin{figure*}[t]
\vspaceSQ{-0.5em}
\centering
\includegraphics[width=0.7\textwidth]{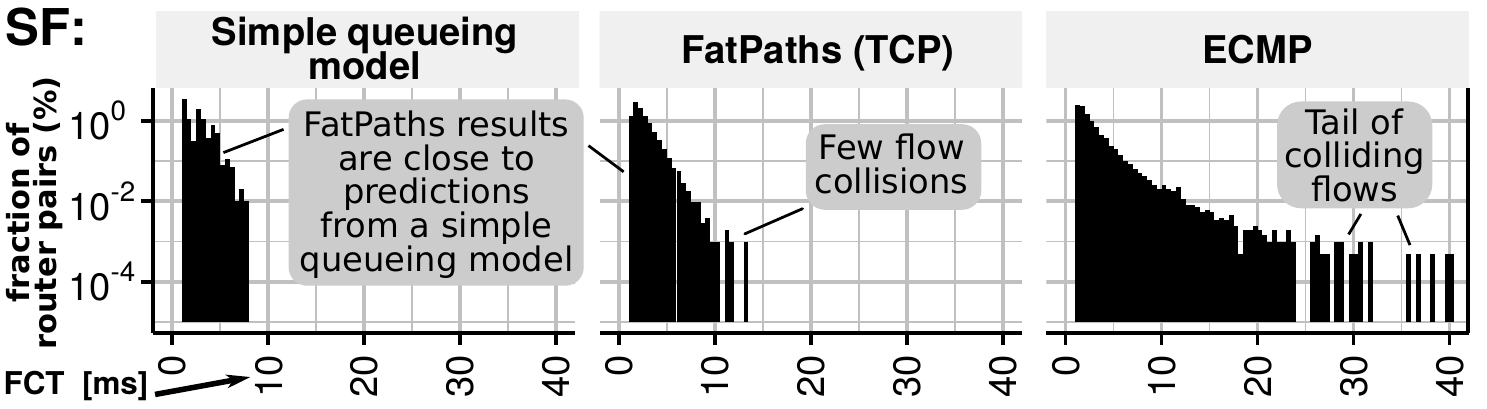}
\vspaceSQ{-1em}
\caption{\textmd{\textbf{FCT distribution of long flows ($1\si{\mebi\byte}$) on SF}. 
FatPaths on TCP with non-minimal routing approaches predictions from a simple queuing
model (model details omitted due to space constrains); ECMP has a long tail of
colliding flows.}}
\vspaceSQ{-1em}
\label{fig:tcp_histo}
\end{figure*}
\fi

LetFlow improves tail and short flow FCTs at the cost of long flow throughput,
compared to ECMP. Both are ineffective on SF and DF which have little
minimal-path diversity.  \emph{Non-minimal routing in FatPaths and $\rho=0.6$
fixes it}, even with only $n=4$ layers. On other topologies, even with minimal
paths ($\rho=1$), \emph{FatPaths adaptivity outperforms ECMP and LetFlow.}
\ifsc
A detailed analysis into the FCT
distributions in Figure~\ref{fig:tcp_histo} shows that with minimal routing and
low minimal-path diversity, there are many flows with low performance due to
path collisions and overlap, although they do not vastly affect the mean
throughput. \emph{FatPaths fully resolves this problem}.  Short-flow FCTs
are dominated by TCP flow control effects, which are not affected much by
routing changes.
\fi
\iftr
Specifically, \emph{with non-minimal routing, we observe large improvements in the tail FCT
as well as mean throughput in DF and SF}. An analysis of the FCT
distributions in Fig.~\ref{fig:tcp_histo} shows that with minimal routing and
low minimal-path diversity, there are many flows with low performance due to
path conflicts, although they do not vastly affect the mean
throughput. \emph{FatPaths fully resolves this issue and produces an FCT
distribution much closer to the queuing model prediction}. Short-flow FCTs are
dominated by TCP flow control effects, which are mostly unaffected by routing
changes.
\fi

\ifsc
\begin{figure}[h]
\vspaceSQ{-0.5em}
\centering
\includegraphics[width=\columnwidth]{tcp_histo___mac_2.pdf}
\vspaceSQ{-1em}
\caption{\textmd{\textbf{FCT distribution of long flows ($1\si{\mebi\byte}$) on SF}. 
FatPaths on TCP with non-minimal routing approaches predictions from a simple queuing
model (model details omitted due to space constrains); ECMP has a long tail of
colliding flows.}}
\vspaceSQ{-1em}
\label{fig:tcp_histo}
\end{figure}
\fi

\ifsq\enlargethispage{\baselineskip}\fi

We also analyze in detail performance effects in flows of different sizes
vs.~different layer configurations. The findings match those in the ``bare
Ethernet'' simulations in~\cref{sec:eval-ndp}.
For example, for large flows (1MiB), with $n=4$, the higher $\rho$ is, the
faster flows finish.  The largest impact of non-minimal routing is for DF and
SF, with a 2$\times$ improvement in tail FCT; small improvements on tail FCT
are seen in all topologies. 
%

\iftr
JFs, which provide some minimal-path diversity, can also benefit from
non-minimal routing in tail and short flow FCT. However, there is a cost in
long flow throughput due to the higher total network load with non-minimal
paths. To understand this effect better, Figure~\ref{fig:tcp_rho} shows the
impact of the fraction of remaining edges $\rho$ in each layer, and therefore
the amount of non-minimal paths, on FCT for long flows. The optimum choice of
$\rho$ matches the findings from the Ethernet simulations in~\cref{sec:eval-ndp} for SF and DF. 
For the
HyperX topologies, which provide two and three disjoint minimal paths for most
router pairs, a different effect can be observed: since there is a higher minimal path
diversity, they are typically selected more often over non-minimal ones. This is no
better than simply choosing random minimal paths, as it happens at $\rho=1$.
%
%
Using longer paths just
increase the network load and benefit only a few flows in the extreme tail. 
%

\begin{figure*}[b]
\vspaceSQ{-1em}
\centering
\includegraphics[width=0.65\textwidth]{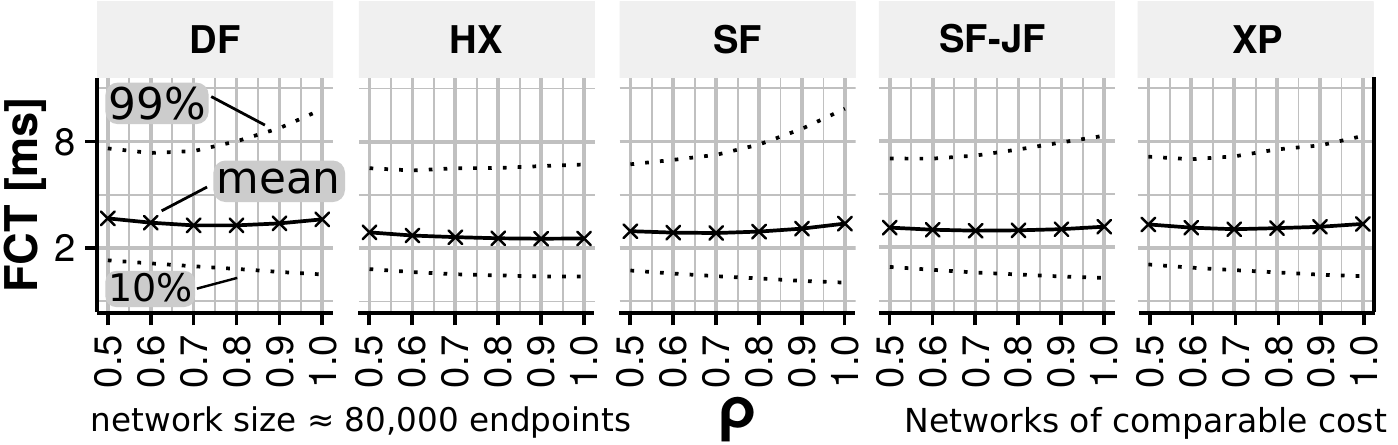}
\vspaceSQ{-0.5em}
\caption{\textmd{\textbf{Impact of $\rho$ on long-flow (1MiB) FCT with FatPaths on TCP},
$n=4$. The largest impact of non-minimal routing is for DF and SF, with
a 2$\times$ improvement in tail FCT; small improvements on tail FCT are seen in all
topologies, but there are no throughput improvements on networks with higher
minimal-path diversity.}}
\vspaceSQ{-0.5em}
\label{fig:tcp_rho}
\end{figure*}

\fi

\ifall

\begin{figure}[h!]
\centering
\includegraphics[width=0.96\columnwidth]{tcp_rho.pdf}
\vspaceSQ{-0.5em}
\caption{\textmd{(\cref{sec:eval-dctcp}) \textbf{Impact of $\rho$ on long-flow (1MiB) FCT with FatPaths on DCTCP} using
$n=4$ layers. The largest impact of non-minimal routing is for DF and SF, with
a 2$\times$ improvement in tail FCT; small improvements on tail FCT are seen in all
topologies, but there are no throughput improvements on networks with higher
minimal-path diversity.}}
\vspaceSQ{-2em}
\label{fig:tcp_rho}
\end{figure}

\fi

Besides FCT means/tails, we also consider a full completion time of a stencil
workload that is representative of an HPC application, in which processes
conduct local computation, communicate, and synchronize with a barrier; see
Figure~\ref{fig:tcp_topo_ttc}.
Results follow the same performance patterns as others. An interesting outcome
is JF: high values for LetFlow are caused by packet loss and do \emph{not}
affect the mean/99\% tail (cf.~Figure~\ref{fig:tcp_topo}), only the total
completion runtime.
{Overall, FatPaths ensures high speedups of completion times, e.g., more than
{2.5$\times$} and nearly {2$\times$} faster completion times on SF and XP, for
flows of the sizes of 200K and 2M bytes, respectively.}
  
\ifall\maciej{FIX}
As usual, mind the log scale, rho=0.6 LetFlow is often a factor of 2 faster.
Also LetFlow on a FatTree is amazing, sort-of as expected.  The

discussion is pretty much the same as for the 99\% results we had before: low
rho gives better tails at mean throughput cost, LetFlow can have a similar
effect, minimal routing (incl. rho=1 FatPaths) is doomed to suffer due to
path collisions.
\fi

{FatPaths also enables influencing communication latency: Specifically,
whenever lowest latency is prioritized, one can solely use a layer that
provides all shortest paths.  This ensures low latencies matching those
achieved with shortest-path routing in respective
networks~\mbox{\cite{besta2014slim}}.  For more throughput, one can use any
layer configuration offering diversity of almost-minimal paths.  Here, any
(marginal) latency overheads from the additional router-router hop are caused
by the properties of the underlying topology, \emph{not} the routing protocol.}


\begin{figure}[t]
\vspaceSQ{-1.25em}
\centering
\includegraphics[width=0.5\textwidth]{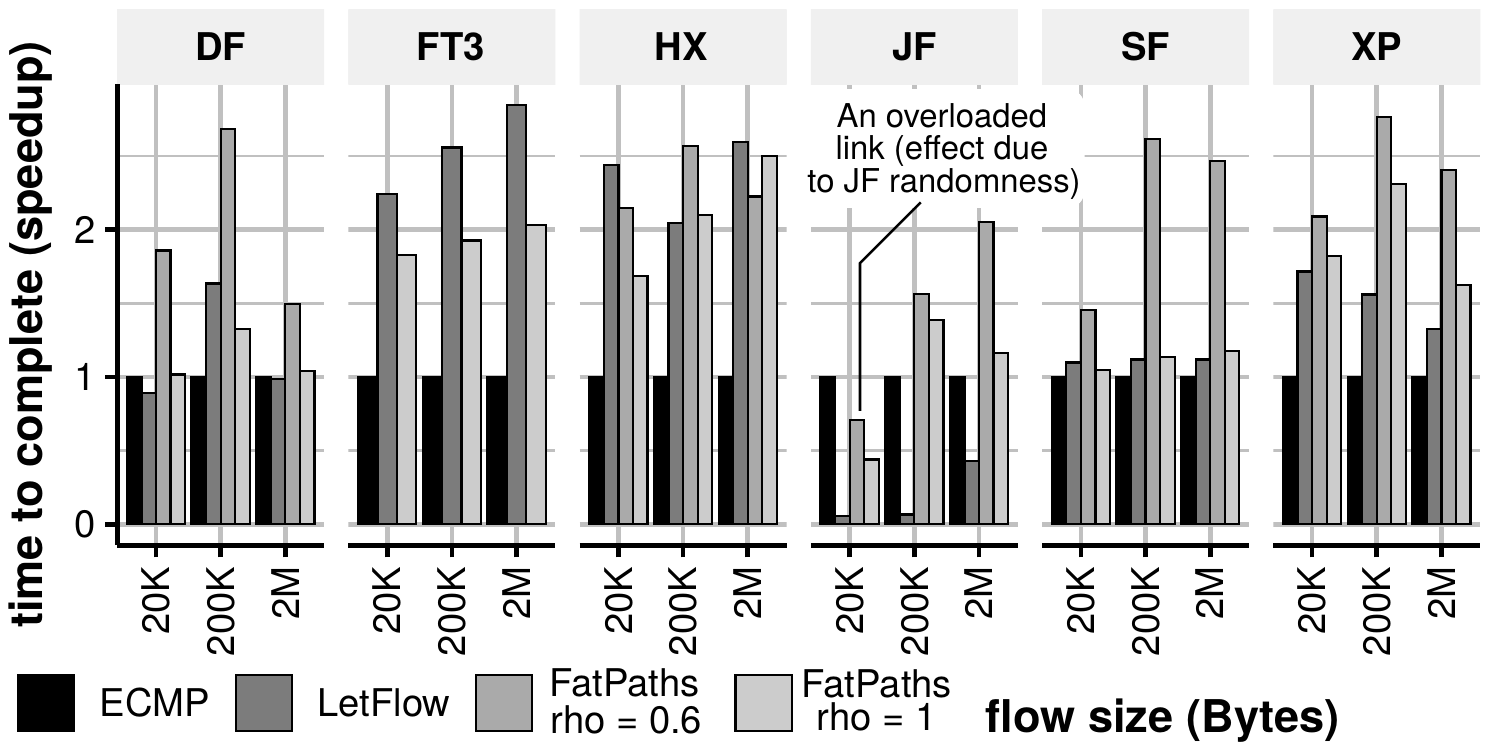}%
\vspaceSQ{-0.5em}
%
%
\caption{\textmd{FatPaths on TCP compared to 
ECMP and LetFlow (stencil + barrier).}}
\vspaceSQ{-1em}
\label{fig:tcp_topo_ttc}
\end{figure}

\subsection{Performance Analysis: Routing vs.~Topology}

How much performance gains in FatPaths come from its routing vs.~from simply the
benefits of low diameter~\cite{besta2014slim}?
Here, we extensively analyzed various design choices in FatPaths; full
description is in the extended report.  The takeaway is that simple past
routing schemes make low-diameter topologies worse ($\approx$2$\times$ and
more in FCT) than recent fat tree designs. This is because low diameter
\emph{must} be enhanced with effective tacking of flow conflicts and other
detrimental effects, which is addressed by multipathing in FatPaths.

\subsection{Performance Analysis: Impact from Partial Design Choices}

We also analyze speedups from \emph{specific parts of FatPaths}, e.g., only the
purified transport, flowlet load balancing, layered routing, or non-minimal
paths. While many of these elements can solely accelerate workloads in
low-diameter networks, \emph{it is the combination of effective non-minimal
multipath routing, load balancing, and transport that gives superior
performance}.
For example, Figure~\ref{fig:ndp_results_skewed} shows that fat trees with NDP
outperform low-diameter networks that do \emph{not} use multipathing based on
non-minimal paths (the ``NDP'' baseline).
\iftr
Next, different layer configurations ($\rho, n$) for various~$D$ are
investigated in Figure~\ref{fig:ndp_n_rho} and in~\cref{sec:eval-ndp} (bare
Ethernet systems), in Figure~\ref{fig:tcp_rho}, and
in~\cref{sec:clouds} (TCP systems). Differences (in FCT) across layer
configurations are up to 4$\times$; increasing both $n$ and $\rho$ maximizes
performance. Third, the comparison of adaptive load balancing (``LetFlow'')
based on flowlet switching vs.~static load balancing (``ECMP'') is in
Figure~\ref{fig:tcp_topo} and in~\cref{sec:clouds}; adaptivity improves tail
and short flow FCTs at the cost of long flow throughput. Fourth, in the
comparison of FatPaths with and without Purified Transport, performance with no
Purified Transport is always significantly worse. We also analyze performance
with and without layered routing (Figure~\ref{fig:tcp_topo}, ``ECMP'' and
``LetFlow'' use no layers at all); not using layers is detrimental for
performance on topologies that have no minimal-path diversity (e.g., SF or DF).
Next, we also study the impact of using \emph{only} the shortest paths in
FatPaths (Figure~\ref{fig:tcp_topo}, baseline~``$\rho=1$''); it is almost
always disadvantageous. Finally, the effect from workload randomization is
illustrated in Figures~\ref{fig:ndp_results_motiv} (randomization)
and~\ref{fig:ndp_results_skewed} (no randomization); randomization increases
throughput by $\approx$2$\times$.
\fi




\ifsq\enlargethispage{\baselineskip}\fi

\macb{\ul{Final Performance Takeaway}}
A high-performance routing architecture for low-diameter
networks should expose and use diversity of \emph{almost minimal} paths
(because they are numerous, unlike minimal paths).
\iftr
\emph{FatPaths enables this, giving speedups on both HPC systems such as
supercomputers or tightly coupled clusters, or cloud infrastructure such as
data centers}.
\fi
\ifsc
\emph{FatPaths enables this, achieving speedups on both HPC systems
or cloud infrastructure}.
\fi

\iftr
\begin{figure*}[t]
\vspace{-0.5em}
\centering%
\includegraphics[width=0.68\textwidth]{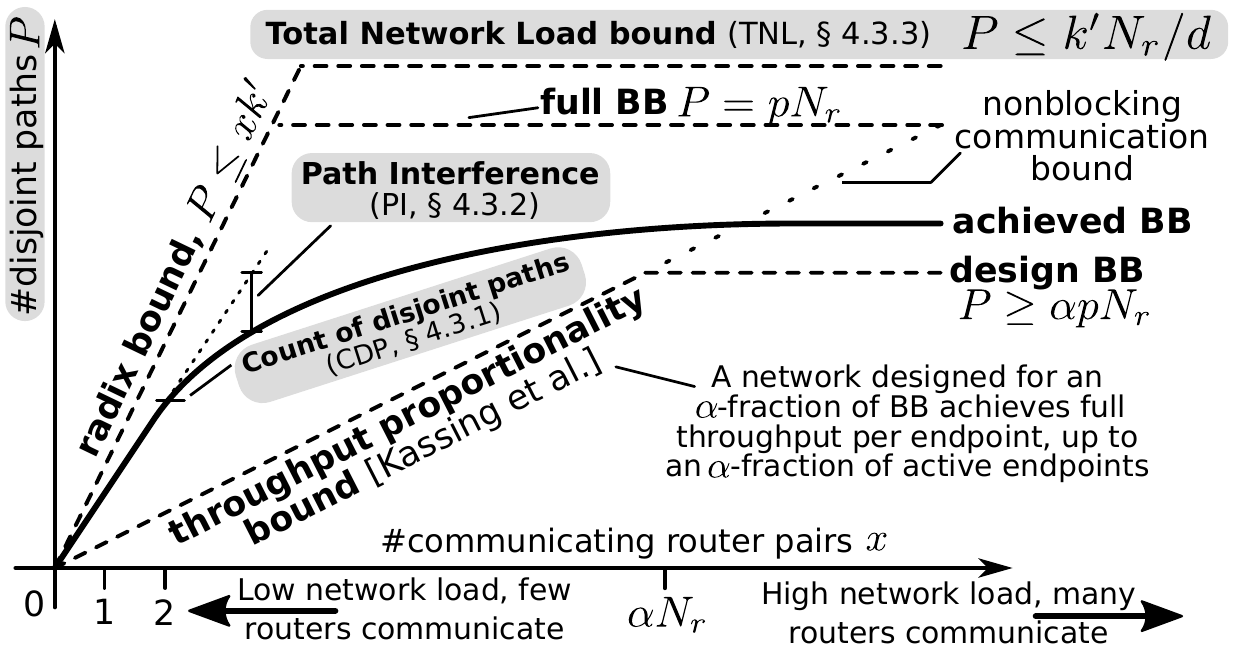}
\vspaceSQ{-0.5em}
\caption{\textmd{ 
Relations between connectivity- and BB-related measures.
Shade indicates metrics formalized and/or proposed as a part of FatPaths.
}}
\vspaceSQ{-1.5em}
\label{fig:paths-discussion}
\end{figure*}
\fi


\section{Discussion}
\label{sec:discussion}

\subsection{{Integration with Other Protocols for Wide Applicability}}

We also discuss integrating FatPaths with Data Center TCP (DCTCP)~\cite{alizadeh2011data},
RDMA~\cite{fompi-paper} (iWARP~\cite{iwarp},
RoCE~\cite{infiniband2014rocev2}), Infiniband~\cite{pfister2001introduction},
MPTCP~\cite{raiciu2011improving}, and non-minimal ECMP on FatPaths.
\ifsc
Details are in the full report.
\fi
%
%

\iftr
\subsubsection{DCTCP}
\ 
We rely on DCTCP's packet loss to detect and avoid highly congested paths, but
also to detect failed links and routes missing in specific layers.  This
requires only minor changes to the TCP stack: On segment reception, the layer
tag in the address is masked out to not interfere with normal operation, and on
transmission, the respective bits are populated with the layer tag stored for
the flow. The layer tag is initialized to 1, to use a minimal path by default,
and randomly modified whenever the congestion window is reduced by DCTCP (which
leads to a flowlet boundary), or when an ICMP error is received. By responding
to ICMP errors, layered flowlet routing also provides basic fault tolerance.

\subsubsection{MPTCP}
\ 
We also provide a variant of FatPaths that uses MPTCP for congestion control,
as it already provides basic infrastructure and authentication mechanisms for
setting up multiple data streams. Our design uses ECN as a measure of
congestion instead of packet loss. If an incoming ACK packet does not have the
ECN field set, we increase the window analogously to the traditional TCP.
Otherwise, (every roundtrip time (RTT)) we update the congestion window size
(win size) accordingly.

\subsubsection{Standard TCP}
\
We also considered standard TCP.  As a ECN-enabled TCP variant, we extended the
standard TCP implementation with ECN echo according to the IETF DCTCP draft and
added the matching congestion control algorithm as a module derived from
TCPReno which can be used with standard TCP. Since we defined queue limits in
packets, we limit the congestion window size to integer multiples of the
maximum segment size, to make sure full size segments are transmitted. 
%

\subsubsection{ECMP on Non-Minimal Paths}
Our layered design could effectively enable \emph{ECMP on non-minimal paths}.
Specifically, minimal paths in a sparsified layer are effectively non-minimal
when considering all the layers (i.e., the whole network). Such layers could
then be used by ECMP for multipathing. 

\ifall
Specifically, to deploy ECMP on non-minimal paths, one would use polarization,
an ECMP-related issue~\cite{broadcom_smarthash, rezaaifar2008equivalent}.
Namely, for a given TCP flow, successive switches select identical output ports
for each packet, which may leave some links unused.  This effect is called
  \emph{polarization}.

Now, layered routing can even be implemented using ECMP hardware.
In ECMP, a switch selects the $i$-th forwarding rule out of $n$ equal-cost
options based on a pseudo-random hash value computed from the forwarding
parameters, including the pseudo-random TCP source port. In normal ECMP
deployments, these hash functions are randomized, to make sure the switches
select different combinations of $i$'s for varying source ports. Instead, we
can deliberately introduce \emph{polarization} by selecting \emph{the same hash
function for each switch}. Now, each forwarding tuple will get the same $i$
assigned over the entire network, and we can use $i$ as a layer tag, populating
each slot in the routing table with with a different forwarding function
$\sigma_i$. The paths in each slot might not be equal-cost, and the resulting
forwarding table not correct according to ECMP rules, but thanks to
polarization, the routing will exactly represent our layered routing and
therefore also be loop-free. A minor disadvantage of this approach is that we
can't systematically choose which layer $i$ a flow will use, since $i$ is still
determined by a hash function; however this is not a problem with the proposed
flowlet routing, which randomly chooses layers and will not be affected by
another randomization step.
\fi

\subsubsection{Integration with RDMA}
As FatPaths fully preserves the semantics of TCP, one could seamlessly
use iWARP~\cite{iwarp} on top of FatPaths.
%
%
FatPaths could also be used together with RoCE~\cite{infiniband2014rocev2}.
RoCE has traditionally relied on Ethernet with Priority Flow
Control~\cite{pfc_paper} (PFC) for lossless data transfer.  However, numerous
works illustrate that PFC introduces inherent issues such as head-of-line
blocking~\cite{mittal2018revisiting, le2018rogue, zhu2015congestion,
guo2016rdma}.  Now, the design of FatPaths reduces counts of dropped packets to
almost zero ($\le 0.01$\%) due to flowlet load balancing.  With its
packet-oriented design and a thin protocol layer over simple Ethernet,
\emph{FatPaths could become the basis for RoCE.}
Moreover, many modern RDMA schemes (e.g., work by Lu et al.~\cite{lu2018multi})
are similar to NDP in that they, e.g., also use packet spraying. Thus, many of our
results may be representative for such RDMA environments. For example, using RDMA
on top of FatPaths could provide similar advantages on low-diameter topologies
as presented in Figure~\ref{fig:ndp_results_motiv}
and~\ref{fig:ndp_results_skewed}. We leave this for future work.

\subsubsection{Enhancing Infiniband}
Although we focus on Ethernet, most of the schemes in FatPaths \emph{do not
assume anything Ethernet-specific} and they could \emph{straightforwardly
enhance IB routing architecture}.  For example, all the insights from
path diversity analysis, layered routing for multi-pathing, or flowlet load
balancing, \emph{could also be used with IB.} We leave these directions for
future work.
\fi

\ifsc
%
%
\subsection{{Integration with Performance Measures and Bounds}}

For deeper understanding, we \emph{intuitively} connect our path diversity
measures to established network performance measures and bounds (e.g.,
bisection bandwidth (BB) or throughput
proportionality~\cite{kassing2017beyond}). Figure~\ref{fig:paths-discussion}
shows how various measures vary when increasing the network load expressed by
count of communicating router pairs~$x$. Values of measures are expressed with
numbers of disjoint paths~$P$. In this expression, bandwidth measures are
counts of disjoint paths between two router sets; these numbers must match
corresponding counts in the original measure definitions (e.g., path count
associated with BB must equal the BB cut size).

\fi
\iftr

\subsection{Integration with Performance Measures and Bounds}

For deeper understanding, we \emph{intuitively} connect our path diversity
measures to established network performance measures and bounds (e.g.,
bisection bandwidth (BB) or throughput
proportionality~\cite{kassing2017beyond}). Figure~\ref{fig:paths-discussion}
shows how various measures vary when increasing the network load expressed by
count of communicating router pairs~$x$. Values of measures are expressed with
numbers of disjoint paths~$P$.  In this expression, bandwidth measures are
counts of disjoint paths between two router sets; these numbers must match
corresponding counts in the original measure definitions (e.g., path count
associated with BB must equal the BB cut size).

\fi


\ifsc
\begin{figure}[t]
\vspaceSQ{-1em}
\centering%
\includegraphics[width=0.96\columnwidth]{connectivity_mac_4.pdf}
\vspaceSQ{-0.5em}
\caption{\textmd{ 
Relations between connectivity- and BB-related measures.
Shade indicates metrics formalized and/or proposed as a part of FatPaths.
}}
\vspaceSQ{-1.5em}
\label{fig:paths-discussion}
\end{figure}
\fi

\ifall

\subsection{FatPaths Limitations}
%
%
To facilitate applicability of our work in real-world installations, we discuss
FatPaths' limitations. First, as FatPaths addresses low-diameter topologies, it
is less advantageous on high-diameter older interconnects such as torus.  This
is mostly because such networks provide multiple (almost or completely
disjoint) shortest paths between most router pairs.
Second, FatPaths inherits some of NDP's limitations, namely interrupt
throttling. Similarly to NDP, we alleviate this by dedicating a single CPU core
to polling for incoming packets.
Finally, even if FatPaths delivers decent performance for non-randomized
workloads (as shown in~\cref{sec:eval-ndp} and in
Figure~\ref{fig:ndp_results_skewed}), it ensures much higher performance with
workload randomization. Yet, as discussed in~\cref{sec:overview}, this is (1) a
standard technique in HPC systems and (2) it is not detrimental for application
performance on low-diameter networks that -- by design -- have very low
latencies for all router pairs.

\fi

\subsection{{FatPaths Limitations}}
%
%
To facilitate applicability of our work in real-world installations, we discuss
its limitations. First, FatPaths addresses low-diameter topologies,
being less beneficial on high-diameter older interconnects such as torus,
because such networks provide multiple (almost or completely disjoint)
shortest paths between most router pairs.
FatPaths also inherits some NDP's limitations, namely interrupt
throttling. As in NDP, this is fixed by dedicating one CPU core
to polling for incoming packets.

\iftr

\subsubsection{Deriving Path Diversity Efficiently}
We developed algorithms to efficiently compute diversity measures, basing on
schemes by Cheung~et.~al~\cite{cheung2013},
Ford-Fulkerson~\cite{ford1956maximal}, Gomory-Hu~\cite{panigrahi2008gomory},
Gusfield~\cite{gusfield1990}.  As these algorithms are outside the core thread
of this work, they are detailed in the Appendix. 
\fi

\ifall
\maciej{not sure if we need it. For now, just a list of stuff below.}

\begin{itemize}
\item 
There is a tradeoff between path shortness and path diversity; we can use a topology with long shortest paths or non-minimal paths
to get diversity. 
\item 
To save resources, we need to reduce the demand for path diversity
(random mapping). Then, we get away with shorter paths than Clos.
\item
In low diameter topologies, basically all $d+1$ length paths are
disjoint (for $l=2$ paths can only be equal or disjoint and $l=3$ is not too
different).
\item
Anything that increases path length a bit is good, as long as paths
do not get excessively long (spanning trees, Valiant).
\item
Layers are a simple way to handle non-minimal paths.
\item
Things are highly asymmetrical, therefore we need adaptive load balancing
\item
Flowlets are simple and effective, and can be applied on top of
almost anything.
\item
Once we have adaptivity, fault tolerance is easy.
\item
To prevent evil tail latencies, we need sane flow control -- almost anything
that is not TCP will do.
\marcel{This is where the big gains can be made, also with flow scheduling,
but IMO not our problem. Though it might not hurt to mention it more.}
\end{itemize}

\fi

\ifsq
\enlargethispage{\baselineskip}
\fi

\section{Related Work}

FatPaths touches on various areas.  We now briefly discuss related works,
excluding the ones covered in past sections.
%

%
Our work targets modern \textbf{low-diameter topologies} such as Slim
Fly~\cite{besta2014slim, singla2012jellyfish, valadarsky2015, ahn2009hyperx, kim2008technology}.
%
%
\emph{FatPaths enables these networks to
achieve low latency and high throughput with various workloads, outperforming
similar-cost fat trees.}

%
We analyze \textbf{routing} schemes in Table~\ref{tab:intro} and
in Section~\ref{sec:theory}. \emph{FatPaths is the first to offer generic and adaptive
multi-pathing using both shortest and non-shortest disjoint paths.}
\iftr
Here, we discuss only layered designs similar to FatPaths.
%
%
Layered routing exists is various schemes. SPAIN~\cite{mudigonda2010spain} is
similar to FatPaths but limits layers to trees. 
This brings drawbacks, see Section~\ref{sec:theory}. 
Other protocols are similar to SPAIN, for example PAST~\cite{stephens2012past}
that uses one spanning tree per host.  However, PAST does not enable multiple
paths between two servers, because it uses one spanning tree per
destination~\cite{hu2016explicit}.
%
%


Some schemes complement FatPaths in their focus. For example,
XPath~\cite{hu2016explicit} and source routing~\cite{jyothi2015towards} could
be used together with FatPaths, for example, by providing effective means to
encode the rich path diversity exposed by FatPaths.
\fi

%
Adaptive \textbf{load balancing} can be implemented using
flows~\cite{curtis2011mahout, rasley2014planck, sen2013localflow,
tso2013longer, benson2011microte, zhou2014wcmp, al2010hedera,
kabbani2014flowbender, hopps2000analysis}, flowcells (fixed-sized packet
series)~\cite{he2015presto}, and packets~\cite{zats2012detail, handley2017re,
dixit2013impact, cao2013per, perry2015fastpass, zats2012detail,
raiciu2011improving, ghorbani2017drill}. We choose an intermediate level,
flowlets (variable-size packet series)~\cite{katta2016clove, alizadeh2014conga,
vanini2017letflow, katta2016hula, kandula2007dynamic}. 
\iftr
In contrast, HPC networks often use packet level adaptivity,
focusing on choosing good congestion signals, often with hardware
modifications~\cite{garcia2012ofar,garcia2013efficient}. We use flowlet
elasticity as introduced by Vanini et~al.~\cite{vanini2017letflow} (which does
not need any hardware support) as an adaptivity component to use non-minimal
paths efficiently. 
\fi
FatPaths is the first architecture to use load balancing based on flowlets
\emph{for low-diameter networks}.

%
We do not compete with \textbf{congestion or flow control} schemes; we use them
for more performance. \emph{FatPaths can use any such scheme in its
design}~\cite{mittal2015timely, cardwell2016bbr, alizadeh2011data, he2016acdc,
zhuo2016rackcc, handley2017re, raiciu2011improving, bai2014pias,
alizadeh2013pfabric, vamanan2012deadline, lu2016sed, hwang2014deadline,
montazeri2018homa, jiang2008explicit, banavalikar2016credit,
alasmar2018polyraptor}.
%
%
\iftr
In many HPC networks, flow control is implemented in hardware, since the
network is lossless. A backpressure mechanism propagates congestion
information through each link up to the endpoints. In contrast, Ethernet-based
networks are lossy and flow as well as congestion control needs to be
implemented end-to-end, in software, or with schemes such as
PFC~\cite{ieee802.1bb, pfc_paper}. Besides the classical TCP Reno and Vegas
algorithms, there are more recent schemes such as
TIMELY~\cite{mittal2015timely}, BRR~\cite{cardwell2016bbr} and
DCTCP~\cite{alizadeh2011data}, which explicitly require and use ECN.
AC/DC~\cite{he2016acdc} move congestion control to the hypervisor, and
RackCC~\cite{zhuo2016rackcc} even moved it into the network hardware itself.
Recently, Handley et al.~\cite{handley2017re} redesigned end-to-end
flow control in Clos topologies entirely, with the NDP protocol (that we
incorporate for low-diameter networks in FatPaths). They propose simple changes
to the network hardware to support effective per-endpoint flow control, which
resolves many well-known issues with TCP, such as the \emph{incast problem}.
Another direction is MPTCP~\cite{raiciu2011improving}, which, among other
aspects, allows moving network adaptivity into the endpoint software.  More
sophisticated flow scheduling was proposed by PIAS~\cite{bai2014pias} or
pFabric~\cite{alizadeh2013pfabric}. Such methods could be combined with our
routing scheme to improve real-world application performance.
\fi


%
Many works on \textbf{multi-pathing} exist~\cite{zhou2014wcmp, benet2018mp, cao2013per,
greenberg2009vl2, sen2013localflow, caesar2010dynamic, kassing2017beyond,
mudigonda2010spain, perry2015fastpass,
aggarwal2016performance, suurballe1984quick, huang2009performance,
sohn2006congestion, li2013openflow, bredel2014flow, van2011revisiting,
benet2018mp, suchara2011network}.
Our work differs from them all: \emph{it focuses on path diversity in
low-diameter topologies and it uses both minimal and non-minimal paths}.

%
Some works analyze various \textbf{properties of low-diameter topologies}, for
example path length, throughput, and bandwidth~\cite{valadarsky2015,
kathareios2015cost, jyothi2016measuring, singla2012jellyfish,
kassing2017beyond, li2018exascale, besta2018slim, kawano2018k,
harsh2018expander, kawano2016loren, truong2016layout, flajslik2018megafly,
kawano2017layout, azizi2016hhs, truong2016distributed, al2017new}.
\emph{FatPaths offers the most extensive analysis on path diversity so far}.

%
Some schemes complement FatPaths. For example, XPath~\cite{hu2016explicit} and
source routing~\cite{jyothi2015towards} deliver means to \textbf{encode
different paths}. They could be used \emph{together with FatPaths} by encoding
the rich path diversity \emph{exposed by FatPaths}.
Finally, FatPaths could be used to accelerate communication-efficient workloads
that benefit from low-diameter properties of Slim Fly and other modern
topologies, including deep learning~\cite{ben2019modular,
gronquist2019predicting, li2020taming, oyama2018accelerating, ben2018neural,
ben2019demystifying}, linear algebra computations~\cite{besta2020communication,
besta2017slimsell, solomonik2017scaling, kwasniewski2019red}, graph
processing~\cite{besta2015accelerating, besta2019substream, besta2019slim,
besta2019graph, besta2018slim, besta2017push, besta2018log, besta2018survey,
gianinazzi2018communication, besta2019practice, besta2019demystifying}, and
other distributed workloads~\cite{besta2015active, besta2014fault, fompi-paper,
ziogas2019data, ben2019stateful, di2019network} and
algorithms~\cite{schweizer2015evaluating, schmid2016high, sutton2018optimizing,
tate2014programming}. One could possibly use some elements of the FatPaths
routing for the associated problems in the on-chip networking~\cite{de2018transformations, besta2018slim}.



\section{Conclusion}

We introduce \textbf{\emph{FatPaths: a simple, high-performance, and robust
{routing architecture} for a modern family of low-diameter topologies}}.
FatPaths enables such networks to achieve unprecedented performance by exposing
the rich (``fat'') diversity of minimal \emph{and non-minimal} paths.
We formalize and extensively analyze this path diversity and show that, even
though the considered topologies \emph{fall short of shortest paths}, they can
accommodate three ``almost'' minimal disjoint paths, which is enough to avoid
congestion in many traffic scenarios.
%
%
Our path diversity metrics and methodology can be used to analyze other
properties of networks.


The key part of FatPaths, layered routing, enables harnessing diversity of both
shortest and non-minimal paths.  Supported with simple yet effective flowlet
load balancing, and high-performance transport in TCP settings, FatPaths
achieves low-latency and high-bandwidth, outperforming very recent fat tree
architectures~\cite{handley2017re} by 15\% in net throughput at 2$\times$ in
latency, for comparable cost.
Even though we focus on Ethernet, most of these schemes -- for example adaptive
flowlet load balancing and layers -- are generic and they could enhance
technologies such as RDMA (RoCE, iWARP) and Infiniband.

We deliver simulations with up to  \emph{one million} endpoints.  Our code is
online and can be used to foster novel research on next-generation large-scale
compute centers.

FatPaths uses Ethernet for maximum versatility.  We argue that it can
accelerate both HPC clusters or supercomputers as well as data centers and
other types of cloud infrastructure.  FatPaths will help to bring the areas of
HPC networks and cloud computing closer, fostering technology transfer and
facilitating exchange of ideas.

\ifsq\enlargethispage{\baselineskip}\fi

\vspace{1em}

\ifsq

\def\baselinestretch{0.9}\selectfont
\scriptsize 

\fi

{\small
\macb{Acknowledgements: }
We thank Mark Klein, Hussein Harake, Colin McMurtrie, Angelo Mangili, and the
whole CSCS team granting access to the Ault and Daint machines, and for their
excellent technical support. We thank Timo Schneider for his immense help with
computing infrastructure at SPCL. We thank Nils Blach, Alessandro Maisen, and Adam Latos for
useful comments that helped improve the quality of the paper. 
}

\ifsq

\normalsize
\def\baselinestretch{1.0}\selectfont

\fi

\bibliographystyle{IEEEtran}


\bibliography{../../bibl_conf}

\newcommand{\SortNoop}[1]{}
\begin{thebibliography}{100}
\providecommand{\url}[1]{#1}
\csname url@samestyle\endcsname
\providecommand{\newblock}{\relax}
\providecommand{\bibinfo}[2]{#2}
\providecommand{\BIBentrySTDinterwordspacing}{\spaceskip=0pt\relax}
\providecommand{\BIBentryALTinterwordstretchfactor}{4}
\providecommand{\BIBentryALTinterwordspacing}{\spaceskip=\fontdimen2\font plus
\BIBentryALTinterwordstretchfactor\fontdimen3\font minus
  \fontdimen4\font\relax}
\providecommand{\BIBforeignlanguage}[2]{{%
\expandafter\ifx\csname l@#1\endcsname\relax
\typeout{** WARNING: IEEEtran.bst: No hyphenation pattern has been}%
\typeout{** loaded for the language `#1'. Using the pattern for}%
\typeout{** the default language instead.}%
\else
\language=\csname l@#1\endcsname
\fi
#2}}
\providecommand{\BIBdecl}{\relax}
\BIBdecl

\bibitem{handley2017re}
M.~Handley, C.~Raiciu, A.~Agache, A.~Voinescu, A.~W. Moore, G.~Antichi, and
  M.~Wojcik, ``Re-architecting datacenter networks and stacks for low latency
  and high performance,'' in \emph{ACM SIGCOMM}, 2017.

\bibitem{dongarra1997top500}
J.~J. Dongarra, H.~W. Meuer, E.~Strohmaier \emph{et~al.}, ``Top500
  supercomputer sites,'' \emph{Supercomputer}, vol.~13, pp. 89--111, 1997.

\bibitem{feng2007green500}
W.-c. Feng and K.~Cameron, ``The green500 list: Encouraging sustainable
  supercomputing,'' \emph{Computer}, vol.~40, no.~12, pp. 50--55, 2007.

\bibitem{valadarsky2015}
A.~Valadarsky, M.~Dinitz, and M.~Schapira, ``Xpander: Unveiling the secrets of
  high-performance datacenters,'' in \emph{{ACM} HotNets}, 2015.

\bibitem{vanini2017letflow}
E.~Vanini, R.~Pan, M.~Alizadeh, P.~Taheri, and T.~Edsall, ``Let it flow:
  Resilient asymmetric load balancing with flowlet switching,'' in
  \emph{{NSDI}}, 2017, pp. 407--420.

\bibitem{infiniband2014rocev2}
{Infiniband Trade Association and others}, ``Rocev2,'' 2014.

\bibitem{fompi-paper}
R.~Gerstenberger, M.~Besta, and T.~Hoefler, ``{Enabling Highly-scalable Remote
  Memory Access Programming with MPI-3 One Sided},'' in \emph{ACM/IEEE
  Supercomputing}, 2013.

\bibitem{azodolmolky2013cloud}
S.~Azodolmolky, P.~Wieder, and R.~Yahyapour, ``Cloud computing networking:
  Challenges and opportunities for innovations.'' \emph{IEEE Communications
  Magazine}, vol.~51, no.~7, pp. 54--62, 2013.

\bibitem{niranjan2009portland}
R.~Niranjan~Mysore, A.~Pamboris, N.~Farrington, N.~Huang, P.~Miri,
  S.~Radhakrishnan, V.~Subramanya, and A.~Vahdat, ``Portland: a scalable
  fault-tolerant layer 2 data center network fabric,'' \emph{ACM SIGCOMM CCR},
  vol.~39, no.~4, pp. 39--50, 2009.

\bibitem{singla2012jellyfish}
A.~Singla, C.-Y. Hong, L.~Popa, and P.~B. Godfrey, ``{Jellyfish: Networking
  data centers randomly},'' \emph{9th USENIX Symposium on Networked Systems
  Design and Implementation (NSDI)}, 2012.

\bibitem{hopps2000analysis}
C.~Hopps, ``{RFC 2992: Analysis of an Equal-Cost Multi-Path Algorithm},'' 2000.

\bibitem{kandula2007dynamic}
S.~Kandula, D.~Katabi, S.~Sinha, and A.~Berger, ``Dynamic load balancing
  without packet reordering,'' \emph{ACM SIGCOMM Computer Communication
  Review}, vol.~37, no.~2, pp. 51--62, 2007.

\bibitem{alizadeh2011data}
M.~Alizadeh, A.~Greenberg, D.~A. Maltz, J.~Padhye, P.~Patel, B.~Prabhakar,
  S.~Sengupta, and M.~Sridharan, ``{Data center TCP (DCTCP)},'' \emph{ACM
  SIGCOMM computer communication review}, vol.~41, no.~4, pp. 63--74, 2011.

\bibitem{raiciu2011improving}
C.~Raiciu, S.~Barre, C.~Pluntke, A.~Greenhalgh, D.~Wischik, and M.~Handley,
  ``Improving datacenter performance and robustness with multipath {TCP},'' in
  \emph{Proceedings of the {ACM} {SIGCOMM} 2011 Conference on Applications,
  Technologies, Architectures, and Protocols for Computer Communications},
  2011, pp. 266--277.

\bibitem{pfister2001introduction}
G.~F. Pfister, ``An introduction to the infiniband architecture,'' \emph{High
  Performance Mass Storage and Parallel I/O}, vol.~42, pp. 617--632, 2001.

\bibitem{iwarp}
R.~Grant, M.~Rashti, A.~Afsahi, and P.~Balaji, ``{RDMA Capable iWARP over
  Datagrams},'' in \emph{Par. Dist. Proc. Symp. (IPDPS), 2011 IEEE Intl.},
  2011, pp. 628--639.

\bibitem{valiant1982scheme}
L.~G. Valiant, ``A scheme for fast parallel communication,'' \emph{SIAM journal
  on computing}, vol.~11, no.~2, pp. 350--361, 1982.

\bibitem{perlman1985algorithm}
R.~Perlman, ``An algorithm for distributed computation of a spanningtree in an
  extended lan,'' in \emph{ACM SIGCOMM CCR}, vol.~15, no.~4.\hskip 1em plus
  0.5em minus 0.4em\relax ACM, 1985, pp. 44--53.

\bibitem{moy1997ospf}
J.~Moy, ``Ospf version 2,'' Tech. Rep., 1997.

\bibitem{rekhter2005border}
Y.~Rekhter, T.~Li, and S.~Hares, ``A border gateway protocol 4 (bgp-4),'' Tech.
  Rep., 2005.

\bibitem{malkin1994rip}
G.~Malkin, ``Rip version 2-carrying additional information,'' Tech. Rep., 1994.

\bibitem{oran1990osi}
D.~Oran, ``Osi is-is intra-domain routing protocol,'' Tech. Rep., 1990.

\bibitem{kim2008technology}
J.~Kim, W.~J. Dally, S.~Scott, and D.~Abts, ``Technology-driven,
  highly-scalable dragonfly topology,'' in \emph{35th International Symposium
  on Computer Architecture ({ISCA})}, 2008, pp. 77--88.

\bibitem{villamizar1999ospf}
C.~Villamizar, ``{OSPF optimized multipath (OSPF-OMP)},'' 1999.

\bibitem{dixit2013impact}
A.~Dixit, P.~Prakash, Y.~C. Hu, and R.~R. Kompella, ``On the impact of packet
  spraying in data center networks,'' in \emph{INFOCOM, 2013 Proceedings
  IEEE}.\hskip 1em plus 0.5em minus 0.4em\relax IEEE, 2013, pp. 2130--2138.

\bibitem{guo2008dcell}
C.~Guo, H.~Wu, K.~Tan, L.~Shi, Y.~Zhang, and S.~Lu, ``Dcell: a scalable and
  fault-tolerant network structure for data centers,'' in \emph{ACM SIGCOMM
  Computer Communication Review}, vol.~38, no.~4.\hskip 1em plus 0.5em minus
  0.4em\relax ACM, 2008, pp. 75--86.

\bibitem{greenberg2008towards}
A.~Greenberg, P.~Lahiri, D.~A. Maltz, P.~Patel, and S.~Sengupta, ``Towards a
  next generation data center architecture: scalability and commoditization,''
  in \emph{ACM PRESTO}, 2008.

\bibitem{ghorbani2017drill}
S.~Ghorbani, Z.~Yang, P.~Godfrey, Y.~Ganjali, and A.~Firoozshahian, ``Drill:
  Micro load balancing for low-latency data center networks,'' in \emph{ACM
  SIGCOMM}, 2017.

\bibitem{sen2013localflow}
S.~Sen, D.~Shue, S.~Ihm, and M.~J. Freedman, ``Scalable, optimal flow routing
  in datacenters via local link balancing,'' in \emph{CoNEXT}, 2013.

\bibitem{cao2013per}
J.~Cao, R.~Xia, P.~Yang, C.~Guo, G.~Lu, L.~Yuan, Y.~Zheng, H.~Wu, Y.~Xiong, and
  D.~Maltz, ``Per-packet load-balanced, low-latency routing for clos-based data
  center networks,'' in \emph{ACM CoNEXT}, 2013, pp. 49--60.

\bibitem{greenberg2009vl2}
A.~Greenberg, J.~R. Hamilton, N.~Jain, S.~Kandula, C.~Kim, P.~Lahiri, D.~A.
  Maltz, P.~Patel, and S.~Sengupta, ``{VL2: a scalable and flexible data center
  network},'' \emph{ACM SIGCOMM computer communication review}, vol.~39, no.~4,
  pp. 51--62, 2009.

\bibitem{alfares2008scalable}
M.~Al{-}Fares, A.~Loukissas, and A.~Vahdat, ``A scalable, commodity data center
  network architecture,'' in \emph{ACM {SIGCOMM}}, 2008, pp. 63--74.

\bibitem{guo2009bcube}
C.~Guo, G.~Lu, D.~Li, H.~Wu, X.~Zhang, Y.~Shi, C.~Tian, Y.~Zhang, and S.~Lu,
  ``Bcube: a high performance, server-centric network architecture for modular
  data centers,'' \emph{ACM SIGCOMM CCR}, vol.~39, no.~4, pp. 63--74, 2009.

\bibitem{kim2008floodless}
C.~Kim, M.~Caesar, and J.~Rexford, ``Floodless in seattle: a scalable ethernet
  architecture for large enterprises,'' in \emph{ACM SIGCOMM}, 2008, pp. 3--14.

\bibitem{lui2002star}
K.-S. Lui, W.~C. Lee, and K.~Nahrstedt, ``Star: a transparent spanning tree
  bridge protocol with alternate routing,'' \emph{ACM SIGCOMM CCR}, vol.~32,
  no.~3, pp. 33--46, 2002.

\bibitem{rodeheffer2000smartbridge}
T.~L. Rodeheffer, C.~A. Thekkath, and D.~C. Anderson, ``Smartbridge: A scalable
  bridge architecture,'' \emph{ACM SIGCOMM CCR}, vol.~30, no.~4, pp. 205--216,
  2000.

\bibitem{perlman2004rbridges}
R.~Perlman, ``Rbridges: transparent routing,'' in \emph{IEEE INFOCOM}, 2004.

\bibitem{garcia2003lsom}
R.~Garcia \emph{et~al.}, ``Lsom: A link state protocol over mac addresses for
  metropolitan backbones using optical ethernet switches,'' in \emph{IEEE NCA},
  2003.

\bibitem{jain2011viro}
S.~Jain \emph{et~al.}, ``Viro: A scalable, robust and namespace independent
  virtual id routing for future networks,'' in \emph{IEEE INFOCOM}, 2011.

\bibitem{sampath2010ethernet}
D.~Sampath, S.~Agarwal, and J.~Garcia-Luna-Aceves, ``'ethernet on air':
  Scalable routing in very large ethernet-based networks,'' in \emph{IEEE
  ICDCS}, 2010.

\bibitem{stephens2012past}
B.~Stephens, A.~Cox, W.~Felter, C.~Dixon, and J.~Carter, ``{PAST: Scalable
  Ethernet for data centers},'' in \emph{ACM CoNEXT}, 2012.

\bibitem{subramanian2014multi}
K.~Subramanian, ``Multi-chassis link aggregation on network devices,'' Jun.~24
  2014, uS Patent 8,761,005.

\bibitem{scott2009addressing}
M.~Scott, A.~Moore, and J.~Crowcroft, ``Addressing the scalability of ethernet
  with moose,'' in \emph{Proc. DC CAVES Workshop}, 2009.

\bibitem{narvaez1999efficient}
P.~Narvaez, K.-Y. Siu, and H.-Y. Tzeng, ``Efficient algorithms for multi-path
  link-state routing,'' 1999.

\bibitem{gojmerac2003adaptive}
I.~Gojmerac, T.~Ziegler, and P.~Reichl, ``Adaptive multipath routing based on
  local distribution of link load information,'' in \emph{Springer QofIS},
  2003.

\bibitem{de2006improving}
A.~F. De~Sousa, ``Improving load balance and resilience of ethernet carrier
  networks with ieee 802.1 s multiple spanning tree protocol,'' in \emph{IEEE
  ICN/ICONS/MCL}, 2006.

\bibitem{iwata2004global}
A.~Iwata, Y.~Hidaka, M.~Umayabashi, N.~Enomoto, and A.~Arutaki, ``Global open
  ethernet (goe) system and its performance evaluation,'' \emph{IEEE Journal on
  Selected Areas in Communications}, vol.~22, no.~8, pp. 1432--1442, 2004.

\bibitem{sharma2004viking}
S.~Sharma, K.~Gopalan, S.~Nanda, and T.-c. Chiueh, ``Viking: A
  multi-spanning-tree ethernet architecture for metropolitan area and cluster
  networks,'' in \emph{IEEE INFOCOM}, 2004.

\bibitem{allan2010shortest}
D.~Allan, P.~Ashwood-Smith, N.~Bragg, J.~Farkas, D.~Fedyk, M.~Ouellete,
  M.~Seaman, and P.~Unbehagen, ``Shortest path bridging: Efficient control of
  larger ethernet networks,'' \emph{IEEE Communications Magazine}, vol.~48,
  no.~10, 2010.

\bibitem{touch2009transparent}
J.~Touch and R.~Perlman, ``{Transparent interconnection of lots of links
  (TRILL): Problem and applicability statement},'' Tech. Rep., 2009.

\bibitem{agarwal2014shadow}
K.~Agarwal, C.~Dixon, E.~Rozner, and J.~B. Carter, ``Shadow macs: scalable
  label-switching for commodity ethernet,'' in \emph{HotSDN'14}, 2014, pp.
  157--162.

\bibitem{mudigonda2010spain}
J.~Mudigonda, P.~Yalagandula, M.~Al-Fares, and J.~C. Mogul, ``{SPAIN: COTS
  Data-Center Ethernet for Multipathing over Arbitrary Topologies.}'' in
  \emph{NSDI}, 2010, pp. 265--280.

\bibitem{hu2016explicit}
S.~Hu, K.~Chen, H.~Wu, W.~Bai, C.~Lan, H.~Wang, H.~Zhao, and C.~Guo, ``Explicit
  path control in commodity data centers: Design and applications,''
  \emph{IEEE/ACM Transactions on Networking}, vol.~24, no.~5, pp. 2768--2781,
  2016.

\bibitem{jyothi2015towards}
S.~A. Jyothi, M.~Dong, and P.~Godfrey, ``Towards a flexible data center fabric
  with source routing,'' in \emph{ACM SOSR}, 2015.

\bibitem{besta2014slim}
M.~Besta and T.~Hoefler, ``{Slim Fly: A Cost Effective Low-Diameter Network
  Topology},'' Nov. 2014, aCM/IEEE Supercomputing.

\bibitem{ahn2009hyperx}
J.~H. Ahn, N.~Binkert, A.~Davis, M.~McLaren, and R.~S. Schreiber, ``{HyperX:
  topology, routing, and packaging of efficient large-scale networks},'' in
  \emph{ACM/IEEE Supercomputing}, 2009, p.~41.

\bibitem{kim2007flattened}
J.~Kim, W.~J. Dally, and D.~Abts, ``Flattened butterfly: a cost-efficient
  topology for high-radix networks,'' in \emph{ACM SIGARCH Comp. Arch. News},
  2007.

\bibitem{leiserson1996cm5}
C.~E. Leiserson, Z.~S. Abuhamdeh, D.~C. Douglas, C.~R. Feynman, M.~N. Ganmukhi,
  J.~V. Hill, W.~D. Hillis, B.~C. Kuszmaul, M.~A.~S. Pierre, D.~S. Wells, M.~C.
  Wong{-}Chan, S.~Yang, and R.~Zak, ``The network architecture of the
  connection machine {CM-5},'' \emph{J. Parallel Distrib. Comput.}, vol.~33,
  no.~2, pp. 145--158, 1996.

\bibitem{clos1953study}
C.~Clos, ``A study of non-blocking switching networks,'' \emph{Bell Labs
  Technical Journal}, vol.~32, no.~2, pp. 406--424, 1953.

\bibitem{arimilli2010percs}
L.~B. Arimilli, R.~Arimilli, V.~Chung, S.~Clark, W.~E. Denzel, B.~C. Drerup,
  T.~Hoefler, J.~B. Joyner, J.~Lewis, J.~Li, N.~Ni, and R.~Rajamony, ``The
  {PERCS} high-performance interconnect,'' in \emph{{IEEE} 18th Annual
  Symposium on High Performance Interconnects, {HOTI} 2010, Google Campus,
  Mountain View, California, USA, August 18-20,2010}, 2010, pp. 75--82.

\bibitem{faanes2012cray}
G.~Faanes, A.~Bataineh, D.~Roweth, E.~Froese, B.~Alverson, T.~Johnson,
  J.~Kopnick, M.~Higgins, J.~Reinhard \emph{et~al.}, ``{Cray cascade: a
  scalable HPC system based on a Dragonfly network},'' in \emph{Proceedings of
  the International Conference on High Performance Computing, Networking,
  Storage and Analysis}.\hskip 1em plus 0.5em minus 0.4em\relax IEEE Computer
  Society Press, 2012, p. 103.

\bibitem{prisacari2013fast}
B.~Prisacari, G.~Rodriguez, C.~Minkenberg, and T.~Hoefler, ``Fast
  pattern-specific routing for fat tree networks,'' \emph{ACM Transactions on
  Architecture and Code Optimization (TACO)}, vol.~10, no.~4, p.~36, 2013.

\bibitem{yuan2014lfti}
X.~Yuan, S.~Mahapatra, M.~Lang, and S.~Pakin, ``Lfti: A new performance metric
  for assessing interconnect designs for extreme-scale hpc systems,'' in
  \emph{2014 IEEE 28th International Parallel and Distributed Processing
  Symposium}.\hskip 1em plus 0.5em minus 0.4em\relax IEEE, 2014, pp. 273--282.

\bibitem{yuan2013new}
X.~Yuan, S.~Mahapatra, W.~Nienaber, S.~Pakin, and M.~Lang, ``{A new routing
  scheme for Jellyfish and its performance with HPC workloads},'' in
  \emph{ACM/IEEE Supercomputing}, 2013, p.~36.

\bibitem{prisacari2014efficient}
B.~Prisacari, G.~Rodriguez, P.~Heidelberger, D.~Chen, C.~Minkenberg, and
  T.~Hoefler, ``Efficient task placement and routing of nearest neighbor
  exchanges in dragonfly networks,'' in \emph{Proceedings of the 23rd
  international symposium on High-performance parallel and distributed
  computing}.\hskip 1em plus 0.5em minus 0.4em\relax ACM, 2014, pp. 129--140.

\bibitem{prisacari2014randomizing}
B.~Prisacari, G.~Rodriguez, A.~Jokanovic, and C.~Minkenberg, ``Randomizing task
  placement and route selection do not randomize traffic (enough),''
  \emph{Design Automation for Embedded Systems}, vol.~18, no. 3-4, pp.
  171--182, 2014.

\bibitem{kathareios2015cost}
G.~Kathareios, C.~Minkenberg, B.~Prisacari, G.~Rodriguez, and T.~Hoefler,
  ``Cost-effective diameter-two topologies: Analysis and evaluation,'' in
  \emph{ACM/IEEE Supercomputing}.\hskip 1em plus 0.5em minus 0.4em\relax ACM,
  2015, p.~36.

\bibitem{prisacari2015performance}
B.~Prisacari, G.~Rodriguez, C.~Minkenberg, M.~Garcia, E.~Vallejo, and
  R.~Beivide, ``Performance optimization of load imbalanced workloads in large
  scale dragonfly systems,'' in \emph{2015 IEEE 16th International Conference
  on High Performance Switching and Routing (HPSR)}.\hskip 1em plus 0.5em minus
  0.4em\relax IEEE, 2015, pp. 1--6.

\bibitem{chen2016evaluation}
D.~Chen, P.~Heidelberger, C.~Stunkel, Y.~Sugawara, C.~Minkenberg, B.~Prisacari,
  and G.~Rodriguez, ``An evaluation of network architectures for next
  generation supercomputers,'' in \emph{2016 7th International Workshop on
  Performance Modeling, Benchmarking and Simulation of High Performance
  Computer Systems (PMBS)}.\hskip 1em plus 0.5em minus 0.4em\relax IEEE, 2016,
  pp. 11--21.

\bibitem{prisacari2013bandwidth}
B.~Prisacari, G.~Rodriguez, C.~Minkenberg, and T.~Hoefler, ``Bandwidth-optimal
  all-to-all exchanges in fat tree networks,'' in \emph{Proceedings of the 27th
  international ACM conference on International conference on
  supercomputing}.\hskip 1em plus 0.5em minus 0.4em\relax ACM, 2013, pp.
  139--148.

\bibitem{karacali2018assessing}
B.~Karacali, J.~M. Tracey, P.~G. Crumley, and C.~Basso, ``Assessing cloud
  network performance,'' in \emph{2018 IEEE International Conference on
  Communications (ICC)}.\hskip 1em plus 0.5em minus 0.4em\relax IEEE, 2018, pp.
  1--7.

\bibitem{sehery2017flow}
W.~Sehery and C.~Clancy, ``Flow optimization in data centers with clos networks
  in support of cloud applications,'' \emph{IEEE Transactions on Network and
  Service Management}, vol.~14, no.~4, pp. 847--859, 2017.

\bibitem{kassing2017beyond}
S.~Kassing, A.~Valadarsky, G.~Shahaf, M.~Schapira, and A.~Singla, ``Beyond
  fat-trees without antennae, mirrors, and disco-balls,'' in \emph{ACM
  SIGCOMM}, 2017, pp. 281--294.

\bibitem{Yuan:2013:NRS:2503210.2503229}
X.~Yuan, S.~Mahapatra, W.~Nienaber, S.~Pakin, and M.~Lang, ``{A New Routing
  Scheme for Jellyfish and Its Performance with HPC Workloads},'' in
  \emph{Proceedings of 2013 ACM/IEEE Supercomputing}, ser. SC '13, 2013, pp.
  36:1--36:11.

\bibitem{skeie2002layered}
T.~Skeie, O.~Lysne, and I.~Theiss, ``Layered shortest path (lash) routing in
  irregular system area networks,'' in \emph{ipdps}.\hskip 1em plus 0.5em minus
  0.4em\relax Citeseer, 2002, p. 0162.

\bibitem{sinha2004burstiness}
S.~Sinha, S.~Kandula, and D.~Katabi, ``Harnessing tcp'sburstiness with flowlet
  switching,'' \emph{San Diego, November}, vol.~83, 2004.

\bibitem{bhatele2013there}
A.~Bhatele, K.~Mohror, S.~H. Langer, and K.~E. Isaacs, ``There goes the
  neighborhood: performance degradation due to nearby jobs,'' in \emph{SC'13:
  Proceedings of the International Conference on High Performance Computing,
  Networking, Storage and Analysis}.\hskip 1em plus 0.5em minus 0.4em\relax
  IEEE, 2013, pp. 1--12.

\bibitem{lumsdaine2007challenges}
A.~Lumsdaine, D.~Gregor, B.~Hendrickson, and J.~Berry, ``{Challenges in
  parallel graph processing},'' \emph{Parallel Processing Letters}, vol.~17,
  no.~01, pp. 5--20, 2007.

\bibitem{hoory2006expander}
S.~Hoory, N.~Linial, and A.~Wigderson, ``Expander graphs and their
  applications,'' \emph{Bulletin of the American Mathematical Society},
  vol.~43, no.~4, pp. 439--561, 2006.

\bibitem{ford1956maximal}
L.~R. Ford and D.~R. Fulkerson, ``Maximal flow through a network,''
  \emph{Canadian journal of Mathematics}, vol.~8, no.~3, pp. 399--404, 1956.

\bibitem{dijkstra1959note}
E.~W. Dijkstra, ``A note on two problems in connexion with graphs,''
  \emph{Numerische mathematik}, vol.~1, no.~1, pp. 269--271, 1959.

\bibitem{frantz1999vlan}
P.~J. Frantz and G.~O. Thompson, ``Vlan frame format,'' Sep.~28 1999, uS Patent
  5,959,990.

\bibitem{ghorbani17drill}
S.~Ghorbani, Z.~Yang, P.~B. Godfrey, Y.~Ganjali, and A.~Firoozshahian, ``Drill:
  Micro load balancing for low-latency data center neworks,'' in
  \emph{Proceedings of the Conference of the {ACM} Special Interest Group on
  Data Communication, {SIGCOMM} 2017, Los Angeles, CA, USA, August 21-25,
  2017}, 2017.

\bibitem{even1975complexity}
S.~Even, A.~Itai, and A.~Shamir, ``On the complexity of time table and
  multi-commodity flow problems,'' in \emph{16th Annual Symposium on
  Foundations of Computer Science (sfcs 1975)}.\hskip 1em plus 0.5em minus
  0.4em\relax IEEE, 1975, pp. 184--193.

\bibitem{jyothi2016measuring}
S.~A. Jyothi, A.~Singla, P.~B. Godfrey, and A.~Kolla, ``Measuring and
  understanding throughput of network topologies,'' in \emph{ACM/IEEE
  Supercomputing}, 2016.

\bibitem{valerio1994recursively}
M.~Valerio, L.~E. Moser, and P.~Melliar-Smith, ``Recursively scalable fat-trees
  as interconnection networks,'' in \emph{Phoenix Conference on Computers and
  Communications}, vol.~13.\hskip 1em plus 0.5em minus 0.4em\relax Citeseer,
  1994, pp. 40--40.

\bibitem{besta2018slim}
M.~Besta, S.~M. Hassan, S.~Yalamanchili, R.~Ausavarungnirun, O.~Mutlu, and
  T.~Hoefler, ``Slim noc: A low-diameter on-chip network topology for high
  energy efficiency and scalability,'' in \emph{ACM SIGPLAN Notices}, 2018.

\bibitem{ramakrishnan2001addition}
K.~Ramakrishnan, S.~Floyd, and D.~Black, ``The addition of explicit congestion
  notification (ecn) to ip,'' Tech. Rep., 2001.

\bibitem{floyd1994tcp}
S.~Floyd, ``Tcp and explicit congestion notification,'' \emph{ACM SIGCOMM
  Computer Communication Review}, vol.~24, no.~5, pp. 8--23, 1994.

\bibitem{isobe2014tcp}
T.~Isobe, N.~Tanida, Y.~Oishi, and K.~Yoshida, ``Tcp acceleration technology
  for cloud computing: Algorithm, performance evaluation in real network,'' in
  \emph{2014 International Conference on Advanced Technologies for
  Communications (ATC 2014)}.\hskip 1em plus 0.5em minus 0.4em\relax IEEE,
  2014, pp. 714--719.

\bibitem{alizadeh2013pfabric}
M.~Alizadeh, S.~Yang, M.~Sharif, S.~Katti, N.~McKeown, B.~Prabhakar, and
  S.~Shenker, ``{pFabric: Minimal near-optimal datacenter transport},''
  \emph{ACM SIGCOMM Computer Communication Review}, vol.~43, no.~4, pp.
  435--446, 2013.

\bibitem{varga2001omnet++}
A.~Varga \emph{et~al.}, ``{The OMNeT++ discrete event simulation system},'' in
  \emph{Proceedings of the European simulation multiconference (ESM’2001)},
  vol.~9, no. S 185.\hskip 1em plus 0.5em minus 0.4em\relax sn, 2001, p.~65.

\bibitem{varga2008overview}
A.~Varga and R.~Hornig, ``{An overview of the OMNeT++ simulation
  environment},'' in \emph{Proceedings of the 1st international conference on
  Simulation tools and techniques for communications, networks and systems \&
  workshops}, 2008, p.~60.

\bibitem{inet}
------, ``{INET Framework for OMNeT++},'' Tech. Rep., 2012.

\bibitem{rfc3168}
K.~Ramakrishnan, S.~Floyd, and D.~Black, ``{RFC 3168, The addition of Explicit
  Congestion Notification (ECN) to IP},'' 2001.

\bibitem{rfc6824}
A.~Ford, C.~Raiciu, M.~Handley, and O.~Bonaventure, ``{RFC 6824, TCP extensions
  for multipath operation with multiple addresses},'' 2013.

\bibitem{rfc6356}
C.~Raiciu, M.~Handley, and D.~Wischik, ``{RFC 6356, Coupled Congestion Control
  for Multipath Transport Protocols},'' 2011.

\bibitem{kornblum2006identifying}
J.~Kornblum, ``{Identifying almost identical files using context triggered
  piecewise hashing},'' \emph{Digital investigation}, vol.~3, pp. 91--97, 2006.

\bibitem{pfc_paper}
``Priority flow control: Build reliable layer 2 infrastructure,''
  \url{http://www.cisco.com/c/en/us/products/collateral/switches/nexus-7000-series-switches/white_paper_c11-542809.pdf}.

\bibitem{mittal2018revisiting}
R.~Mittal, A.~Shpiner, A.~Panda, E.~Zahavi, A.~Krishnamurthy, S.~Ratnasamy, and
  S.~Shenker, ``Revisiting network support for rdma,'' in \emph{Proceedings of
  the 2018 Conference of the ACM Special Interest Group on Data
  Communication}.\hskip 1em plus 0.5em minus 0.4em\relax ACM, 2018, pp.
  313--326.

\bibitem{le2018rogue}
Y.~Le, B.~Stephens, A.~Singhvi, A.~Akella, and M.~M. Swift, ``Rogue: Rdma over
  generic unconverged ethernet,'' in \emph{Proceedings of the ACM Symposium on
  Cloud Computing}.\hskip 1em plus 0.5em minus 0.4em\relax ACM, 2018, pp.
  225--236.

\bibitem{zhu2015congestion}
Y.~Zhu, H.~Eran, D.~Firestone, C.~Guo, M.~Lipshteyn, Y.~Liron, J.~Padhye,
  S.~Raindel, M.~H. Yahia, and M.~Zhang, ``Congestion control for large-scale
  rdma deployments,'' \emph{ACM SIGCOMM Computer Communication Review},
  vol.~45, no.~4, pp. 523--536, 2015.

\bibitem{guo2016rdma}
C.~Guo, H.~Wu, Z.~Deng, G.~Soni, J.~Ye, J.~Padhye, and M.~Lipshteyn, ``Rdma
  over commodity ethernet at scale,'' in \emph{Proceedings of the 2016 ACM
  SIGCOMM Conference}.\hskip 1em plus 0.5em minus 0.4em\relax ACM, 2016, pp.
  202--215.

\bibitem{lu2018multi}
Y.~Lu, G.~Chen, B.~Li, K.~Tan, Y.~Xiong, P.~Cheng, J.~Zhang, E.~Chen, and
  T.~Moscibroda, ``{Multi-path transport for RDMA in datacenters},'' in
  \emph{15th USENIX Symposium on Networked Systems Design and Implementation
  (NSDI 18)}, 2018, pp. 357--371.

\bibitem{cheung2013}
\BIBentryALTinterwordspacing
H.~Y. Cheung, L.~C. Lau, and K.~M. Leung, ``Graph connectivities, network
  coding, and expander graphs,'' \emph{{SIAM} J. Comput.}, vol.~42, no.~3, pp.
  733--751, 2013. [Online]. Available:
  \url{http://dx.doi.org/10.1137/110844970}
\BIBentrySTDinterwordspacing

\bibitem{panigrahi2008gomory}
D.~Panigrahi, ``Gomory--hu trees,'' in \emph{Encyclopedia of algorithms}.\hskip
  1em plus 0.5em minus 0.4em\relax Springer, 2008, pp. 1--99.

\bibitem{gusfield1990}
\BIBentryALTinterwordspacing
D.~Gusfield, ``Very simple methods for all pairs network flow analysis,''
  \emph{{SIAM} J. Comput.}, vol.~19, no.~1, pp. 143--155, 1990. [Online].
  Available: \url{http://dx.doi.org/10.1137/0219009}
\BIBentrySTDinterwordspacing

\bibitem{curtis2011mahout}
A.~R. Curtis, W.~Kim, and P.~Yalagandula, ``Mahout: Low-overhead datacenter
  traffic management using end-host-based elephant detection,'' in
  \emph{INFOCOM, 2011 Proceedings IEEE}.\hskip 1em plus 0.5em minus 0.4em\relax
  IEEE, 2011, pp. 1629--1637.

\bibitem{rasley2014planck}
J.~Rasley, B.~Stephens, C.~Dixon, E.~Rozner, W.~Felter, K.~Agarwal, J.~Carter,
  and R.~Fonseca, ``Planck: Millisecond-scale monitoring and control for
  commodity networks,'' in \emph{ACM SIGCOMM Computer Communication Review},
  vol.~44, no.~4.\hskip 1em plus 0.5em minus 0.4em\relax ACM, 2014, pp.
  407--418.

\bibitem{tso2013longer}
F.~P. Tso, G.~Hamilton, R.~Weber, C.~Perkins, and D.~P. Pezaros, ``Longer is
  better: Exploiting path diversity in data center networks,'' in \emph{{IEEE}
  33rd International Conference on Distributed Computing Systems, {ICDCS}},
  2013, pp. 430--439.

\bibitem{benson2011microte}
T.~Benson, A.~Anand, A.~Akella, and M.~Zhang, ``Microte: Fine grained traffic
  engineering for data centers,'' in \emph{Proceedings of the Seventh
  COnference on emerging Networking EXperiments and Technologies}.\hskip 1em
  plus 0.5em minus 0.4em\relax ACM, 2011, p.~8.

\bibitem{zhou2014wcmp}
J.~Zhou, M.~Tewari, M.~Zhu, A.~Kabbani, L.~Poutievski, A.~Singh, and A.~Vahdat,
  ``{WCMP: weighted cost multipathing for improved fairness in data centers},''
  in \emph{ACM EuroSys}, 2014.

\bibitem{al2010hedera}
M.~Al-Fares, S.~Radhakrishnan, B.~Raghavan, N.~Huang, and A.~Vahdat, ``Hedera:
  Dynamic flow scheduling for data center networks.'' in \emph{NSDI}, vol.~10,
  2010, pp. 19--19.

\bibitem{kabbani2014flowbender}
A.~Kabbani, B.~Vamanan, J.~Hasan, and F.~Duchene, ``{FlowBender: Flow-level
  Adaptive Routing for Improved Latency and Throughput in Datacenter
  Networks},'' in \emph{Proceedings of the 10th ACM International on Conference
  on emerging Networking Experiments and Technologies}.\hskip 1em plus 0.5em
  minus 0.4em\relax ACM, 2014, pp. 149--160.

\bibitem{he2015presto}
K.~He, E.~Rozner, K.~Agarwal, W.~Felter, J.~B. Carter, and A.~Akella, ``Presto:
  Edge-based load balancing for fast datacenter networks,'' in \emph{ACM
  SIGCOMM}, 2015.

\bibitem{zats2012detail}
D.~Zats, T.~Das, P.~Mohan, D.~Borthakur, and R.~H. Katz, ``Detail: reducing the
  flow completion time tail in datacenter networks,'' in \emph{{ACM}
  {SIGCOMM}}, 2012, pp. 139--150.

\bibitem{perry2015fastpass}
J.~Perry, A.~Ousterhout, H.~Balakrishnan, D.~Shah, and H.~Fugal, ``Fastpass: A
  centralized zero-queue datacenter network,'' \emph{ACM SIGCOMM Computer
  Communication Review}, vol.~44, no.~4, pp. 307--318, 2015.

\bibitem{katta2016clove}
N.~P. Katta, M.~Hira, A.~Ghag, C.~Kim, I.~Keslassy, and J.~Rexford, ``{CLOVE:}
  how {I} learned to stop worrying about the core and love the edge,'' in
  \emph{Proceedings of the 15th {ACM} Workshop on Hot Topics in Networks,
  HotNets}, 2016, pp. 155--161.

\bibitem{alizadeh2014conga}
M.~Alizadeh, T.~Edsall, S.~Dharmapurikar, R.~Vaidyanathan, K.~Chu,
  A.~Fingerhut, F.~Matus, R.~Pan, N.~Yadav, G.~Varghese \emph{et~al.},
  ``{CONGA: Distributed congestion-aware load balancing for datacenters},'' in
  \emph{Proceedings of the 2014 ACM conference on SIGCOMM}.\hskip 1em plus
  0.5em minus 0.4em\relax ACM, 2014, pp. 503--514.

\bibitem{katta2016hula}
N.~Katta, M.~Hira, C.~Kim, A.~Sivaraman, and J.~Rexford, ``Hula: Scalable load
  balancing using programmable data planes,'' in \emph{Proceedings of the
  Symposium on SDN Research}.\hskip 1em plus 0.5em minus 0.4em\relax ACM, 2016,
  p.~10.

\bibitem{garcia2012ofar}
M.~Garcia, E.~Vallejo, R.~Beivide, M.~Odriozola, C.~Camarero, M.~Valero,
  G.~Rodriguez, J.~Labarta, and C.~Minkenberg, ``On-the-fly adaptive routing in
  high-radix hierarchical networks,'' in \emph{41st International Conference on
  Parallel Processing, {ICPP}}, 2012, pp. 279--288.

\bibitem{garcia2013efficient}
M.~Garcia, E.~Vallejo, R.~Beivide, M.~Odriozola, and M.~Valero, ``Efficient
  routing mechanisms for dragonfly networks,'' in \emph{42nd International
  Conference on Parallel Processing, {ICPP}}, 2013, pp. 582--592.

\bibitem{mittal2015timely}
R.~Mittal, V.~T. Lam, N.~Dukkipati, E.~R. Blem, H.~M.~G. Wassel, M.~Ghobadi,
  A.~Vahdat, Y.~Wang, D.~Wetherall, and D.~Zats, ``{TIMELY:} rtt-based
  congestion control for the datacenter,'' in \emph{Proceedings of the 2015
  {ACM} Conference on Special Interest Group on Data Communication, {SIGCOMM}},
  2015, pp. 537--550.

\bibitem{cardwell2016bbr}
N.~Cardwell, Y.~Cheng, C.~S. Gunn, S.~H. Yeganeh, and V.~Jacobson, ``{BBR:}
  congestion-based congestion control,'' \emph{{ACM} Queue}, vol.~14, no.~5,
  pp. 20--53, 2016.

\bibitem{he2016acdc}
K.~He, E.~Rozner, K.~Agarwal, Y.~J. Gu, W.~Felter, J.~B. Carter, and A.~Akella,
  ``{AC/DC} {TCP:} virtual congestion control enforcement for datacenter
  networks,'' in \emph{Proceedings of the 2016 conference on {ACM} {SIGCOMM}},
  2016, pp. 244--257.

\bibitem{zhuo2016rackcc}
D.~Zhuo, Q.~Zhang, V.~Liu, A.~Krishnamurthy, and T.~Anderson, ``Rack-level
  congestion control,'' in \emph{Proceedings of the 15th ACM Workshop on Hot
  Topics in Networks}.\hskip 1em plus 0.5em minus 0.4em\relax ACM, 2016, pp.
  148--154.

\bibitem{bai2014pias}
W.~Bai, L.~Chen, K.~Chen, D.~Han, C.~Tian, and W.~Sun, ``{PIAS:} practical
  information-agnostic flow scheduling for data center networks,'' in
  \emph{Proceedings of the 13th {ACM} Workshop on Hot Topics in Networks,
  HotNets-XIII}, 2014, pp. 25:1--25:7.

\bibitem{vamanan2012deadline}
B.~Vamanan, J.~Hasan, and T.~Vijaykumar, ``{Deadline-aware datacenter TCP
  (D2TCP)},'' \emph{ACM SIGCOMM Computer Communication Review}, vol.~42, no.~4,
  pp. 115--126, 2012.

\bibitem{lu2016sed}
Y.~Lu, ``Sed: An sdn-based explicit-deadline-aware tcp for cloud data center
  networks,'' \emph{Tsinghua Science and Technology}, vol.~21, no.~5, pp.
  491--499, 2016.

\bibitem{hwang2014deadline}
J.~Hwang, J.~Yoo, and N.~Choi, ``Deadline and incast aware tcp for cloud data
  center networks,'' \emph{Computer Networks}, vol.~68, pp. 20--34, 2014.

\bibitem{montazeri2018homa}
B.~Montazeri, Y.~Li, M.~Alizadeh, and J.~Ousterhout, ``Homa: A receiver-driven
  low-latency transport protocol using network priorities,'' \emph{arXiv
  preprint arXiv:1803.09615}, 2018.

\bibitem{jiang2008explicit}
J.~Jiang, R.~Jain, and C.~So-In, ``An explicit rate control framework for
  lossless ethernet operation,'' in \emph{Communications, 2008. ICC'08. IEEE
  International Conference on}.\hskip 1em plus 0.5em minus 0.4em\relax IEEE,
  2008, pp. 5914--5918.

\bibitem{banavalikar2016credit}
B.~G. Banavalikar, C.~M. DeCusatis, M.~Gusat, K.~G. Kamble, and R.~J. Recio,
  ``{Credit-based flow control in lossless Ethernet networks},'' Jan.~12 2016,
  uS Patent 9,237,111.

\bibitem{alasmar2018polyraptor}
M.~Alasmar, G.~Parisis, and J.~Crowcroft, ``Polyraptor: embracing path and data
  redundancy in data centres for efficient data transport,'' in
  \emph{Proceedings of the ACM SIGCOMM 2018 Conference on Posters and
  Demos}.\hskip 1em plus 0.5em minus 0.4em\relax ACM, 2018, pp. 69--71.

\bibitem{ieee802.1bb}
``802.1qbb - priority-based flow control,''
  \url{http://www.ieee802.org/1/pages/802.1bb.html}, retrieved 2015-02-06.

\bibitem{benet2018mp}
C.~H. Benet, A.~J. Kassler, T.~Benson, and G.~Pongracz, ``Mp-hula: Multipath
  transport aware load balancing using programmable data planes,'' in
  \emph{Proceedings of the 2018 Morning Workshop on In-Network
  Computing}.\hskip 1em plus 0.5em minus 0.4em\relax ACM, 2018, pp. 7--13.

\bibitem{caesar2010dynamic}
M.~Caesar, M.~Casado, T.~Koponen, J.~Rexford, and S.~Shenker, ``Dynamic route
  recomputation considered harmful,'' \emph{ACM SIGCOMM Computer Communication
  Review}, vol.~40, no.~2, pp. 66--71, 2010.

\bibitem{aggarwal2016performance}
S.~Aggarwal and P.~Mittal, ``Performance evaluation of single path and
  multipath regarding bandwidth and delay,'' \emph{Intl. J. Comp. App.}, vol.
  145, no.~9, 2016.

\bibitem{suurballe1984quick}
J.~W. Suurballe and R.~E. Tarjan, ``A quick method for finding shortest pairs
  of disjoint paths,'' \emph{Networks}, vol.~14, no.~2, pp. 325--336, 1984.

\bibitem{huang2009performance}
X.~Huang and Y.~Fang, ``Performance study of node-disjoint multipath routing in
  vehicular ad hoc networks,'' \emph{IEEE Transactions on Vehicular
  Technology}, 2009.

\bibitem{sohn2006congestion}
S.~Sohn, B.~L. Mark, and J.~T. Brassil, ``Congestion-triggered multipath
  routing based on shortest path information,'' in \emph{IEEE ICCCN}, 2006.

\bibitem{li2013openflow}
Y.~Li and D.~Pan, ``Openflow based load balancing for fat-tree networks with
  multipath support,'' in \emph{Proc. 12th IEEE International Conference on
  Communications (ICC’13), Budapest, Hungary}, 2013, pp. 1--5.

\bibitem{bredel2014flow}
M.~Bredel, Z.~Bozakov, A.~Barczyk, and H.~Newman, ``Flow-based load balancing
  in multipathed layer-2 networks using openflow and multipath-tcp,'' in
  \emph{Hot topics in software defined networking}.\hskip 1em plus 0.5em minus
  0.4em\relax ACM, 2014, pp. 213--214.

\bibitem{van2011revisiting}
S.~Van~der Linden, G.~Detal, and O.~Bonaventure, ``Revisiting next-hop
  selection in multipath networks,'' in \emph{ACM SIGCOMM CCR}, vol.~41, no.~4,
  2011.

\bibitem{suchara2011network}
M.~Suchara, D.~Xu, R.~Doverspike, D.~Johnson, and J.~Rexford, ``Network
  architecture for joint failure recovery and traffic engineering,'' in
  \emph{ACM SIGMETRICS}, 2011.

\bibitem{li2018exascale}
S.~Li, P.-C. Huang, and B.~Jacob, ``Exascale interconnect topology
  characterization and parameter exploration,'' in
  \emph{HPCC/SmartCity/DSS}.\hskip 1em plus 0.5em minus 0.4em\relax IEEE, 2018,
  pp. 810--819.

\bibitem{kawano2018k}
R.~Kawano, R.~Yasudo, H.~Matsutani, and H.~Amano, ``k-optimized path routing
  for high-throughput data center networks,'' in \emph{2018 Sixth International
  Symposium on Computing and Networking (CANDAR)}.\hskip 1em plus 0.5em minus
  0.4em\relax IEEE, 2018, pp. 99--105.

\bibitem{harsh2018expander}
V.~Harsh, S.~A. Jyothi, I.~Singh, and P.~Godfrey, ``Expander datacenters: From
  theory to practice,'' \emph{arXiv preprint arXiv:1811.00212}, 2018.

\bibitem{kawano2016loren}
R.~Kawano, H.~Nakahara, I.~Fujiwara, H.~Matsutani, M.~Koibuchi, and H.~Amano,
  ``Loren: A scalable routing method for layout-conscious random topologies,''
  in \emph{2016 Fourth International Symposium on Computing and Networking
  (CANDAR)}.\hskip 1em plus 0.5em minus 0.4em\relax IEEE, 2016, pp. 9--18.

\bibitem{truong2016layout}
T.-N. Truong, K.-V. Nguyen, I.~Fujiwara, and M.~Koibuchi, ``Layout-conscious
  expandable topology for low-degree interconnection networks,'' \emph{IEICE
  TRANSACTIONS on Information and Systems}, vol.~99, no.~5, pp. 1275--1284,
  2016.

\bibitem{flajslik2018megafly}
M.~Flajslik, E.~Borch, and M.~A. Parker, ``Megafly: A topology for exascale
  systems,'' in \emph{International Conference on High Performance Computing},
  2018.

\bibitem{kawano2017layout}
R.~Kawano, H.~Nakahara, I.~Fujiwara, H.~Matsutani, M.~Koibuchi, and H.~Amano,
  ``A layout-oriented routing method for low-latency hpc networks,''
  \emph{IEICE TRANSACTIONS on Information and Systems}, vol. 100, no.~12, pp.
  2796--2807, 2017.

\bibitem{azizi2016hhs}
S.~Azizi, N.~Hashemi, and A.~Khonsari, ``Hhs: an efficient network topology for
  large-scale data centers,'' \emph{The Journal of Supercomputing}, vol.~72,
  no.~3, pp. 874--899, 2016.

\bibitem{truong2016distributed}
N.~T. Truong, I.~Fujiwara, M.~Koibuchi, and K.-V. Nguyen, ``Distributed
  shortcut networks: Low-latency low-degree non-random topologies targeting the
  diameter and cable length trade-off,'' \emph{IEEE Transactions on Parallel
  and Distributed Systems}, vol.~28, no.~4, pp. 989--1001, 2016.

\bibitem{al2017new}
F.~Al~Faisal, M.~H. Rahman, and Y.~Inoguchi, ``A new power efficient high
  performance interconnection network for many-core processors,'' \emph{Journal
  of Parallel and Distributed Computing}, vol. 101, pp. 92--102, 2017.

\bibitem{ben2019modular}
T.~Ben-Nun, M.~Besta, S.~Huber, A.~N. Ziogas, D.~Peter, and T.~Hoefler, ``A
  modular benchmarking infrastructure for high-performance and reproducible
  deep learning,'' \emph{arXiv preprint arXiv:1901.10183}, 2019.

\bibitem{gronquist2019predicting}
P.~Gr{\"o}nquist, T.~Ben-Nun, N.~Dryden, P.~Dueben, L.~Lavarini, S.~Li, and
  T.~Hoefler, ``Predicting weather uncertainty with deep convnets,''
  \emph{arXiv preprint arXiv:1911.00630}, 2019.

\bibitem{li2020taming}
S.~Li, T.~Ben-Nun, S.~D. Girolamo, D.~Alistarh, and T.~Hoefler, ``Taming
  unbalanced training workloads in deep learning with partial collective
  operations,'' in \emph{Proceedings of the 25th ACM SIGPLAN Symposium on
  Principles and Practice of Parallel Programming}, 2020, pp. 45--61.

\bibitem{oyama2018accelerating}
Y.~Oyama, T.~Ben-Nun, T.~Hoefler, and S.~Matsuoka, ``Accelerating deep learning
  frameworks with micro-batches,'' in \emph{2018 IEEE International Conference
  on Cluster Computing (CLUSTER)}.\hskip 1em plus 0.5em minus 0.4em\relax IEEE,
  2018, pp. 402--412.

\bibitem{ben2018neural}
T.~Ben-Nun, A.~S. Jakobovits, and T.~Hoefler, ``Neural code comprehension: A
  learnable representation of code semantics,'' in \emph{Advances in Neural
  Information Processing Systems}, 2018, pp. 3585--3597.

\bibitem{ben2019demystifying}
T.~Ben-Nun and T.~Hoefler, ``Demystifying parallel and distributed deep
  learning: An in-depth concurrency analysis,'' \emph{ACM Computing Surveys
  (CSUR)}, vol.~52, no.~4, pp. 1--43, 2019.

\bibitem{besta2020communication}
M.~Besta, R.~Kanakagiri, H.~Mustafa, M.~Karasikov, G.~R{\"a}tsch, T.~Hoefler,
  and E.~Solomonik, ``Communication-efficient jaccard similarity for
  high-performance distributed genome comparisons,'' \emph{IEEE IPDPS}, 2020.

\bibitem{besta2017slimsell}
M.~Besta, F.~Marending, E.~Solomonik, and T.~Hoefler, ``Slimsell: A
  vectorizable graph representation for breadth-first search,'' in \emph{IEEE
  IPDPS}, 2017, pp. 32--41.

\bibitem{solomonik2017scaling}
E.~Solomonik, M.~Besta, F.~Vella, and T.~Hoefler, ``Scaling betweenness
  centrality using communication-efficient sparse matrix multiplication,'' in
  \emph{ACM/IEEE Supercomputing}, 2017, p.~47.

\bibitem{kwasniewski2019red}
G.~Kwasniewski, M.~Kabi{\'c}, M.~Besta, J.~VandeVondele, R.~Solc{\`a}, and
  T.~Hoefler, ``Red-blue pebbling revisited: near optimal parallel
  matrix-matrix multiplication,'' in \emph{ACM/IEEE Supercomputing}.\hskip 1em
  plus 0.5em minus 0.4em\relax ACM, 2019, p.~24.

\bibitem{besta2015accelerating}
M.~Besta and T.~Hoefler, ``Accelerating irregular computations with hardware
  transactional memory and active messages,'' in \emph{ACM HPDC}, 2015.

\bibitem{besta2019substream}
M.~Besta, M.~Fischer, T.~Ben-Nun, J.~De~Fine~Licht, and T.~Hoefler,
  ``Substream-centric maximum matchings on fpga,'' in \emph{ACM/SIGDA FPGA},
  2019, pp. 152--161.

\bibitem{besta2019slim}
M.~Besta \emph{et~al.}, ``Slim graph: Practical lossy graph compression for
  approximate graph processing, storage, and analytics,'' \emph{Proceedings of
  the International Conference for High Performance Computing, Networking,
  Storage and Analysis}, 2019.

\bibitem{besta2019graph}
M.~Besta, D.~Stanojevic, J.~D.~F. Licht, T.~Ben-Nun, and T.~Hoefler, ``Graph
  processing on fpgas: Taxonomy, survey, challenges,'' \emph{arXiv preprint
  arXiv:1903.06697}, 2019.

\bibitem{besta2017push}
M.~Besta, M.~Podstawski, L.~Groner, E.~Solomonik, and T.~Hoefler, ``To push or
  to pull: On reducing communication and synchronization in graph
  computations,'' in \emph{ACM HPDC}, 2017.

\bibitem{besta2018log}
M.~Besta, D.~Stanojevic, T.~Zivic, J.~Singh, M.~Hoerold, and T.~Hoefler, ``Log
  (graph) a near-optimal high-performance graph representation,'' in \emph{ACM
  PACT}, 2018, pp. 1--13.

\bibitem{besta2018survey}
M.~Besta and T.~Hoefler, ``Survey and taxonomy of lossless graph compression
  and space-efficient graph representations,'' \emph{arXiv preprint
  arXiv:1806.01799}, 2018.

\bibitem{gianinazzi2018communication}
L.~Gianinazzi, P.~Kalvoda, A.~De~Palma, M.~Besta, and T.~Hoefler,
  ``Communication-avoiding parallel minimum cuts and connected components,'' in
  \emph{ACM SIGPLAN Notices}, vol.~53, no.~1.\hskip 1em plus 0.5em minus
  0.4em\relax ACM, 2018, pp. 219--232.

\bibitem{besta2019practice}
M.~Besta, M.~Fischer, V.~Kalavri, M.~Kapralov, and T.~Hoefler, ``Practice of
  streaming and dynamic graphs: Concepts, models, systems, and parallelism,''
  \emph{arXiv preprint arXiv:1912.12740}, 2019.

\bibitem{besta2019demystifying}
M.~Besta, E.~Peter, R.~Gerstenberger, M.~Fischer, M.~Podstawski, C.~Barthels,
  G.~Alonso, and T.~Hoefler, ``Demystifying graph databases: Analysis and
  taxonomy of data organization, system designs, and graph queries,''
  \emph{arXiv preprint arXiv:1910.09017}, 2019.

\bibitem{besta2015active}
M.~Besta and T.~Hoefler, ``Active access: A mechanism for high-performance
  distributed data-centric computations,'' in \emph{ACM ICS}, 2015.

\bibitem{besta2014fault}
------, ``Fault tolerance for remote memory access programming models,'' in
  \emph{ACM HPDC}, 2014, pp. 37--48.

\bibitem{ziogas2019data}
A.~N. Ziogas, T.~Ben-Nun, G.~I. Fern{\'a}ndez, T.~Schneider, M.~Luisier, and
  T.~Hoefler, ``A data-centric approach to extreme-scale ab initio dissipative
  quantum transport simulations,'' in \emph{Proceedings of the International
  Conference for High Performance Computing, Networking, Storage and Analysis},
  2019, pp. 1--13.

\bibitem{ben2019stateful}
T.~Ben-Nun, J.~de~Fine~Licht, A.~N. Ziogas, T.~Schneider, and T.~Hoefler,
  ``Stateful dataflow multigraphs: A data-centric model for performance
  portability on heterogeneous architectures,'' in \emph{Proceedings of the
  International Conference for High Performance Computing, Networking, Storage
  and Analysis}, 2019, pp. 1--14.

\bibitem{di2019network}
S.~Di~Girolamo, K.~Taranov, A.~Kurth, M.~Schaffner, T.~Schneider,
  J.~Ber{\'a}nek, M.~Besta, L.~Benini, D.~Roweth, and T.~Hoefler,
  ``Network-accelerated non-contiguous memory transfers,'' \emph{ACM/IEEE
  Supercomputing}, 2019.

\bibitem{schweizer2015evaluating}
H.~Schweizer, M.~Besta, and T.~Hoefler, ``Evaluating the cost of atomic
  operations on modern architectures,'' in \emph{IEEE PACT}, 2015, pp.
  445--456.

\bibitem{schmid2016high}
P.~Schmid, M.~Besta, and T.~Hoefler, ``{High-performance distributed RMA
  locks},'' in \emph{ACM HPDC}, 2016, pp. 19--30.

\bibitem{sutton2018optimizing}
M.~Sutton, T.~Ben-Nun, and A.~Barak, ``Optimizing parallel graph connectivity
  computation via subgraph sampling,'' in \emph{2018 IEEE International
  Parallel and Distributed Processing Symposium (IPDPS)}.\hskip 1em plus 0.5em
  minus 0.4em\relax IEEE, 2018, pp. 12--21.

\bibitem{tate2014programming}
A.~Tate \emph{et~al.}, ``Programming abstractions for data locality.''\hskip
  1em plus 0.5em minus 0.4em\relax PADAL Workshop 2014, 2014.

\bibitem{de2018transformations}
J.~de~Fine~Licht \emph{et~al.}, ``Transformations of high-level synthesis codes
  for high-performance computing,'' \emph{arXiv:1805.08288}, 2018.

\bibitem{mckay1998note}
\BIBentryALTinterwordspacing
B.~D. McKay, M.~Miller, and J.~Siran, ``A note on large graphs of diameter two
  and given maximum degree,'' \emph{J. Comb. Theory Ser. B}, vol.~74, no.~1,
  pp. 110--118, Sep. 1998. [Online]. Available:
  \url{http://dx.doi.org/10.1006/jctb.1998.1828}
\BIBentrySTDinterwordspacing

\bibitem{hoffman2003moore}
A.~J. Hoffman and R.~R. Singleton, ``On moore graphs with diameters 2 and 3,''
  in \emph{Selected Papers Of Alan J Hoffman: With Commentary}.\hskip 1em plus
  0.5em minus 0.4em\relax World Scientific, 2003, pp. 377--384.

\bibitem{bondy1976graph}
J.~A. Bondy and U.~S.~R. Murty, \emph{Graph theory with applications}.\hskip
  1em plus 0.5em minus 0.4em\relax Macmillan London, 1976, vol. 290.

\bibitem{floyd1962algorithm}
R.~W. Floyd, ``Algorithm 97: shortest path,'' \emph{Communications of the ACM},
  vol.~5, no.~6, p. 345, 1962.

\bibitem{seidel1995all}
R.~Seidel, ``On the all-pairs-shortest-path problem in unweighted undirected
  graphs,'' \emph{Journal of computer and system sciences}, vol.~51, no.~3, pp.
  400--403, 1995.

\bibitem{cherkassky1997implementing}
B.~V. Cherkassky and A.~V. Goldberg, ``On implementing the push—relabel
  method for the maximum flow problem,'' \emph{Algorithmica}, vol.~19, no.~4,
  pp. 390--410, 1997.

\bibitem{cormen2009introduction}
T.~H. Cormen, C.~E. Leiserson, R.~L. Rivest, and C.~Stein, \emph{Introduction
  to algorithms}.\hskip 1em plus 0.5em minus 0.4em\relax MIT press, 2009.

\end{thebibliography}

\iftr
\appendices

\section*{APPENDIX}

We now provide full discussions, analyses, and results omitted in the
main paper body to maintain its clarity.




\captionsetup[table]{font={normalsize,sf},labelfont={normalsize,sf}}

\begin{table*}

\setlength{\tabcolsep}{1.7pt}
\centering
\scriptsize
\sffamily
\begin{tabular}{l|l|llllllccll}
\toprule
& \textbf{Topology} & \textbf{Hierarchy} & \textbf{Flexibility} & \textbf{Input} & $N_r$ & $N$ & $k'$ & $p$ & $D$ & \textbf{Remarks} & \textbf{Deployed?} \\
\midrule
\multirow{8}{*}{\begin{turn}{90}deterministic\end{turn}} 
& Slim Fly~\cite{besta2014slim} & group hierarchical  &  fixed  & $q$               & $2 q^2$       & $pN_r$ & & $\left\lceil\frac{k'}{2}\right\rceil$ & 2 & ``MMS'' variant~\cite{besta2014slim, mckay1998note} & unknown \\
& Dragonfly~\cite{kim2008technology} & group hierarchical & fixed  & $p$               & $4p^3+2p$  & $pN_r$ & $3p-1$   & $p$ & 3 & ``balanced'' variant~\cite{kim2008technology} (§3.1) & \makecell[l]{PERCS~\cite{arimilli2010percs},\\Cascade~\cite{faanes2012cray}} \\
& HyperX~\cite{ahn2009hyperx} & semi-hierarchical & fixed  & $S$               & $S^2$         & $pN_r$ & $2(S-1)$ & $\left\lceil\frac{k'}2\right\rceil$ & 2 & \makecell[l]{``regular'' variant, 2x-oversubscribed,\\forms a Flattened Butterfly~\cite{kim2007flattened}} & unknown \\
& HyperX~\cite{ahn2009hyperx} & semi-hierarchical & fixed  & $S$               & $S^3$         & $pN_r$ & $3(S-1)$ & $\left\lceil\frac{k'}3\right\rceil$ & 3 & \makecell[l]{``regular'' variant, 2x-oversubscribed,\\forms a cube} & unknown \\
& Fat tree~\cite{leiserson1996cm5} & semi-hierarchical & fixed & $k$       & $5\left\lfloor\frac{k^2}4\right\rfloor$& $p\left\lfloor\frac{k^2}2\right\rfloor$     & $\frac{k}2$      & $\left\lceil\frac{k}2\right\rceil$  & 4 & 2-stage variant (3 router layers) & Many installations \\
& Complete (clique) & flat & fixed & $k'$         & $k'+1$        & $pN_r$ & $k'$     & $k'$ & 1 & $D=1$ HyperX, 2x-oversubscribed & crossbar routers \\
\midrule
\multirow{2}{*}{\begin{turn}{90}rand.\end{turn}} 
& Jellyfish~\cite{singla2012jellyfish} & flat & flexible  & $k'$, $N_r$, $p$  & $N_r$         & $pN_r$ & $k'$     & $p$ & n/a & ``homogeneous'' variant~\cite{singla2012jellyfish} & unknown \\
& Xpander~\cite{valadarsky2015} & flat & semi-flexible  & $\ell$ & $\ell(k'+1)$  & $pN_r$ & $\ell$     & $\left\lceil\frac{k'}2\right\rceil$ & n/a & Restricted to $\ell=k'$, $D\approx 2$, $p=\left\lceil\frac{k'}2\right\rceil$ & unknown \\
\bottomrule
\end{tabular}
%
\caption{\textmd{The considered topologies. ``Input'' means input parameters
used to derive other network parameters.}
%
%
}
\vspaceSQ{-1em}
\label{tab:parameters-app}
\end{table*}

\section{Formal Description of Topologies}
\label{sec:app-topos-details}

We first extend the discussion of the considered topologies.
Table~\ref{tab:parameters-app} provides details.
Some selected networks are \emph{flexible} (parameters determining their
structure can have arbitrary values) while most are \emph{fixed} (parameters
must follow well-defined closed-form expressions).
Next, networks can be \emph{group hierarchical} (routers form \emph{groups}
connected with the same pattern of intra-group \emph{local} cables and then
groups are connected with \emph{global} inter-group links),
\emph{semi-hierarchical} (there is some structure but no such groups), or
\emph{flat} (no distinctive hierarchical structure at all).
Finally, topologies can be \emph{random} (based on randomized constructions) or
\emph{deterministic}.

The motivation for picking such networks is as follows.
{Slim Fly} (SF)~\cite{besta2014slim} is a state-of-the-art cost-effective
topology that 
optimizes its structure towards the Moore Bound~\cite{hoffman2003moore}.
It represents the recent family of diameter-2 networks.
{HyperX} (Hamming graph) (HX)~\cite{ahn2009hyperx} generalizes, among others,
{hypercubes} (HCs)~\cite{bondy1976graph} and {Flattened Butterflies}
(FBF)~\cite{kim2007flattened}.
We also consider {Dragonfly} (DF)~\cite{kim2008technology}, an established
hierarchical network.
%
%
{Jellyfish} (JF)~\cite{singla2012jellyfish} is a random regular graph with good
expansion properties~\cite{bondy1976graph}. 
{Xpander} (XP)~\cite{valadarsky2015} resembles JF but has a deterministic
construction variant. 
{Fat tree} (FT)~\cite{leiserson1996cm5}, a widely used interconnect,
is similar to the {Clos network}~\cite{clos1953study} with
disjoint inputs and outputs and unidirectional links. 
FT stands for designs that are widely used and feature excellent performance
properties such as full bisection bandwidth and non-blocking routing. 
%
%
We use three-stage FTs (FT3); fewer stages reduce scalability while more
stages lead to high diameters.
Finally, we also consider {fully-connected} graphs. They offer lower bounds on
various metrics such as minimal path length, and can be used for validation.

Now, each topology uses certain \emph{input parameters} that define the
structure of this topology. These parameters are as follows: $q$ (SF), $a,h$
(DF), $\ell$ (XP), and $L,S,K$ (HX).


\subsection{Slim Fly}
 
Slim Fly~\cite{besta2014slim} is a modern cost-effective
topology for large computing centers that uses mathematical optimization to
minimize diameter $D$ for a given radix $k$ while maximizing size $N$.  SF's
low diameter ($D = 2$) ensures the lowest latency for many traffic patterns and
it reduces the number of required network resources (packets traverse fewer
routers and cables), lowering cost, static, and dynamic power consumption. 
SF is based on graphs approaching the Moore Bound (MB): The upper bound on the
number of vertices in a graph with a given $D$ and $k'$. This ensures full
global bandwidth and high resilience to link failures due to good expansion
properties.
Next, SF is group hierarchical. A group is not necessarily complete but all the
groups are connected to one another (with the same number of global links) and
form a complete network of groups.
We select SF as it is a state-of-the-art design that
outperforms virtually all other targets in most metrics and represents
topologies with $D = 2$.

\noindent
\macb{Associated Parameters}
$N_r$ and $k'$ depend on a parameter $q$ that is a prime power with certain
properties (detailed in the original work~\cite{besta2014slim}). Some
flexibility is ensured by allowing changes to $p$ and with a large number of
suitable values of the parameter~$q$. We use the suggested value of $p = \left\lceil {k'}/{2}
\right\rceil$.

\subsection{Dragonfly}
 
Dragonfly~\cite{kim2008technology} is a group hierarchical network with $D
= 3$ and a layout that reduces the number of global wires.  Routers form
complete {groups}; groups are connected to one another to form a complete
network of groups with one link between any two groups.
DF comes with an intuitive design and represents deployed networks with $D =
3$.

\noindent
\macb{Associated Parameters}
Input is: the group size $a$, the number of channels from one
router to routers in other groups $h$, and concentration $p$. We use the
\emph{maximum capacity} DF (with the number of groups $g = ah+1$) that is
\emph{balanced}, i.e., the load on global links is balanced to avoid
bottlenecks ($a = 2p = 2h$).
In such a DF, a single parameter $p$ determines all others.

\subsection{Jellyfish}
 
Jellyfish~\cite{singla2012jellyfish} networks are random regular graphs
constructed by a simple greedy algorithm that adds randomly selected edges
until no additions as possible. The resulting construction has good expansion
properties~\cite{bondy1976graph}. Yet, all guarantees are probabilistic and
rare degenerate cases, although unlikely, do exist. Even if $D$ can be
arbitrarily high in degenerate cases, usually $D < 4$ with much lower $d$.
We select JF as it represents flexible topologies that use randomization and
offer very good performance properties.

\noindent
\macb{Associated Parameters}
JF is flexible. $N_r$ and $k'$ can be arbitrary; we use parameters matching
less flexible  topologies.
%
%
%
To compensate for the different amounts of hardware used in different
topologies, we include a Jellyfish network constructed from the same routers
for each topology; the performance differences observed between those networks
are due to the different hardware and need to be factored in when comparing
the deterministic topologies. 

\subsection{Xpander}

Xpander~\cite{valadarsky2015} networks resemble JF but have a
deterministic variant.  They are constructed by applying one or more so called
$\ell$-\emph{lifts} to a $k'$-clique  $G$.
The $\ell$-lift of $G$ consists of $\ell$ copies of $G$, where for each edge
$e$ in $G$, the copies of $e$ that connect vertices $s_1, \dots, s_\ell$ to
$t_1, \dots, t_\ell$, are replaced with a \emph{random matching} (can be
derandomized): $s_i$ is connected to $t_{\pi(i)}$ for a random
$\ell$-permutation $\pi$.
This construction yields a $k'$-regular graph with $N = \ell k'$ and good
expansion properties. The randomized $\ell$-lifts ensure good properties in the
expectation.
%
%
%
We select XP as it offers the advantages of JF in a deterministically
constructed topology.

\noindent
\macb{Associated Parameters}
We create XP with a single lift of arbitrary $\ell$. Such XP is
flexible although there are more constraints than in JF.  Thus, we cannot
create matching instances for each topology. We select $k' \in \{16, 32\}$ and
$\ell = k'$, which is comparable to diameter-2 topologies. We also consider
$\ell = 2$ with multiple lifts as this ensures good
properties~\cite{valadarsky2015}, but we do not notice any additional speedup.
We use $p = \frac{k'}2$, matching the diameter-2 topologies. 

\subsection{HyperX}
 
HyperX~\cite{ahn2009hyperx} is formed by arranging vertices in an
$L$-dimensional array and forming a clique along each 1-dimensional row.
Several topologies are special cases of HX, including complete graphs,
hypercubes (HCs)~\cite{bondy1976graph}, and Flattened Butterflies
(FBF)~\cite{kim2007flattened}. HX is a generic design that represents a wide
range of networks.

\noindent
\macb{Associated Parameters}
An HX is defined by a 4-tuple $(L, S, K, p)$. $L$ is the number of dimensions
and $D=L$, $S$ and $K$ are $L$-dimensional vectors (they respectively denote
the array size in each dimension and the relative capacity of links along each
dimension). Networks with uniform $K$ and $S$ (for all dimensions) are called
\emph{regular}. We only use regular $(L, S, 1, \cdot)$ networks with $L \in
\{2,3\}$.
HX with $L=2$ is about a factor of two away from the MB ($k' \approx 2
\sqrt{N_r}$) resulting in more edges than other topologies. Thus, we include
higher-diameter variants with $k'$ similar to that of other networks.
Now, for full bisection bandwidth (BB), one should set $p = \frac{k'}{2D}$.
Yet, since HX already has the highest $k'$ and $N_r$ (for a fixed $N$) among
the considered topologies, we use a higher $p = \frac{k'}D$ as with the other
topologies to reduce the amount of used hardware. As we do not consider
worst-case bisections, we still expect HX to perform well.

\subsection{Fat Tree}
 
%
Fat tree~\cite{leiserson1996cm5} is based on the Clos
network~\cite{clos1953study} with disjoint inputs and outputs and
unidirectional links. By ``folding'' inputs with outputs, a multistage fat tree
that connects any two ports with bidirectional links is constructed. We use
three-stage FTs with $D = 4$; fewer stages reduce scalability while more stages
lead to high $D$.
%
%
FT represents designs that are in widespread use and feature excellent
performance properties such as full BB and non-blocking routing.

\noindent
\macb{Associated Parameters}
A three-stage FT with full BB can be constructed from routers with uniform
radix $k$: It connects ${k^3}/4$ endpoints using five groups of ${k^2}/4$
routers. Two of these groups, ${k^2}/2$ routers, form an \emph{edge group}
with ${k}/2$ endpoints.  Another two groups form an \emph{aggregation layer}:
each of the edge groups forms a complete bipartite graph with one of the
aggregation groups using the remaining ${k}/2$ ports, which are called
\emph{upstream}. Finally, the remaining group is called the \emph{core}: each
of the two aggregation groups forms a fully connected bipartite graph with the
core, again using the remaining ${k}/2$ upstream ports. This also uses all $k$
ports of the core routers.
Now, for FT, it is not always possible to construct a matching JF as $N/N_r$
can be fractional. In this case, we select $p$ and $k'$ such that $k = p+k'$
and ${k'}/p \approx 4$, which potentially changes $N$. Note also that for FT,
$p$ is the number of endpoints per edge router, while in the other topologies,
all routers are edge routers.

\subsection{Fully-Connected Graphs}
 
We also consider fully-connected graphs. They represent interesting corner-cases, offer lower
bounds on various metrics such as minimal path length, and can be used for
validation. 

\noindent
\macb{Associated Parameters}
A clique is defined by a single parameter~$k'$, leading to $N_r = k'+1$. We use $p=k'$ with the
same rationale as for the HyperX topologies.

\section{Efficient Path Counting}
\label{sec:app-path-counting}

Some measures for path diversity are computationally hard to derive for
large graphs. Algorithms for all-pairs shortest paths analysis based on
adjacency matrices are well known, and we reintroduce one such method here
for the purpose of reproducibility. For
the disjoint-paths analysis however, all-pairs algorithms exist, but are not
commonly known. We introduce a method by Cheung~et.~al~\cite{cheung2013} and
\emph{we adapt for length-limited edge connectivity computation}.

\subsection{Matrix Multiplication for Path Counting}
\label{sec:app-mmm}

It is well known that for a graph represented as an adjacency matrix, matrix
multiplication (MM) can be used to obtain information about paths in that graph.
Variations of this include the Floyd-Warshall
algorithm~\cite{floyd1962algorithm} for transitive closure and all-pairs
shortest paths~\cite{seidel1995all}, which use different semirings to aggregate
the respective quantities.
To recapitulate how these algorithms work, consider standard
MM using $\cdot$ and $+$ operators on non-negative integers, which
computes the number of paths $n_i(s,t)$ between each pair of vertices.

\begin{theorem}
If $A$ is the adjacency matrix of a directed graph $G = (V, E)$, $A_{i,j} = 1$
iff $(i,j) \in E$ and $ A_{i,j} = 0$ iff $(i,j) \not\in E$, then each cell $i \in V,j \in V$ of $Q = A^l =
\underbrace{A\cdot \ldots \cdot A}_{l\ \text{times}}$ contains the number of
paths from $i$ to $j$ with exactly $l$ steps in $G$.
\end{theorem}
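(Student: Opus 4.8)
The plan is to proceed by induction on the number of steps $l$, exploiting the fact that matrix multiplication aggregates contributions over all choices of an intermediate vertex, which mirrors exactly how a walk decomposes by its penultimate vertex. First I would fix terminology: by an ``$l$-step path from $i$ to $j$'' I take a \emph{walk}, i.e.\ a sequence of vertices $(v_0, v_1, \dots, v_l)$ with $v_0 = i$, $v_l = j$, and $(v_{m-1}, v_m) \in E$ for every $m \in \{1, \dots, l\}$, where vertices and edges may repeat. I would remark explicitly that this is the notion for which the counting identity holds (it fails if one insists on vertex-simple paths), since the iterated-$\sigma$ ``path'' of the routing model is loop-free by contrast.

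For the base case $l = 1$ we have $A^1 = A$, and $A_{i,j}$ equals $1$ exactly when $(i,j) \in E$ and $0$ otherwise; this is precisely the number of $1$-step walks from $i$ to $j$ (the single edge when it exists, none otherwise). Hence the claim holds for $l = 1$.

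For the inductive step, assume that $(A^l)_{i,k}$ counts the $l$-step walks from $i$ to $k$ for all $i, k \in V$. Writing $A^{l+1} = A^l \cdot A$, the definition of the matrix product gives $(A^{l+1})_{i,j} = \sum_{k \in V} (A^l)_{i,k}\, A_{k,j}$. The key combinatorial observation is a bijection: every $(l+1)$-step walk from $i$ to $j$ arises uniquely as the concatenation of an $l$-step walk from $i$ to some vertex $k$ (its penultimate vertex $v_l$) with the final edge $(k,j)$. Therefore the number of such walks with penultimate vertex $k$ is the number of $l$-step walks from $i$ to $k$ times $A_{k,j}$, the latter factor being $1$ if $(k,j) \in E$ and $0$ otherwise. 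Summing over all candidate penultimate vertices $k$ and applying the induction hypothesis reproduces the displayed sum, which completes the induction.

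The argument is essentially routine; the one point I would state most carefully — and the only place where an error could creep in — is the unique-decomposition bijection. Its correctness hinges on counting walks rather than simple paths and on matching the factorization $A^{l+1} = A^l \cdot A$ (peeling off the \emph{last} edge, hence summing over the penultimate vertex) rather than $A \cdot A^l$ (peeling off the first edge). Either factorization yields a valid proof, but the intermediate-vertex summation must be phrased consistently with the one chosen, so I would make that correspondence explicit.
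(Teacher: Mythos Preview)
Your proof is correct and follows essentially the same inductive approach as the paper: both establish the base case $l=1$ from the definition of the adjacency matrix and then decompose longer walks via an intermediate vertex so that the matrix-product sum matches the walk count. The only cosmetic difference is that the paper factors $A^l = A^p \cdot A^q$ with arbitrary $p+q=l$ (strong induction), whereas you use the more standard $A^{l+1} = A^l \cdot A$ and peel off the last edge; your added remark that the count is over walks rather than simple paths is a useful clarification that the paper defers to a post-proof comment.
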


\begin{proof}
By induction on the path length $l$: For $l=1$, $A^l = A$ and the adjacency
matrix contains a $1$ in cell $i, j$ iff $(i,j) \in E$, else $0$. Since
length-1 paths consist of exactly one edge, this satisfies the theorem.
Now consider matrices $A^p$, $A^q$ for $p+q = l$ for which the theorem holds
since $p,q < l$. We now prove the theorem also holds for $A^l = A^p \cdot A^q$.
Matrix multiplication is defined as 

\begin{equation} (A^p \cdot A^q)_{i,j} =
\sum_k A^p_{i,k} \cdot A^q_{k,j}\,.
\end{equation}

\noindent
According to the theorem, $A^p_{i,k}$ is the number of length-$p$ paths from
$i$ to some vertex $k$, and $A^q_{k,j}$ is the number of length-$q$ paths from said
vertex $k$ to $j$. To reach $j$ from $i$ via $k$, we can choose any path from $i$
to $k$ and any from $k$ to $j$, giving $ A^p_{i,k} \cdot A^q_{k,j}$ options.
As we regard \emph{all} paths from $i$ to $j$, we consider
\emph{all} intermediate vertices $k$ and count the total number (sum) of paths.
This is exactly the count of length-$l$ paths demanded by the theorem, as 
each length-$l$ path can be uniquely split into a length-$p$ and a length-$q$
segment. 
\end{proof}

In the proof we ignored a few details caused by the adjacency matrix
representation: first, the adjacency matrix models a directed graph. We can
also use the representation for undirected graphs by making sure $A$ is
symmetrical (then also $A^l$ is symmetrical). Adjacency matrices
contain the entry $A_{i,j} = 0$ to indicate $(i,j) \notin E$ and
$A_{i,j} = 1$ for $(i,j) \in E$. By generalizing $A_{i,j}$ to be the number of
length-1 paths ($=$ number of edges) from $i$ to $j$ as in the theorem, we can
also represent multi-edges; the proof still holds.

Finally, the diagonal entries $A_{i,i}$ represent self-loops in the graph,
which need to be explicitly modeled. Note that also $i = j$ is allowed above
and the intermediate vertex $k$ can be equal to $i$ and/or $j$. Usually
self-loops should be avoided by setting $A_{i,i} = 0$. Then $A^l_{i,i}$ will be
the number of cycles of length $l$ passing through $i$, and the paths counted
in $A_{i,j}$ will include paths containing cycles. These cannot easily be
avoided in this scheme\footnote{Setting $A^l_{i,i} = 0$ before/after each step
does not prevent cycles, since a path from  $i$ to $k$ might pass $j$, causing
a cycle, and we cannot tell this is the case without actually recording the
path.}. For most measures, e.g., shortest paths or disjoint paths, this is not
a problem, since paths containing cycles will naturally never affect these
metrics.
%

On general graphs, the algorithms outlined here are not attractive since it
might take up to the maximum shortest path length $D$ iterations to reach a
fixed point, however since we are interested in low-diameter graphs, they are
practical and easier to reason about than the Floyd-Warshall aorithms.

\subsubsection{Matrix Multiplication for Routing Tables}

As another example, we will later use a variation of this algorithm to compute
next-hop tables that encode for each source $s$ and each destination $t$ which
out-edge of $s$ should be used to reach $t$. In this algorithm, the matrix
entries are sets of possible next hops. The initial adjacency matrix will
contain for each edge in $G$ a set with the out edge index of this edge,
otherwise empty sets. Instead of summing up path counts, we union the next-hop
sets, and instead of multiplying with zero or one for each additional step,
depending if there is an edge, we retain the set only if there is an edge for
the next step. Since this procedure is not associative, it cannot be used to
form longer paths from shorter segments, but it works as long as we always use
the original adjacency matrix on the right side of the multiplication. The
correctness proof is analogous to the path counting procedure.

\subsection{Counting Disjoint Paths}

The problem of counting all-pairs disjoint paths per pair is equivalent to the
all-pairs edge connectivity problem which is a special case of the all-pairs
max flow problem for uniform edge capacities. It can be solved using a spanning
tree (\emph{Gomory-Hu tree}~\cite{panigrahi2008gomory}) with minimum $s-t$-cut values for the respective
partitions on the edges. The minimum $s-t$ cut for each pair is then the
minimum edge weight on the path in this tree, which can be computed cheaply for
all pairs. The construction of the tree requires $\BigO(N_r)$ $s-t$-cuts, which
cost $\BigO(N_r^3)$ each (e.g., using the Push-Relabel scheme~\cite{cherkassky1997implementing}). 

Since we are more interested in the max flow values, rather than the min-cut
partitions, a simplified approach can be used: while the Gomory-Hu tree has max
flow values and min cut partitions equivalent to the original graph, a
\emph{equivalent flow tree}~\cite{gusfield1990} only preserves the max flow
values. While constructing it needs the same number of max-flow computations,
these can be performed on the original input graph rather than the contracted
graphs of Gomory-Hu, which makes the implementation much easier.

For length-restricted connectivity, common max-flow algorithms have to be
adapted to respect the path length constraint. The Gomory-Hu approach does not
work, since it is based on the principle that the distances in the original
graph do not need to be respected. We implemented an algorithm based on the
Ford-Fulkerson method~\cite{ford1956maximal}, using 
BFS~\cite{cormen2009introduction}, which is not suitable for an all-pairs
analysis, but can provide results for small sets of samples.

The spanning-tree based approaches only work for undirected graphs, and solve
the more general max-flow problem. There are also algorithms that only solve
the edge-connectivity problem, using completely different approaches. Cheung
et.~al~\cite{cheung2013} propose an algorithm based on linear algebra which
can compute all-pairs connectivity in $\BigO(|E|^\omega + |V|^2 k'^\omega)$;
$\omega \leq 3$ is the exponent for matrix-matrix multiplication. For our
case of $k' \approx \sqrt{N_r}$ and naive matrix inversion, this is
$\BigO(N_r^{4.5})$ with massive space use, but there are many options to use
sparse representations and iterative solvers, which might enable 
$\BigO(N_r^{3.5})$. Due to their construction, those algorithms also allow a
limitation of maximum path length (with a corresponding complexity reduction)
and the heavy computations are built on well-known primitives with low constant
overhead and good parallel scaling, compared to classical graph schemes.

\subsection{Deriving Edge Connectivity}

This scheme is based on the ideas of Cheung et.~al.~\cite{cheung2013}. First, we
adapt the algorithm for vertex connectivity, which allows lower space- and time
complexity than the original algorithm and simplifies its design.
The original edge-connectivity algorithm is obtained by applying it to a
transformed graph.\footnote{Vertex-connectivity, defined as the minimum size of
a cut set $c_{st} \subset V\setminus\{s,t\}$ of vertices that have to be removed to
make $s$ and $t$ disconnected, is not well defined for neighbors
in the graph. The edge-connectivity algorithm avoids this problem, but this
cannot be generalized for vertex-connectivity.
%
%
} 
We then introduce the path-length constraint by replacing the exact solution
obtained by matrix inversion with an approximated one based on iterations,
which correspond to incrementally adding steps.
The algorithm is randomized in the same way as the original is; we ignore
the probability analysis for now, as the randomization is only required to
avoid degenerate matrices in the process and allow the use of a finite domain.
The domain $\mathbb{F}$ is defined to be a finite field of sufficient size to
make the analysis work and allow a real-world implementation; we can assume
$\mathbb{F} = \mathbb{R^+}$.

First, we consider a \emph{connection matrix}, which is just the adjacency matrix with
random coefficients for the edges:

\begin{equation}
K_{i,j} = \begin{cases}
    x \in \mathbb{F}\ \text{u.a.r.} & \text{iff}\ (i,j) \in E \\
    0 & \text{else}\,.
\end{cases}
\end{equation}

In the edge-connectivity algorithm we use a much larger adjacency matrix of a
transformed graph here (empty rows and columns could be dropped, leaving an $|E|
\times |E|$ matrix, but our implementation does not do this since the empty
rows and columns are free in a sparse matrix representation):

\begin{equation}
K'_{(i,k),(k,j)} = \begin{cases}
    x \in \mathbb{F}\ \text{u.a.r.} & \text{iff}\ (i,k) \in E \wedge (k,j) \in E\\
    0 & \text{else}\,.
\end{cases}
\end{equation}

Now, we assign a vector $F_i \in \mathbb{F}^k$, where $k$ is the maximum vertex
degree, to each vertex $i$ and consider the system of equations defined by the
graph: the value of each vertex shall be the linear combination of its neighbors
weighted by the edge coefficients in $K$. To force a non-trivial solution, we
designate a source vertex $s$ and add pairwise orthogonal vectors to each of its
neighbors. For simplicity we use unit vectors in the respective columns of a
$k \times |V|$ matrix $P_s$ (same shape as $F$). So, we get the condition

\begin{equation}
\label{equ:def_f}
F = FK+P_s\,.
\end{equation}

This can be solved as
\begin{equation}
F = -P_s(\mathbb{I}-K)^{-1}\,.
\end{equation}

The work-intensive part is inverting $(\mathbb{I}-K)$, which can be done
explicitly and independently from $s$, to get a computationally inexpensive
all-pairs solution, or implicitly only for the vectors in $P_s$ for a
computationally inexpensive single-source solution.
To compute connectivity, we use the following theorem.
The scheme outlined in the following proof counts vertex-disjoint paths of any
length. 

\begin{theorem}
The size of the vertex cut set $c_{st}$ from $s$ to $t$ equals 
$\rank(FQ_t)$, where $F = -P_s(\mathbb{I}-K)^{-1}$ and $Q_t$ is a $|V| \times
k$ permutation matrix selecting $t$'s incoming neighbors. 
\end{theorem}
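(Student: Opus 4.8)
The plan is to recognize this as the algebraic (network-coding) characterization of connectivity due to Cheung et al., and to prove it in three stages: a generating-function reading of $F$, a two-sided combinatorial bound on the \emph{generic} rank, and a Schwartz--Zippel argument that the random coefficients realize that generic rank.

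First I would unfold the definition. Since $F(\mathbb{I}-K) = \pm P_s$ (the overall sign is irrelevant to rank), we have $F = \pm P_s(\mathbb{I}-K)^{-1} = \pm P_s\sum_{l\ge 0}K^l$ as a formal Neumann expansion, valid because $\mathbb{I}-K$ is invertible for generic coefficients. By the weighted version of the matrix-multiplication-for-paths theorem proved earlier, $\big[(\mathbb{I}-K)^{-1}\big]_{u,v}$ is the sum over all walks $u\to v$ of the product of the edge coefficients $K_{i,j}$ along the walk. Because $P_s$ injects pairwise-orthogonal unit vectors $e_1,\dots,e_{|N^+(s)|}$ on the out-neighbors of $s$, the column $F_{\cdot,v}$ (the ``flow vector'' at $v$) is the combination $\sum_i e_i\,g_i(v)$, where $g_i(v)$ is the walk-generating polynomial from the $i$-th neighbor of $s$ to $v$. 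The matrix $FQ_t$ simply assembles these flow vectors over the incoming neighbors of $t$; after clearing the common determinant denominator its entries are polynomials in the variables $\{K_{i,j}\}$.

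Next I would pin the rank down combinatorially, for \emph{arbitrary} coefficients, by two matching bounds. For the upper bound, let $C$ be any vertex cut separating $s$ from $t$. Deleting $C$ disconnects $s$ from $t$, so every walk from $s$ into the $t$-side must pass through $C$; propagating the recursion $F = FK + (\text{injection})$ vertex by vertex then shows that the flow vector at each vertex on the $t$-side lies in $\operatorname{span}\{F_{\cdot,v}:v\in C\}$. In particular the columns of $FQ_t$ do, giving $\rank(FQ_t)\le |C|$ for \emph{every} coefficient choice, hence $\rank(FQ_t)\le c_{st}$ for the minimum cut. For the lower bound, Menger's theorem supplies $c_{st}$ vertex-disjoint $s$--$t$ paths; I would choose coefficients that route the $i$-th signature $e_i$ unchanged along the $i$-th path (unit weights on those edges, zeros on the competing channels), so that $c_{st}$ distinct unit vectors arrive on distinct in-neighbors of $t$ and $\rank(FQ_t)\ge c_{st}$ for that assignment. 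Thus the maximum of $\rank(FQ_t)$ over all coefficient assignments equals $c_{st}$.

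Finally I would upgrade ``there exists a good assignment'' to ``a random assignment works.'' Having rank $\ge r$ is equivalent to some $r\times r$ minor being nonzero, and each such minor is a polynomial of degree $\BigO(|V|)$ in the entries $K_{i,j}$; the lower-bound construction shows the $c_{st}\times c_{st}$ minor is not identically zero. By the Schwartz--Zippel lemma, a uniformly random choice of coefficients over a finite field $\mathbb{F}$ keeps $\mathbb{I}-K$ invertible and makes this minor nonzero with probability at least $1-\BigO(|V|)/|\mathbb{F}|$, while the upper bound forbids any larger minor from surviving; taking $|\mathbb{F}|$ large (or $\mathbb{F}=\mathbb{R}$) yields $\rank(FQ_t)=c_{st}$ with high probability. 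The main obstacle is the combinatorial core, and specifically the cut upper bound: one must argue the span containment rigorously in the presence of walks with cycles (handled by the formal invertibility of $\mathbb{I}-K$ rather than by finite path enumeration), and must set up the disjoint-path routing so that the $k$-dimensional signature space and the selection matrix $Q_t$ line up exactly. The randomization step, by contrast, is routine once the generic rank has been identified.
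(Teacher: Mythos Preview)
Your argument is correct and follows the same two-sided combinatorial core as the paper: the cut-based upper bound on $\rank(FQ_t)$ (all flow vectors on the $t$-side lie in the span of those at the cut) and the Menger-based lower bound (each of the $c_{st}$ vertex-disjoint paths delivers an independent component to $t$'s in-neighbors). Your version is in fact more complete than the paper's brief sketch, which explicitly defers the randomization analysis (``we ignore the probability analysis for now'') and whose two inequality labels are transposed relative to the arguments given; you supply both the Neumann-series reading of $(\mathbb{I}-K)^{-1}$ and the Schwartz--Zippel step that the paper leaves to Cheung et~al.
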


\begin{proof}
First, $c_{st} \leq \rank(FQ_t)$, because all non-zero vectors were injected
around $s$ and all vectors propagated through the cut set of $c_{st}$ vertices to
$t$, so there cannot be more than $c_{st}$ linearly independent vectors near
$t$.
Second, $c_{st} \geq \rank(FQ_t)$, because there are $c_{st}$ vertex-disjoint
paths from $s$ to $t$. Each passes through one of the $c_{st}$ outgoing
neighbors of $s$, which has one of the linearly independent vectors of $P_s$
assigned (combined with potentially other components). As there is a path
from $s$ to $t$ trough this vertex, on each edge of this path the component of
$P_s$ will be propagated to the next vertex, multiplied by the respective
coefficient in $K$. So, at $t$ each of the paths will contribute one orthogonal
component.
\end{proof}


To count length-limited paths instead, we simply use an iterative approximation
of the fixed point instead of the explicit solution. Since we are only
interested in the ranks of sub-matrices, it is also not necessary to actually
find a precise solution; rather, following the argument of the proof above,
we want to follow the propagation of linearly independent components through
the network. The first approach is simply iterating Equation~\ref{equ:def_f}
from some initial guess. For this guess we use zero vectors, due to $P_s$ in
there we still get nontrivial solutions but we can be certain to not introduce
additional linearly dependent vectors:

\begin{equation}\begin{aligned}
F_0 &= \big(0\big)\quad (k \times |V|)\\
F_l &= F_{l-1}K + P_s\,.
\end{aligned}\end{equation}

This iteration still depends on a specific source vertex $s$. For an all-pairs
solution, we can iterate for all source vertices in parallel by using more
dimensions in the vectors; we set $k = |V|$. Now we can assign every vertex a
pairwise orthogonal start vector, e.g., by factoring out $P_s$ and selecting
rows by multiplying with $P_s$ in the end. The intermediate products are now
$|V| \times |V|$ matrices, and we add the identity matrix after each step.
Putting all together gives 

\begin{equation}
c_{st} = \rank(P_s\underbrace{(((K+\mathbb{I})\cdot K + \mathbb{I})\cdot\ldots)}_{\text{$l$ times, precomputed}}Q_t)\,.
\end{equation}

%

The total complexity includes the $\mathcal{O}\left(|V|^3 l\right)$ operations to precompute the
matrix for a maximum path length of $l$ and $\mathcal{O}\left(|V|^2 k^3\right)$ operations for
the $\rank$ operations for all vertex pairs in the end, which will be the leading
term for the $k = \mathcal{O}\left((\sqrt{|V|}\right)$ (diameter 2) undirected graphs considered
here, for a total of $\mathcal{O}\left(|V|^{3.5}\right)$.

For the edge connectivity version, we use the edge incidence connection matrix
$K'$, and select rows and columns based on edge incidence, instead of vertex
adjacency. Apart from that, the algorithm stays identical, but the measured cut
set will now be a cut set of edges, yielding edge connectivity values. However,
the algorithm is more expensive in terms of space use and running time:
$\mathcal{O}\left(|E|^3 l\right)$ to precompute the propagation matrix.

\section{Details of Layered Routing}
\label{sec:app_Layers}

We extend the description of FatPaths' layered routing.

\subsection{Deriving Forwarding Entries}
\label{sec:app_forwarding}

An example scheme for deriving and populating forwarding tables is illustrated
in Listing~\ref{lst:layersEntries}.

\begin{lstlisting}[aboveskip=0em,abovecaptionskip=0em,belowskip=0em,float=*,label=lst:layersEntries,caption=\textmd{
Populating forwarding tables in FatPaths.}]
//|\textbf{Input:}| $E'_i,\ i \in \{1, ..., n\}$: the specification of each layer. $E'_i$ are router-router links in each layer $i$,
//produced by FatPaths layer construction algorithms specified in Listing |\ref{lst:layers}| or Listing |\ref{lst:layersFp}|.
//|\textbf{Output:}| $F_{i,s}$: a forwarding table in router $s \in V$ within layer $i\ (1 \le i \le n)$.
//$F_{i,s}[t] \in \{1, ..., k'\}$ is a port number (in router $s \in V$) that leads to router $t \in V$ (within layer $i\ (1 \le i \le n)$).

//Compute shortest paths between routers in each layer $i$. First, initialize the function $\sigma_i$.
//$\sigma_i(s,t)$ is a port number (in router $s$) that leads to router $t$ (in layer $i$); $d$ is an auxiliary structure.
foreach $i \in \{1, ..., n\}$ do: //For each layer...
  foreach $(s,t) \in V \times V$ do: //For each router pair...
    if $s$ == $t$ then:
      $d_{st} = 0$ //The distance between $u$ and $v$ is zero, if $u == v$.
      $\sigma_i(s,t) = s$
    else if $(s,t) \in E'_i$ then:
      $d_{st} = 1$ //The distance between directly connected routers is 1.
      $\sigma_i(s,t) = t$
    else:
      $d_{st} = +\infty$ //The initial distance between non-adjacent routers is infinity.
      $\sigma_i(s,t) = null$

  //For each layer, derive the routing functions:
  foreach $z \in |V|$ do:
    foreach $s \in |V|$ do:
      foreach $t \in |V|$ do:
        if $d_{st} > d_{sz} + d_{zt}$ then:
          $d_{st} = d_{sz} + d_{zt}$
          $\sigma_i(s,t) = \sigma_i(s,z)$

  //Once $\sigma_i$ is computed, populate a forwarding table. First, initialize the forwarding table.
  foreach $s \in V$ do: //For every router $s$:
    $F_s = \{\}$ //Initialize the forwarding table $F_s$.

  //Build a forwarding table within layer $i$, using the previously devised structures:
  foreach $s \in V$ do:
    foreach $t \in V, t \neq s$ do:
      $F_s[t] = \sigma_i(s,t)$
\end{lstlisting}

\subsection{Details of SPAIN Comparison Baseline}
\label{sec:app_SPAIN}

In FatPaths, we include SPAIN as a comparison baseline when generating
FatPaths routing layers, see Listing~\ref{lst:layersSpain}.
The implementation of SPAIN consists of two main stages: pre-computation of
paths exploiting redundancy in a given network topology and mapping the path
sets to VLANs. The time complexity of path computation and path layout
algorithm is together $O(|V|^2(|V| + |E|))$.

The first stage of the algorithm is based upon a per-destination VLAN
computation algorithm. At first, for every destination node, all paths
connecting it with other nodes are computed. A good path set should meet two
basic requirements -- it should exploit the available topological redundancy
and simultaneously contain as few paths as possible. The algorithm for path
computation greedily searches for $k \in \mathbb{N}$ shortest paths between
pairs of nodes, giving the preference to link-disjoint paths over alternative
paths, which could be possibly shorter but share links with at least one
already selected path. However, even though link-disjoint paths have the
advantage as a single link failure does not take down multiple paths, in some
circumstances it may be beneficial to consider a shorter path which is not
necessarily fully disjoint path but shorter~\cite{mudigonda2010spain}. The
paths are computed to be pairwise disjoint only for a certain destination node.

The next step in the per-destination VLAN computation algorithm uses graph
coloring techniques to assign the least possible number of VLANs to path sets
for a destination node. Afterwards, in order to minimize the number of VLANs,
they are being merged together using a greedy algorithm, as any two generated
subgraphs for different destinations can be combined into a single graph, if
and only if the resulting graph will not contain any loops. The input of the
algorithm is the set containing all computed subgraphs. For each graph, the
algorithm tries to merge it with as many other graphs as possible (checking
them in a randomized order), using a breadth-first search on every combined
graph to check whether it is acyclic or not. Unfortunately, the current design
of SPAIN lacks the flexibility to define an arbitrary number of layers.
Moreover, the greedy merging algorithm may result in creating layers, that will
be unbalanced in terms of size (number of edges per each graph).

\begin{lstlisting}[aboveskip=0em,abovecaptionskip=0em,belowskip=0em,float=*,label=lst:layersSpain,caption=\textmd{
Constructing routing layers in FatPaths when incorporating the SPAIN baseline.}]
$L$ = $\{E\}$ //Init a set of layers $L$, start with $E$.

foreach $u \in V $ do: //Compute VLANs per destination $u$.
  $P = \emptyset$ //Init a set of paths leading to $u$.

  foreach $v \in V$, $v \neq u$ do: //Compute $k$ paths from each vertex $v$ to $u$.
    $P_v = \emptyset$ //Init a set of paths from $v$ to $u$.
    $\forall e \in E$: $w(e) = 0$ //Init a matrix of edge weights.
 
    while $|P_v| < k$ do: //Add $k$ shortest paths to the set $P_v$.
      $p$ = shortest($E$, $u$, $v$, $w$) //Shortest undirected path from $v$ to $u$.
      if $p \in P_v$ then break //End the loop if no other path can be found.
      $P_v = P_v \cup \{p\}$ //Add the found path to the path set.
      foreach $edge \in path$ do: $w(e) += |E|$
    $P = P \cup P_v$

  $V_{color} = {v_1, \dots, v_{|P|}}$
  $E_{color} \neq \emptyset$
  foreach $p_i, p_j \in P$ do:
    if not vlan-compatible($p_i$, $p_j$) then: //the vlan-compatible function is defined below.
      $E_{color} = E_{color} \cup \{v_i, v_j\}$

  //Assign each vertex to a color - the assignment is given by mapping function:
  $(\#colors, mapping)$ = greedy-vertex-coloring($V_{color}, E_{color}$)
  for $k = 1, 2, \dots \#colors$ do:
    $E' = \{ e \in p_i : p_i \in P, mapping(v_i) = k \}$
    $L = L \cup E'$

//Merge greedily the graphs in $L$ while processing them in random order.
foreach $E_i, E_j \in L$, $E_i \neq E_j$:
  if acyclic($E_i \cup E_j$) then:
    $L = L \cup \{ E_i \cup E_j \} - {E_i} - {E_j}$

//The following function verifies if two paths can be merged within one graph,
//such that one can create a VLAN out of it (i.e., there are no cycles in the graph).
The details are in the SPAIN paper |\cite{mudigonda2010spain}|.
vlan-compatible($p_i= (u_1, u_2, \dots, u_l)$, $p_j= (v_1, v_2, \dots, v_m)$)
  foreach $u_i \in p_i$ do:
    if $u_i = v_j \in p_j$ then:
      if $u_{i+1} \neq v_{j+1}$ then:
        return false
  return true
\end{lstlisting}

\subsection{Details of PAST Comparison Baseline}
\label{sec:app_PAST}

We consider PAST~\cite{stephens2012past} for completeness as it does not
support (by default) multi-pathing between endpoint pairs for more performance.
Specifically, PAST uses per-address spanning trees to forward the traffic to
any endpoint in the network. Any given switch uses only one path for
forwarding the traffic to the given endpoint. It provides various approaches to
define the spanning trees, aiming at distributing the trees uniformly across
available physical links (i.e., using breadth-first search with random
tie-breaking for paths finding or implementing a non-minimal variant, inspired
by the Valiant load balancing).
A simple specification of generating routing layers in FatPaths when using
PAST can be found in Listing~\ref{lst:layersPast}.

Specifically, in contrast to SPAIN, which searches for link-disjoint paths to a
given destination node, PAST creates spanning trees for every address in the
network using one of three possible approaches: standard breadth-first search
algorithm, with the destination node being the root of the tree, breadth-first
search with random tie-breaking and breadth-first search, which weights its
random selection by considering how many endpoints each switch has as children,
summed across all trees built so far. The complexity of the bread-first search,
which affects directly the complexity of PAST layers creation algorithm, may be
expressed as $O (|V| + |E|)$.

Additionally a non-minimal, inspired by Valiant load balancing, variant of PAST
was designed. In this approach, before computing the spanning tree, the
algorithm selects randomly a switch, which plays the role of the root of
spanning tree, instead of the destination endpoint.

\begin{lstlisting}[aboveskip=0em,abovecaptionskip=0em,belowskip=0em,float=h,label=lst:layersPast,caption=\textmd{
Constructing routing layers in FatPaths when incorporating the PAST baseline (a non-minimal variant).}]
$L$ = $\{\}$ //Init a set of layers $L$.

//Create a spanning tree per each
//destination address:
foreach host $h$ do: 
  //Select a random intermediate switch.
  $s = \text{random}(V)$
  //Create spanning tree rooted at $s$ using BFS
  $E' = BFS(s, E)$
  $L = L \cup \{E'\}$
\end{lstlisting}

\subsection{Details of $k$-Shortest Paths Comparison Baseline}
\label{sec:app_k}

$k$-shortest paths~\cite{singla2012jellyfish} spreads traffic over multiple
shortest paths (if available) between endpoints. For the purpose of evaluation
we used Yen's algorithm with Dijkstra's algorithm to derive the $k$ shortest
loop-free paths. As the complexity of Dijkstra's algorithm may be expressed as
$O(|E|+|V|\log |V|)$, the complexity of computing $k$ paths between a pair of
endpoints using Yen's algorithm is equal to $O(k |V| (|E|+|V|\log |V|))$.

\section{Details of Evaluation Setup}
\label{sec:app-simulations-full-data}

We also provide details of the evaluation setup. 

\subsubsection{Topology Parameters}
\
We now extend the discussion on the {selection of key topology parameters}
(full details and formal descriptions are in the Appendix).
We select $N_r$ and $k'$ so that
considered topologies use similar amounts of hardware.
To analyze these amounts, we analyze the \emph{edge density}: a ratio between
the number of all the cables $\frac12 N_r k' + N_r p$ and endpoints $N = N_r
p$.
It turns out to be (asymptotically) constant for all topologies (the left plot
in Figure~\ref{fig:density}) and \emph{related to $D$}.
Next, higher-diameter networks such as DF require more cables. As explained
in past work~\cite{besta2014slim}: Packets traverse more cables on the way to their
destination.
We also illustrate $k$ as a function of $N$ (the right plot in
Figure~\ref{fig:density}). An interesting outlier is FT. It scales with $k$
similarly to networks with $D=2$, but with a much lower constant factor,
at the cost of a higher $D$ and thus more routers and cables. This implies that
FT is most attractive for small networks using routers with constrained $k$.
We can also observe the unique SF properties: For a fixed (low) number of
cables, the required $k$ is lower by a constant factor (than in, e.g., HX),
resulting in better $N$ scaling.

\begin{figure}[h]
\centering
\includegraphics[width=\columnwidth]{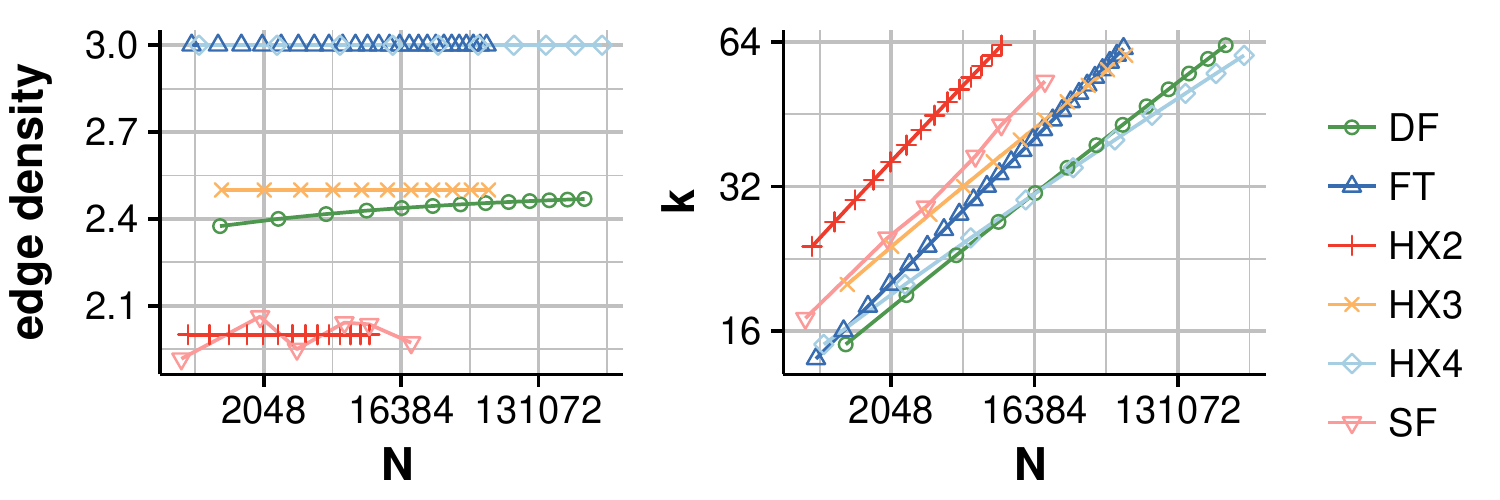}%
\vspaceSQ{-0.75em}
\caption{\textmd{Edge density
$(\allowbreak\text{number of edges})/(\allowbreak\text{number of endpoints}) =
(\frac{1}{2}N_r k' + N_r p)/N$ and radix~$k$ as a function of network size~$N$.
All cables, including endpoint links, are considered. This allows for a fair
comparison with fat trees. We use radix values $k \le 64$. }
%
%
}
\label{fig:density}
\end{figure}

\subsubsection{Baselines for Performance Validation}

Our evaluation is influenced by a plethora of parameters and effects, many of
which are not necessarily related to the core paper domain. Some of them may be
related to the incorporated protocols (e.g., TCP), others to the used traffic
patterns. Thus, we also establish baseline comparison targets and we fix
various parameters to ensure fair comparisons.
To characterize \textbf{TCP effects}, \textbf{one baseline is a star (ST)
topology} that contains a single crossbar switch and attached endpoints. It
should not exhibit any behavior that depends on the topology structure as it
does not contain any inter-switch links. We use the same flow distribution and
traffic pattern as in the large-scale simulation, as well as the same transport
protocols. This serves as an upper bound on performance. Compared to
measurements, we observe the lowest realistic latency and the maximum
achievable link throughput, as well as flow control effects that we did not
explicitly model, such as TCP slow start. There is no additional congestion
compared to measured data since we use randomized workloads.
Second, as a lower bound on \textbf{routing performance}, we show results for
flow-based \textbf{ECMP} as an example of a non-adaptive routing scheme, and
\textbf{LetFlow} as an example of an adaptive routing scheme. We also include
results of unmodified \textbf{NDP} (with oblivious load balancing) on FTs.

\subsection{Behavior of Flows in OMNet++}
\label{sec:app-flow-omnet}

First, we use
Figure~\ref{fig:app-star_lambda} to illustrate the behavior of long flows
(2MB) in the pFabric web search distribution in response to the flow arrival
rate $\lambda$ (flows per endpoint per second) on a 60-endpoint crossbar (tail
limited to 90\% due to the low sample size). The left plot shows how the
per-flow throughput decreases beyond $\lambda=250$, which is a sign of network
saturation; the right figure shows three samples of completion time
distributions for low, moderate, and high loads.

\begin{figure*}[t!]
\centering
\includegraphics[width=0.45\textwidth]{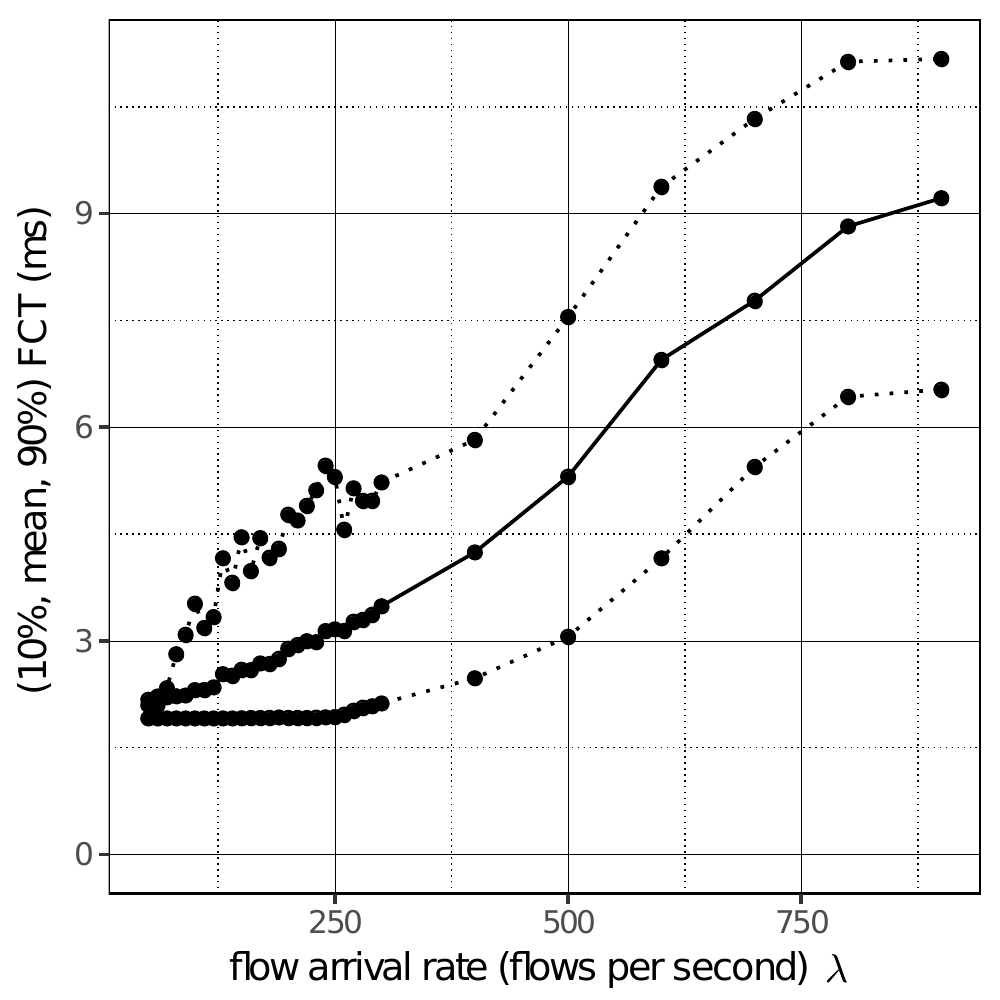}
\includegraphics[width=0.45\textwidth]{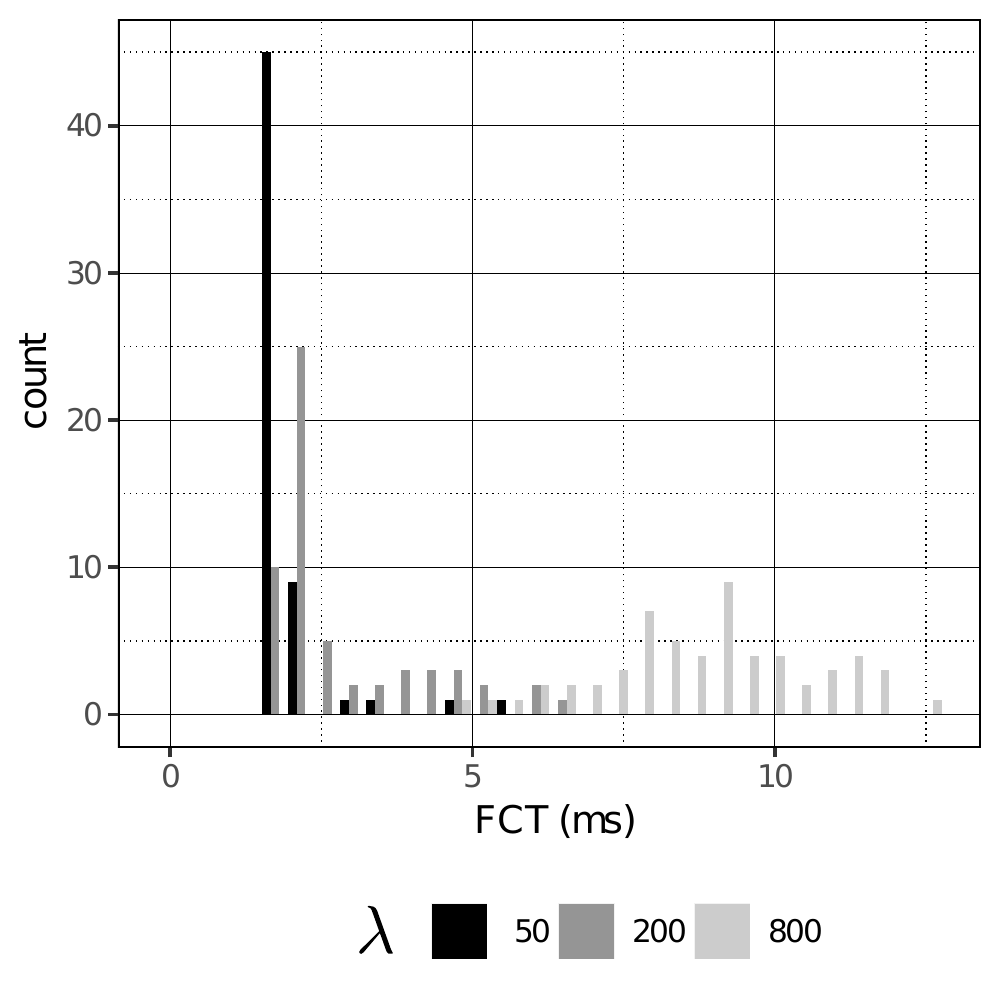}
\vspaceSQ{-1em}
\caption{\textbf{Behavior of long flows (2MB) in the pFabric web search distribution}
in response to the flow arrival rate $\lambda$ (flows per endpoint per second)
on a 60-endpoint crossbar (tail limited to 90\% due to the low sample size).}
\label{fig:app-star_lambda}
\end{figure*}

\subsection{Behavior of Flows in htsim}
\label{sec:app-flow-htsim}

In htsim simulations, on the star network as well as the
baseline $2x$-oversubscribed fat tree, we observe better performance compared
to the OMNet++ TCP simulations. This leads to a lower network load, which would
be misleading in topology comparisons. Therefore, we use $\lambda = 300$, where
the system starts to show congestion at the network level. At lower $\lambda$,
we only observe endpoint congestion (crossbar (Star) results equal fat tree
results), while at higher $\lambda$, the FCTs increase beyond the $2\times$ expected
from the oversubscription: the lower service rate leads to more concurrent
flows, decreasing the throughput per flow (see
Figure~\ref{fig:app-eta_lambda}).

\begin{figure*}[t]
\centering
\includegraphics[width=\textwidth]{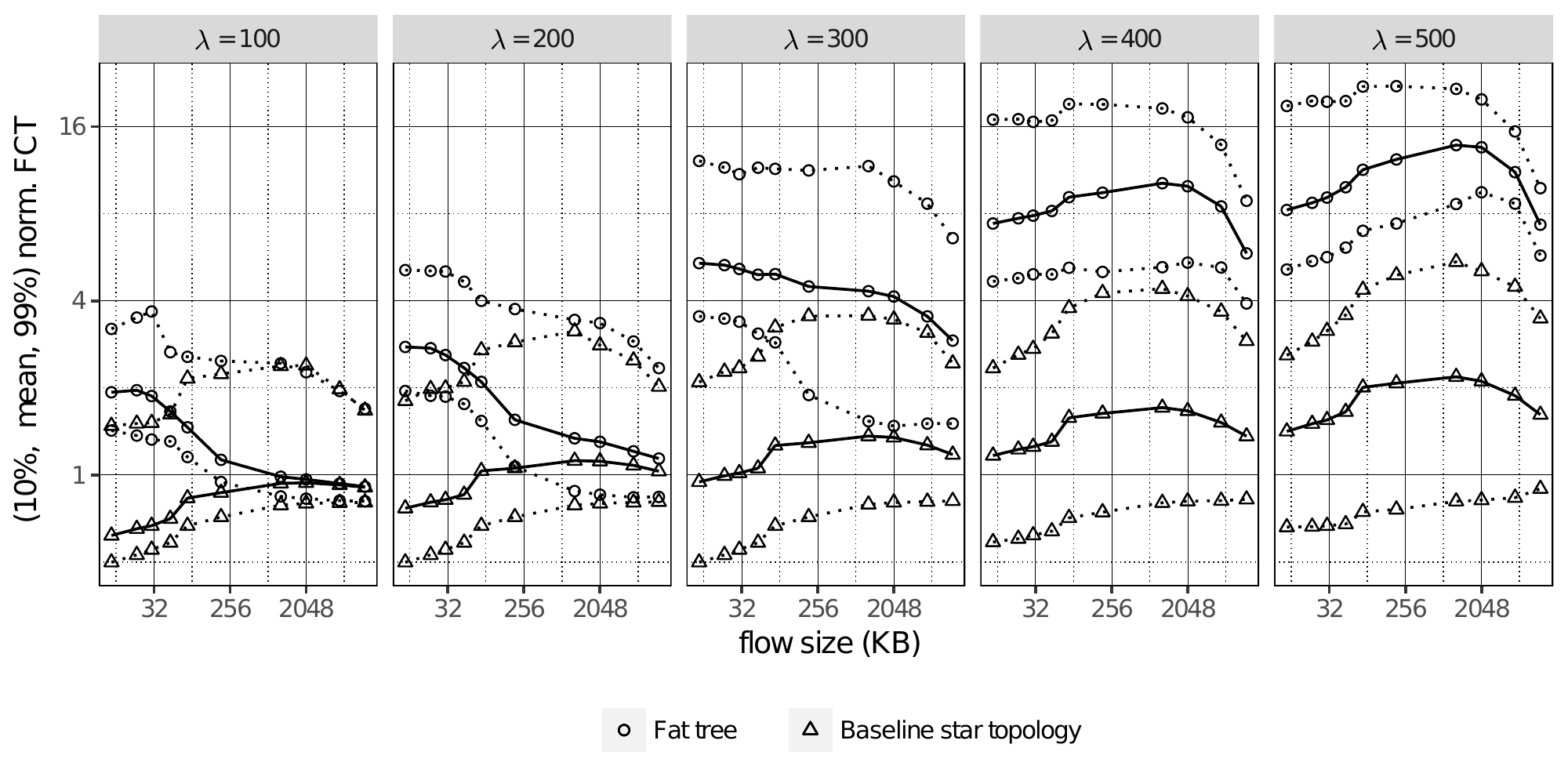}
\vspaceSQ{-3em}
\caption{\textbf{Influence of $\lambda$ on the baseline NDP simulations}. The FCT are
normalized with $\mu$=1GB/s and $\delta$=50us. At $\lambda = 100, 200$ the
oversubscription is not noticeable (similar long-flow behavior for crossbar
and fat tree), which indicates that the total network load is still low. The
difference in short-flow FCT is due to the drastically different network
diameters.} \label{fig:app-eta_lambda}
\end{figure*}


\subsection{Selection of TCP Parameters}

TCP retransmissions on packet loss are controlled by two separate timeouts: one
for the initial SYN and SYNACK handshake packets, and one that is used during
the flow. Both are guarded by an upper and lower limit. For the handshake
timeout, the lower limit is used in the beginning, increasing it up to the
upper limit on retries. For the normal limit, it is adapted in response to the
measured RTT but limited by the bounds, initially starting from a high value.

Since we usually do not see lost SYN packets, we did not optimize the handshake
timeouts. Most of the time, they simply have no effect at all. We did optimize
the normal retransmission timeouts, and observed that limiting the lower bound
can decrease performance at high load, while the upper bound does not have much
impact (again, this is because it is unlikely that a packet is lost before an
RTT estimate is produced, so this parameter is not used at all). The value of
200$\mu$s for the RTO lower bound is fairly high and can lead to performance
degradation, but it also models a limited timer granularity on the endpoints,
which makes low timeouts unrealistic. In the usual workload model considered in
this work, packet loss rates are low enough that the RTO does not have any
measurable impact, as long as the timeouts are not very high (with the INET
default value of 1s, a single flow experiencing a RTO can influence the mean
significantly).
The TCP retransmission parameters become more relevant if very sparse, and
therefore incomplete, layers are used, where packets are lost not only due to
congestion but also due to being non-routable. However, in this case we use
feedback via ICMP to trigger an immediate retransmission on a different layer,
therefore the RTO limits also have no siificant impact in this scenario.

\fi


\end{document}